 \newtheorem{theorem}{Theorem}[section]
 \newtheorem{corollary}[theorem]{Corollary}
 \newtheorem{lemma}[theorem]{Lemma}
 \newtheorem{proposition}[theorem]{Proposition}
 \theoremstyle{definition}
 \newtheorem{definition}[theorem]{Definition}
 \theoremstyle{remark}
 \newtheorem{remark}[theorem]{Remark}
 \newtheorem*{example}{Example}
 \newtheorem*{examples}{Examples}
\numberwithin{equation}{section}
\def\sympow{{\setbox0\hbox{$\bigcirc$}\setbox1\hbox to\wd0{\hss$s$\hss}%
\wd1 0pt\box1\box0}}
\begin{document}

\title[Galoisian Approach to integrability of
Schr\"odinger Equation]{Galoisian Approach to integrability of
Schr\"odinger Equation}

\author[P. Acosta-Hum\'anez]{Primitivo B. Acosta-Hum\'anez}
\address[P. Acosta-Hum\'anez]{Instituto de Matem\'aticas y sus Aplicaciones (IMA), Universidad Sergio Arboleda, Bogot\'a \& Santa Marta, Colombia}
\email{primi@ima.usergioarboleda.edu.co}

\author[J. Morales-Ruiz]{Juan J. Morales-Ruiz}
\address[J. Morales-Ruiz]{Escuela de Caminos, Canales y Puertos, Universidad Polit\'ecnica de Madrid, Madrid, Spain}
\email{juan.morales-ruiz@upm.es}

\author[J.-A. Weil]{Jacques-Arthur Weil}
\address[J.-A. Weil]{Institute XLIM (UMR CNRS n¡6172), DMI, Universit\'e de Limoges, Limoges, France}
\email{jacques-arthur.weil@unilim.fr}

\subjclass{Primary 12H05, 81Q60; Secondary 65L80, 81T60}

\dedicatory{Dedicated to Jerry Kovacic, in memoriam.}

\keywords{Algebraic Spectrum, Darboux Transformations,
Differential Galois Theory, Exactly solvable potentials,
Quasi-Exactly solvable potentials, Schr\"odinger Equation, Shape
Invariant Potentials, Supersymmetric Quantum Mechanics}

\begin{abstract}
In this paper, we examine the non-relativistic stationary Schr\"odinger equation from a differential Galois-theoretic perspective. The main algorithmic tools are pullbacks of second order ordinary linear differential operators, so as to achieve rational function coefficients (``algebrization''), and Kovacic's algorithm for solving the resulting equations.
In particular, we use
this Galoisian approach to analyze Darboux transformations, Crum iterations
and supersymmetric quantum mechanics. We obtain the
ground states, eigenvalues, eigenfunctions, eigenstates and differential
Galois groups of a large class of Schr\"odinger equations, e.g.
those with exactly solvable and shape invariant potentials
(the terms are defined within). Finally, we introduce a method
for determining when exact solvability is possible.
\end{abstract}

\maketitle \tableofcontents
\section{Introduction} 
In classical mechanics, the
Hamiltonian corresponding to the energy (kinetic plus potential)
is given by
$$H={\|\overrightarrow{p}\|^2\over 2m}+U(\overrightarrow{x}),\quad \overrightarrow{p}=(p_1,\ldots,p_n),\quad \overrightarrow{x}=(x_1,\ldots,x_n),$$
and the corresponding (Hamiltonian) equations are 
	$\dot{\overrightarrow{x}}=\frac{\partial H}{\partial \overrightarrow{p}}$ and  
	$\dot{\overrightarrow{p}}=-\frac{\partial H}{\partial \overrightarrow{x}}$.
In quantum mechanics, the momentum $\overrightarrow{p}$ is
replaced by $\overrightarrow{p}=-\imath\hbar\nabla$, the Hamiltonian
operator by the Schr\"odinger (non-relativistic, stationary)
operator 
$$H=-{\hbar^2\over 2m}\nabla^2+U(\overrightarrow{x})$$
and Hamilton's equations by the Schr\"odinger equation $H\Psi=E\Psi$.
\\
The eigenfunction $\Psi$
is known as the {\it{wave function}}, the eigenvalue $E$ is the {\em{energy
level}}, $U(\overrightarrow{x})$ is the {\em{potential or
potential energy}}. Solutions $(\Psi,E)$ of the Schr\"odinger equation
are the {\em{states}} of the particle. 
In specific applications, the potentials will satisfy some conditions depending of
 physical restrictions  such as barrier, scattering, etc., see
 \cite{cokasu,gapa,lali,masa,sc}.
\\
The Schr\"odinger operator is known to be a \textit{self-adjoint operator}:
$H^\dagger=H$ when viewed within the context of a  suitable complex and
separable Hilbert space. As a consequence, the spectrum $\mathrm{Spec}(H)$ of $H$
must be real, bounded below, and be the disjoint union of the \textit{point
spectrum} $\mathrm{Spec}_p(H)$ and the continuous spectrum
$\mathrm{Spec}_c(H)$ (see for example \cite{bana,pru,te}).
Wave functions belonging to the continuous spectrum may arise from barrier potentials
and from periodic boundary conditions. The
\textit{transmission and reflection coefficients} are related with
the barrier potentials, \cite{gapa}.

In this paper, we
consider the one-dimensional Schr\"odinger equation with the normalization $\hbar=2m=1$, i.e.
\begin{displaymath}\label{hams}
H\Psi=E\Psi, \quad H=-\partial_z^2+V(z),
\end{displaymath}
 where we may have $z=x$ (cartesian coordinate) or $z=r$ (radial coordinate). 
 We let $\Psi_n$ denote the wave function corresponding to an energy level $E=E_n$.
 \\
 
 Our purpose is to investigate solutions and their classification from the viewpoint of differential Galois theory.
 We should point out that this viewpoint has already been explored in nearby theories, for example by
 Spiridonov in \cite{sp} 
For sake of completeness, we now recall a few definitions related to quantum mechanics, and several useful
results on the one-dimensional Schr\"odinger equation.

\textbf{Bound States}\label{defbs} A solution $\Psi_n$ is called
a \textit{bound state}
  when its norm is finite and the corresponding eigenvalue $E_{n}$ belongs to the point spectrum of $H$, i.e.
\begin{displaymath}\label{condqm}
E_n\in \mathrm{Spec}_p(H),\quad \int|\Psi_n(x)|^2dx<\infty, \quad
n\in\mathbb{Z}_+.
\end{displaymath}

\textbf{Ground State and Excited States.}  Wave functions $\Psi_0,\Psi_1,\ldots,\Psi_n,\ldots$
of the bound states are numbered so that the corresponding energies are ordered $E_0<E_1<\cdots<E_n<\ldots$.
The state $(\Psi_0,E_{0})$ with minimal energy
 is called the \textit{ground state} and the remaining states $\Psi_1,\ldots,\Psi_n,\ldots$ are
 called the \textit{excited states}.\\

\textbf{Scattering States.}  A solution $\Psi$ corresponding to
the level energy $E$ is called a \textit{scattering state}
  when $E$ belongs to the continuous spectrum of $H$ and its norm is  infinite.
\medskip

To model a particle moving in one dimension, we use the classical one dimensional Schr\"odinger equation with
 Cartesian coordinate $x$. 
 We could also consider motion in three dimension by reducing the associated Schr\"odinger equation to
 the one-dimensional 
\textit{radial equation} (see \cite{gapa,lali, sc}).\\

 \label{introd}
\textbf{Darboux Transformation}. The Darboux transformation is a process which, starting from a second order linear differential equation such as the Schr\"odinger equation, produces a family of second order differential equations with a similar shape. This vague formulation is made clear in the following theorem, taken from from \cite{da1},
which is slightly more general than what is known today as the Darboux transformation.

\emph{For any value of a constant $m$, let $y=y_{m}$ denote a solution of the equation
\begin{displaymath}\label{orda7}\partial_x^2y+ P\partial_xy+(Q -mR)y = 0.\end{displaymath} 
Suppose, in addition, that we are given a solution $\theta$ of the equation
\begin{displaymath}\partial_x^2\theta+P\partial_x\theta+Q\theta=0.\end{displaymath} 
Then the function
\begin{displaymath}\label{orda9} u={\partial_xy-{\partial_x\theta\over \theta}y\over
\sqrt{R}},\end{displaymath} is a solution of the equation
\begin{displaymath}\label{orda10}
\partial_x^2u+P\partial_xu+
\left(\theta\sqrt{R}\partial_x\left({P\over
\theta\sqrt{R}}\right)-\theta\sqrt{R}\partial_x^2\left({1\over
\theta\sqrt{R}}\right)
-mR\right)u=0,\end{displaymath} when $m\neq
0$.}\\

This \textit{general Darboux transformation} $(y,\theta)\mapsto u$ preserves the shape of the differential equation and maps $Q$ to $
\theta\sqrt{R}\partial_x\left({P\over
\theta\sqrt{R}}\right)-\theta\sqrt{R}\partial_x^2\left({1\over\theta\sqrt{R}}\right)$.
In  \cite{da1,da2},  Darboux considered the
particular case $R=1$ and $P=0$, which is known today as the
\textit{Darboux transformation}. It is the following corollary of the above:
\\

\emph{For any value of a constant $m$, let $y=y_{m}$ denote a solution of the equation
\begin{displaymath}\label{orda11}
\partial_x^2y= (f(x) + m)y. \end{displaymath}
Suppose, in addition, that we are given a solution $\theta$ of the equation
 $\partial_x^2\theta= (f(x)+m_1)\theta$ for some fixed value $m_{1}$ (e.g $m_{1}=0$).
Then the function
$$u = \partial_x y-{\partial_x\theta\over\theta}y$$ is a solution of the equation
\begin{displaymath}\label{orda12} \partial_x^2u= \left(\theta\partial_x^2\left(1\over \theta\right)-m_1+m \right)u, \end{displaymath}
for $m\neq m_1$. Furthermore, $$\theta\partial_x^2\left(1\over
\theta\right)-m_1=f(x)-2\partial_x\left({\partial_x\theta\over
\theta}\right)=2\left({\partial_x\theta\over
\theta}\right)^2-f(x)-2m_1.$$ }
Some examples can be found as exercises in the book of  Ince 
\cite[p. 132]{in}.
Darboux's transformation is related with with Delsarte's
transformation operators (named transmutations in  \cite{de}), which
today are called \textit{intertwiners} or \textit{intertwining operators}.
A good short survey about Darboux transformations can be found in \cite{ro}.
\medskip

Crum, inspired by the works of Liouville \cite{lio,lio2} obtained
an iterative generalization of Darboux's result appropriate for Sturm-Liouville systems:
he proved that the Sturm-Liouville conditions are preserved under Darboux
transformations, see \cite{cr}. Crum's result is presented in
the following theorem. Recall that the Wronskian determinant $Wr$ of
$k$ functions $f_1, f_2, \ldots, f_k$ is 
$$Wr(f_1,\ldots,f_k)=\det A,\quad A_{ij}=\partial_x^{i-1}f_j,\quad i,j=1,2,\ldots, k.$$

\textbf{Crum's Theorem.} Let  $\Psi_1,  \Psi_2, \ldots \Psi_n$ be
solutions of the Schr\"odinger equation $H\Psi = E \Psi$ for energy levels $E = E_1, E_2, \ldots,E_n$ respectively.
Letting $$\Psi[n] = {Wr(\Psi_1,\ldots, \Psi_n,\Psi)\over
Wr(\Psi_1, \ldots, \Psi_n)},\quad\hbox{\rm and }\; V[n] = V - 2\partial_x^2 \ln
Wr(\Psi_1, \ldots, \Psi_n),$$
we obtain the Schr\"odinger equation
$$H^{[n]}\Psi[n]=E\Psi[n],\quad\hbox{\rm where }\; H^{[n]}=-\partial_x^2 +
V[n] 
\quad\hbox{\rm for }\; E\neq E_i, 1\leq i\leq n$$ 

Crum's transformation coincides with Darboux's transformation when
$n=1$; as shown in  \cite{ornoro},  iterations of Darboux transformations coincide with the
Crum iteration. Both formalisms allow us to
obtain new families of Schr\"odinger equations while preserving the
spectrum and the Sturm-Liouville conditions, see
\cite{masa,ornoro}. We note that there are extensions of Crum's
iteration connecting the
Sturm-Liouville theory with orthogonal polynomial theory \cite{kr}.\\

\textbf{Supersymmetric quantum mechanical system.} 
Witten introduced,  in \cite[\S 6]{wi},  some models for which
\textit{dynamical breaking of supersymmetry} is possible.\\

A \textit{supersymmetric quantum mechanical system} is described by a Hamiltonian $\mathcal H$
together with several (self-adjoint) operators $Q_i$ which commute with the hamiltonian
($[Q_i,\mathcal H]=0$)  and satisfy the anticommutation relations 
\begin{displaymath}\label{wit2} \{Q_i,Q_j\}=\delta_{ij}\mathcal H,\quad\hbox{\rm where }\; \{Q_i,Q_j\}=Q_iQ_j+Q_jQ_i. \end{displaymath}

The simplest example of a supersymmetric quantum mechanical system
arises in the case $n=2$, which is considered in this paper.
The wave function of $\mathcal H\Phi=E\Phi$ is then a
two-component \textit{Pauli spinor},
$$\Phi(x)=\begin{pmatrix}\Psi_+(x)\\\Psi_-(x)\end{pmatrix}.$$

The \textit{supercharges} $Q_i$ are defined as
\begin{displaymath}\label{wit3}
Q_\pm=\frac{\sigma_1p\pm\sigma_2W(x)}2,\quad Q_+=Q_1,\, Q_-=
Q_2,\quad p = -i\partial_x,
\end{displaymath} where the \textit{superpotential} $W$ is an arbitrary
function of $x$ and $\sigma_i$ are the usual \textit{Pauli spin
matrices}. Using the expressions above we obtain $\mathcal H$:
\begin{displaymath}\label{wit4} \mathcal H=2Q_-^2=2Q_+^2=\frac{I_2p^2+I_2W^2(x)+\sigma_3\partial_xW(x)}2,\quad\hbox{\rm with}\; I_2=\begin{pmatrix} 1&0\\ 0&1
    \end{pmatrix}.
\end{displaymath}
The \textit{supersymmetric partner Hamiltonians} $H_\pm$ are given
by
$$H_\pm=-{1\over 2}\partial_x^2+V_\pm,\quad V_\pm=\left({W\over \sqrt{2}}\right)^2\pm{1\over \sqrt{2}}\partial_x\left({W\over \sqrt{2}}\right).$$

The potentials $V_\pm$ are called \textit{supersymmetric partner
potentials} and are linked with the superpotential $W$ through a
Riccati equation. Then $\mathcal H$ can be written as
$$\mathcal H=\begin{pmatrix} H_+&0\\ 0&H_- \end{pmatrix},$$ which
leads us to the Schr\"odinger equations $H_+\Psi_+=E\Psi_+$ and
$H_-\Psi_-=E\Psi_-$. To solve $\mathcal H\Phi=E\Phi$ is equivalent to solve simultaneously
$H_+\Psi_+=E\Psi_+$ and $H_-\Psi_-=E\Psi_-$.\\

In \cite[\S 6]{wi}, V.B. Matveev and M. Salle recast Darboux transformations in this context; 
they interpret the Darboux theorem as a Darboux covariance of a Sturm-Liouville problem and
prove the following result (see also \cite[\S 5-6]{ro}). \\
\emph{ The case $n=2$ in Supersymetric Quantum Mechanics is
equivalent to a single Darboux transformation. }\\

As a generalization of the method to solve the harmonic oscillator
\cite{ge,duka}, the ladder (raising and lowering) operators are
defined as
$$A^+=-\partial_x-{\partial_x\Psi_0\over \Psi_0},\quad
A=\partial_x-{\partial_x\Psi_0\over \Psi_0},$$ which are strongly related with the supercharges $Q_\pm$ in Witten's formalism.
We have,
$$A\Psi_0=0,\quad A^+A=H_-,\quad AA^+=H_+=-\partial_x^2+V_+(x),\quad \textrm{where}$$
$$V_+(x)=V_-(x)-2\partial_x\left({\partial_x\Psi_0\over\Psi_0}\right)=-V_-(x)+2\left({\partial_x\Psi_0\over\Psi_0}\right)^2.$$
The supersymmetric partner potentials $V_+$ and $V_-$ have the
same energy levels, except for $E_0^{(-)}=0$. In terms of the
superpotential $W(x)$, the operators $A$ and $A^+$ are given by
$$A^+=-\partial_x+W(x),\quad A=\partial_x+W(x).$$

Similarly, the supersymmetric partner potentials
$V_{\pm}(x)$ and the superpotential $W(x)$ satisfy:
$${V_+(x) + V_-(x)\over 2}=W^2(x),\quad [A,A^+]=2\partial_x W(x).$$

\textbf{Shape invariance.} 
Gendenshte\"{\i}n, in his remarkable paper \cite{ge}, introduced
the concept of \textit{shape invariance}, which is a property of families of potentials with respect to their
parameters. 
Given a family $a=(a_{n})_{n}$ of parameters,  the shape
invariance condition can be expressed as :
$$V_{n+1}(x; a_n) = V_n(x; a_{n+1}) + R(a_n),\quad V_-=V_0,\quad V_+=V_1,$$ where $R$ is a
remainder, which does not depends on $x$. When this relation holds, we say that
$V=V_-=V_0$ is a \textit{shape invariant potential} (with respect to the given family or transformation), see also \cite{ro,cokasu,duka}.\\

\textbf{Gendenshte\"{\i}n Theorem.}\emph{ Consider the
Schr\"odinger equation $H\Psi=E_n\Psi$, where $V=V_-=V_0$ is a
shape invariant potential. If we fix the first level of energy
$E_0 = 0$, then the excited spectrum and the wave functions are
given respectively by
\begin{displaymath}\label{dukage} E_n = \sum_{k=2}^{n+1}R(a_k),\quad \Psi_n^{(-)}(x;
a_1) = \prod_{k=1}^nA^+(x; a_k)\Psi^{(-)}_0(x;
a_{n+1}).\end{displaymath}}
\medskip

%

The structure of the rest of the paper is as follows. In section 2 we introduce the necessary concepts and results of the Differential Galois Theory of linear ordinary differential equations. In section 3 we introduce some  terminology about solvability  and we obtain some results about the invariance of the Galois group with respect to Darboux transformations. Section 4 is devoted to the study of the rational potentials and  section 5 to transcendental potentials by means of  the algebrization method.

\section{Differential Galois Theory}
In this section, we recall some differential Galois material to study the one-dimensional  Schr\"odinger equation,
$$\partial_x^2\Psi=(V-\lambda)\Psi $$
where $V$ is some function and $\lambda$ is a constant (the spectral parameter).
There are many references on differential Galois theory (or ``Picard-Vessiot'' theory), for example
	\cite{be,crhamo,mo,vasi}.

\subsection{Picard-Vessiot Galois theory}
We establish an algebraic model for functions and the corresponding Galois theory.
\\

\textbf{Differential Fields.} 
Let $K$ be a commutative field of characteristic zero. A {\em derivation} of $K$ is a map
$\partial_x : K\rightarrow K$
satisfying $\partial_x (a + b) =
\partial_x a + \partial_x b$ and $\partial_x(ab) = \partial_x a \cdot b + a
\cdot\partial_x b$ for all $a,b\in K$.  We then say that $(K,\partial_{x})$ (or just $K$, when there is no ambiguity) is
a {\it{differential field}} with the derivation $\partial_x$.\\
We assume that $K$ contains an element $x$ such that $\partial_{x}(x)=1$.
Let $\mathcal C$ denote
the field of constants of $K$: $$\mathcal C = \{c\in K |
\partial_x c = 0\}.$$ It is also of characteristic zero and will be assumed to be algebraically closed. \\
Throughout this paper, the \emph{coefficient field} for a differential equation 
(e.g the Schr\"odinger equation $\partial_x^2\Psi=(V-\lambda)\Psi$) is defined as the smallest differential field containing all the coefficients
of the equation. \\

We will mostly analyze  second order linear homogeneous differential equations,
i.e equations of the form
\begin{displaymath}\label{soldeq}
\mathcal L:= \partial^2_xy+a\partial_x y+by=0,\quad a,b\in K.
\end{displaymath}
so the rest of the theory will be explained in this context.
\\

\textbf{Picard-Vessiot Extension.} 
Let $L$ be a differential field containing $K$ (a differential extension of $K$).
We say that $L$ is a \emph{Picard-Vessiot} extension of $K$ for $\mathcal L$
if there exist two linearly independent $y_1, y_2\in L$ solutions of $\mathcal L$ such 
that $L= K\langle y_1, y_2 \rangle$ (i.e $L=K(y_1, y_2,\partial_xy_1, \partial_xy_2)$)
and $L$ and $K$ has the same field of constants $\mathcal{C}$.\\

In what follows, we choose a Picard-Vessiot extension and the term ``solution of $\mathcal L$'' will mean
``solution of $\mathcal L$ in $L$''. So any solution of $\mathcal L$ is a linear combination
(over $\mathcal{C}$) of $y_{1}$ and $y_{2}$.\\

\textbf{Differential Galois Groups} 
A $K$-automorphism $\sigma$ of the Picard-Vessiot extension $L$ is called a differential automorphism 
if it leaves $K$ fixed and commutes with the derivation.  This means that $\sigma(\partial_xa)=\partial_x(\sigma(a))$ for all $a\in L$ and
$\forall a\in K,$ $\sigma(a)=a$.
\\
The group of all differential automorphisms of $L$
over $K$ is called the {\it{differential Galois group}} of $L$
over $K$ and is denoted by ${\rm DGal}(L/K)$. \\

Given $\sigma \in \mathrm{DGal}(L/K)$, we see that $\{\sigma y_1, \sigma y_2\}$ are also solutions of $\mathcal L$. Hence there exists a matrix

$$A_\sigma=
\begin{pmatrix}
a & b\\
c & d
\end{pmatrix}
\in \mathrm{GL}(2,\mathbb{C}),$$ such that
$$\sigma(
\begin{pmatrix}
y_{1} &
y_{2}
\end{pmatrix})
=
\begin{pmatrix}
\sigma (y_{1}) 	&
\sigma (y_{2})
\end{pmatrix}
=\begin{pmatrix} y_{1}& y_{2}
\end{pmatrix}A_\sigma.$$ 

As $\sigma$ commutes with the derivation, this extends naturally to an action on a fundamental solution matrix of the companion first order system associated with 
$\mathcal{L}$.
$$\sigma(
\begin{pmatrix}
y_{1}&y_2\\
\partial_xy_1&\partial_xy_{2}
\end{pmatrix})
=
\begin{pmatrix}
\sigma (y_{1})&\sigma (y_2)\\
\sigma (\partial_xy_1)&\sigma (\partial_xy_{2})
\end{pmatrix}
=\begin{pmatrix} y_{1}& y_{2}\\\partial_xy_1&\partial_xy_2
\end{pmatrix}A_\sigma.$$

This defines a faithful representation $\mathrm{DGal}(L/K)\to
\mathrm{GL}(2,\mathbb{C})$ and it is possible to consider
$\mathrm{DGal}(L/K)$ as a subgroup of $\mathrm{GL}(2,\mathbb{C})$.
It depends on the choice of the fundamental system $\{y_1,y_2\}$,
but only up to conjugacy.\\

Recall that an algebraic group $G$ is an algebraic manifold
endowed with a group structure. 
Let $\mathrm{GL}(n,\mathbb{C})$ denote, as usual, the set of invertible $n\times n$  matrices 
with entries in $\mathbb{C}$ (and $\mathrm{SL}(n,\mathbb{C})$ be the set of matrices with determinant
equal to $1$).
A linear algebraic group will be a subgroup of  $\mathrm{GL}(n,\mathbb{C})$ equipped with a 
structure of algebraic group.
One of the fundamental results of the Picard-Vessiot theory is the
following theorem (see \cite{ka,kol}).\\

\emph{The differential Galois group $\mathrm{DGal}(L/K)$ is an
algebraic subgroup of $\mathrm{GL}(2,\mathbb{C})$.}\\
In fact, the differential Galois group measures the algebraic relations between the solutions (and their derivatives).
It is sometimes viewed as the object which should tell ``what algebra sees of the dynamics of the solutions''.\\

In an algebraic group $G$, the largest connected algebraic
subgroup of $G$ containing the identity, noted $G^{\circ}$, is a normal subgroup of finite index.
It is often called the connected component of the identity.
If $G=G^0$ then $G$ is a \textit{connected group}. \\
When $G^0$ satisfies some property, 	we say that $G$ virtually
satisfies this property. For example, virtually solvability of $G$
means solvability of $G^0$ and virtual
abelianity of $G$ means abelianity of $G^0$ (see \cite{we2}).\\

\textbf{Lie-Kolchin Theorem.} \emph{Let $G\subseteq
\mathrm{GL}(2,\mathbb{C})$ be a virtually solvable group. Then
$G^0$ is triangularizable, i.e it is conjugate to a subgroup of
upper triangular matrices.}\\

\subsection{Algebraic Subgroups of $SL(2,\mathbb{C})$}

We recall below some examples of subgroups of $\mathrm{SL}(2,\mathbb{C})$ for later use.\\

\textbf{Reducible subgroups} These are the groups which leave a non-trivial subspace of $V$ invariant.
They are classified in two categories.\\
\emph{Diagonal groups: }
the identity group: $\{e\}=\left\{\begin{pmatrix}1&0\\0&1\end{pmatrix}\right\}$, 
the $n-$roots: $\mathbb{G}^{[n]}=\left\{\begin{pmatrix}c&0\\0&c^{-1}\end{pmatrix},\quad c^n=1\right\}$,
the multiplicative group: $\mathbb{G}_m=\left\{\begin{pmatrix}c&0\\0&c^{-1}\end{pmatrix},\quad c\in\mathbb{C}^*\right\}$
\\
\emph{Triangular groups: }
the additive group: $\mathbb{G}_a=\left\{\begin{pmatrix}1&d\\0&1\end{pmatrix},\quad d\in\mathbb{C}\right\}$,
the $n-$quasi-roots: $\mathbb{G}^{\{n\}}=\left\{\begin{pmatrix}c&d\\0&c^{-1}\end{pmatrix},\quad c^n=1, \quad d\in\mathbb{C}\right\}$,
the Borel group: $\mathbb{B}=\mathbb{C}^*\ltimes\mathbb{C}=\left\{\begin{pmatrix}c&d\\0&c^{-1}\end{pmatrix},\quad c\in\mathbb{C}^*, \quad d\in\mathbb{C}\right\}$\\

\textbf{Irreducible subgroups}
The infinite dihedral group (also called meta-abelian group):\\
 $\mathbb{D}_{\infty}=\left\{\begin{pmatrix}c&0\\0&c^{-1}\end{pmatrix},\quad c\in\mathbb{C}^*\right\}\cup\left\{\begin{pmatrix}0&d\\-d^{-1}&0\end{pmatrix},\quad d\in\mathbb{C}^*\right\}$ and its finite subgroups $\mathbb{D}_{2n}$ (where $c$  and $d$ spans the $n$-th roots of unity).
 \\
There are also three other finite irreducible (primitive) groups: the tetrahedral group $A_4^{\mathrm{SL}_2}$ of order $24$, the octahedral group $S_4^{\mathrm{SL}_2}$ of order $48$, and the icosahedral group $A_5^{\mathrm{SL}_2}$ of order $120$.

\textbf{Integrability.} 
We say that the linear differential equation $\mathcal L$ is (Liouville) \textit{integrable} if
the Picard-Vessiot extension $L\supset K$ is obtained as a tower
of differential fields $K=L_0\subset L_1\subset\cdots\subset
L_m=L$ such that $L_i=L_{i-1}(\eta)$ for $i=1,\ldots,m$, where
either
\begin{enumerate}
\item $\eta$ is {\emph{algebraic}} over $L_{i-1}$, that is $\eta$ satisfies a
polynomial equation with coefficients in $L_{i-1}$.
\item $\eta$ is {\emph{primitive}} over $L_{i-1}$, that is $\partial_x\eta \in L_{i-1}$.
\item $\eta$ is {\emph{exponential}} over $L_{i-1}$, that is $\partial_x\eta /\eta \in L_{i-1}$.
\end{enumerate}
\medskip

We remark that the usual terminology in differential algebra for
integrable equations is that the corresponding Picard-Vessiot
extensions are called \textit{Liouvillian}. The following theorem is due to Kolchin.\\

\emph{The equation $\mathcal L$ is integrable if and only if
$\mathrm{DGal}(L/K)$ is virtually solvable.}\\


\subsection{Eigenrings}
Eigenrings are a tool, introduced by M.F. Singer, to study the decomposability of systems or operators. A general reference is 
\cite{ba} (and references therein) .
\\

\textbf{Eigenrings of Systems.}  Consider the system
$\partial_xX=-AX$, denoted by $[A]$, where $A\in\mathrm{GL}(2,K)$.
The eigenring  of the system $[A]$, denoted by $\mathcal E(A)$, is
the set of $2\times 2$ matrices $P$ in $K$ satisfying
\begin{displaymath} \partial_xP = PA - AP.
\end{displaymath} 

\textbf{Eigenrings of Operators.} Let $\mathfrak{L}$ be a second
order differential operator, i.e $\mathcal{L}:=\mathfrak{L}(y)=0$.
Denote $V(\mathfrak{L})$ as the solution space of $\mathcal{L}$.
The \textit{eigenring } of $\mathfrak{L}$, denoted by
$\mathcal{E}(\mathfrak{L})$, is the set of all operators
$\mathfrak R$ of order less than that of $\mathfrak{L}$ for which $\mathfrak R(V(\mathfrak L))$ is a subset
of $V(\mathfrak L)$, that is $\mathfrak{LR}=\mathfrak{SL}$, where
$\mathfrak S$ is also an operator. For more details see
\cite{si,ba,ho,ho1,ho2}.\\

We study an equation $\mathfrak{L}(y)=0$ with $\mathfrak
L=\partial_x^2+p\partial_x+q$ ; this is equivalent to the system
$[A]$, where $A$ is given by
$$A=\begin{pmatrix}0&-1\\q&p\end{pmatrix}\quad  p,q\in K.$$ 

Relations between the system approach and the operator approach are given in the following simple lemmas .\\

\textbf{Lemma.} Let ${\mathfrak L}=\partial_x^2+p\partial_x+q$,  
$ A=\begin{pmatrix}0&-1\\q&p\end{pmatrix}$ and 
$P=\begin{pmatrix}a&b\\c&d\end{pmatrix}$. 
Then:
\begin{enumerate}
\item If $P\in\mathcal E(A)$, then $\mathfrak
R=a+b\partial_x\in\mathcal{E}(\mathfrak L)$.
\item If $\mathfrak R=a+b\partial_x\in\mathcal E (\mathfrak
L)$, then $P\in\mathcal E(A)$, where $P$ is given by
$$P=\begin{pmatrix}a&b\\\partial_xa-bq&a+\partial_xb-bp\end{pmatrix}.$$
\item $1\leq \dim_{\mathcal C }\mathcal{E}(\mathfrak L)\leq 4$.
\item $P\in
\mathrm{GL}(2,K)\Leftrightarrow
\frac{\partial_xa}{a}-\frac{a}{b}+p\neq\frac{\partial_xb}{b}-\frac{b}{a}q$.
\end{enumerate}
\medskip

\textbf{Lemma.}
 Assume that $\mathfrak L=\partial_x^2+b$, where $b\in
K$. The following statements hold:

\begin{enumerate}
\item If $\dim_{\mathcal C}\mathcal{E}(\mathfrak L)=1$, then either
the differential Galois group is irreducible ($\mathbb{D}_{\infty}$ or
primitive), or indecomposable ($G\subseteq \mathbb{B}$, $G$ non-diagonal).

\item If $\dim_{\mathcal C}\mathcal{E}(\mathfrak L)=2$, then either the differential Galois group is the additive group or it is contained in the multiplicative group (but is not the identity group). 

\item If $\dim_{\mathcal C}\mathcal{E}(\mathfrak L)=4$, then the differential Galois group is the
identity group. In this case we have 2 independent solutions
$\zeta_1$ and $\zeta_2$ in which $\zeta_1^2$, $\zeta_2^2$ and
$\zeta_1\zeta_2$ are elements of the differential field $K$, i.e.
the solutions of $\mathcal L ^{\sympow 2}$ belong to to the
differential field $K$.
\end{enumerate}




\subsection{Some useful special functions}
Some standard special functions will be used below.

\textbf{The Gauss hypergeometric equation.} It is a particular case of \emph{Riemann's equation} (see
\cite{iksy,po,ki,marram,dulo}) and is given by
\begin{displaymath}\label{hypergeometric}
\partial_x^2y+{\left(\gamma-(\kappa+\beta+1)x\right)\over
x(1-x)}\partial_xy-{\kappa\beta\over x(1-x)} y=0.
\end{displaymath}
The cases when it is solvable can be found in the Kimura table \cite{ki} (or via Kovacic's algorithm, see appendix).
\medskip

\textbf{The confluent hypergeometric equations (Kummer, Whittaker).} It is a degenerate
form of the hypergeometric differential equation where two of the
three regular singularities merge into an irregular singularity.
For example, making ``$1$ tend to $\infty$" in a suitable way, the
hypergeometric equation may transform to two classical forms:
\begin{itemize}
\item \textit{Kummer's} form \begin{displaymath}\label{kummer} \partial_x^2y+{c-x\over x}\partial_xy-{a\over x}y=0\end{displaymath}

\item \textit{Whittaker's} form \begin{displaymath}\label{whittaker} \partial_x^2y=\left(\frac14-{\kappa\over x}+{4\mu^2-1\over 4x^2}\right)y\end{displaymath}
\end{itemize}
where the parameters of the two equations are linked by
$\kappa=\frac{c}2-a$ and $\mu=\frac{c}2-\frac12$. The Galoisian
structure of these equations has been deeply studied in
\cite{marram,dulo}. The following theorem is due to Martinet \&
Ramis, \cite{marram}.\\

\emph{The Whittaker's differential equation is integrable if and
only if either, $\kappa+\mu\in\frac12+\mathbb{N}$, or
$\kappa-\mu\in\frac12+\mathbb{N}$, or
$-\kappa+\mu\in\frac12+\mathbb{N}$, or
$-\kappa-\mu\in\frac12+\mathbb{N}$. }\\

\textbf{The Bessel equation.} It is a particular case of the
confluent hypergeometric equation and is given by
\begin{displaymath}\label{bessel} \partial_x^2y+{1\over x}\partial_xy+{x^2-n^2\over
x^2}y=0.\end{displaymath} Under a suitable transformation, the
reduced form of the Bessel's equation is a particular case of the
Whittaker's equation. Thus, we can obtain the following well known
result, see \cite[p. 417]{kol} and see also \cite{ko,mo}.\\

\emph{The Bessel's differential equation is
integrable if and only if $n\in \frac12+\mathbb{Z}$. }\\

By double confluence of the hypergeometric equation, that is
making ``$0$ and $1$ tend to $\infty$" in a suitable way, one gets
the \textit{parabolic cylinder equation} (also known as
\textit{Weber's} equation):
\begin{displaymath}\label{weber} \partial_x^2y=\left(\frac14x^2-{1\over
2}-n\right)y,\end{displaymath} which is integrable if and only if
$n\in\mathbb{Z}$, see \cite{ko, dulo}. Under suitable
transformations one can gets the Rehm's form of the Weber's
equation:
\begin{displaymath}\label{rehm}
\partial_x^2y=\left(ax^2+2bx+c\right)y,\quad a\neq
0,\end{displaymath} so that ${b^2-c\over a}$ is an odd integer.\\

The hypergeometric equation, including confluences, is a
particular case of the differential equation
\begin{displaymath}\label{eqorpol}
\partial_x^2y+\frac{L}{Q}\partial_xy+\frac{\lambda}{Q} y,\quad
\lambda\in\mathbb{C},\quad L=a_0+a_1x,\quad Q=b_0+b_1x+b_2x^2.
\end{displaymath}
We mention this because the \textit{classical orthogonal polynomials} and
\textit{Bessel polynomials} are solutions of such an equation (see
\cite{ch,is,niuv}). We remark that integrability conditions and
solutions of differential equations with solutions orthogonal
polynomials, including Bessel polynomials, can be obtained
applying Kovacic's algorithm. Similarly, one can apply
Kovacic's algorithm to obtain the same results given by Kimura
\cite{ki} and Martinet \& Ramis \cite{marram}. Also we recall that
Duval \& Loday-Richaud applied Kovacic's algorithm in \cite{dulo} to some
families of special functions, including several of the above.


\section{Notions of Algebraically Solvable and Quasi-Solvable Potentials}\label{algebrizationsec}

The main object of our Galoisian analysis is the one-dimensional Schr\"odinger
equation, written as
\begin{equation}\label{equ1}
\mathcal L_\lambda:=H\Psi=\lambda\Psi, \quad
H=-\partial_x^2+V(x),\quad V\in K,
\end{equation}
where $K$ is a differential field (with $\mathbb{C}$ as field of
constants). We are interested in the integrability (in Galoisian sense) of equation \eqref{equ1}.
Denote by $L_\lambda$ the Picard-Vessiot extension of $K$ for $\mathcal L_\lambda$. 
The differential Galois group of $\mathcal L_\lambda$ is $\mathrm{DGal}(L_\lambda/K)$.

\subsection{Algebraic Spectrum and Algebraic Solvability}
Let $\Lambda\subseteq{\mathbb{C}}$ denote the set of 
eigenvalues $\lambda$ such that equation (\ref{equ1}) is
integrable in Galoisian sense. The set $\Lambda$ will be called \textit{the algebraic spectrum} (or alternatively \textit{the Liouvillian spectral set}) of
$H$. We remark that $\Lambda$ can be $\emptyset$, i.e.,
$\mathrm{DGal}(L_\lambda/K)={\rm SL}(2,\mathbb{C})$ $\forall
\lambda \in \mathbb{C}.$ On the other hand, the classification of subgroups of $SL_{2}(\mathbb{C})$ shows that,
 if $\lambda_0\in \Lambda$, then
$(\mathrm{DGal}(L_{\lambda_0}/K))^0\subseteq\mathbb{B}$.\\

We let $\Lambda_+$ be the set $\{\lambda\in\Lambda\cap \mathbb{R}: \lambda\geq 0\}$ and
$\Lambda_-$ be the set $\{\lambda\in\Lambda\cap \mathbb{R}:
\lambda\leq 0\}$.

\begin{definition}[Algebraically Solvable and Quasi-Solvable Potentials] We say that the
potential $V(x)\in K$ is:
\begin{itemize}
\item an \textit{algebraically solvable potential} when $\Lambda$ is an infinite set, or
\item an \textit{algebraically quasi-solvable potential} when $\Lambda$ is a non-empty finite
set, or
\item an \textit{algebraically non-solvable potential} when $\Lambda=\emptyset$.
\end{itemize}
When $\textrm{Card}(\Lambda)=1$, we say that $V(x)\in K$ is a
\textit{trivial} algebraically quasi-solvable potential.
\end{definition}
\medskip

\begin{examples} Let $K=\mathbb{C}(x)$.
\begin{enumerate}
\item If $V(x)=x$ (Airy equation), then
$\Lambda=\emptyset$, $V(x)$ is algebraically non-solvable, see
\cite{ka,ko}.
\item If $V(x)=0$, then $\Lambda=\mathbb{C}$, i.e., $V(x)$ is algebraically solvable. Furthermore,
$${\mathrm{DGal}}(L_0/K)=\{e\},\quad\hbox{\rm and }\; {\mathrm{DGal}}(L_{\lambda}/K)=\mathbb{G}_m \;\hbox{\rm for }\; \lambda\neq 0.$$
\item If $V(x)={x^2\over 4}+\frac12$, then $\Lambda=\{n:
n\in\mathbb{Z}\}$, $V(x)$ is algebraically solvable. This example
corresponds to the Weber's equation given previously.
\item If $V(x)=x^4-2x$, then $\Lambda=\{0\}$ and $V(x)$ is a trivial
algebraically quasi-solvable potential (see next section for tools to prove this).
\end{enumerate}
\end{examples}
\medskip

\noindent We are interested in the spectrum (analytic spectrum) of
the algebraically solvable and quasi-solvable potentials, that is,
$\mathrm{Spec}(H)\cap \Lambda\neq\emptyset$. 
%

Potentials for which $\mathrm{Spec}(H)\cap \Lambda$ is an infinite set  are called \textit{solvable
(or exactly solvable) potentials} (see Natanzon \cite{nat}) in the 
physics litterature. 
Similarly, potentials for which $\mathrm{Spec}(H)\cap \Lambda$ is a finite
set are usually called \textit{quasi-exactly solvable (or quasi-solvable) potentials}
(Turbiner \cite{tu}, Bender \& Dunne \cite{bedu}, Bender \&
Boettcher \cite{bebo}, Saad et al. \cite{sahaci}, Gibbons \&
Vesselov \cite{give}).\\

\subsection{Changing the Base Field}
In what follows, we will make transformations which may change the base field; typically, when $K$ is not
$\mathbb{C}(x)$, we will look for transformations to another differential with rational coefficients.
We now collect facts on how this affects the differential Galois group and its solvability.

\begin{definition}\label{deftriso}
Let $\mathcal{L}$ and $\widetilde{\mathcal{L}}$ be a pair of linear
differential equations defined over differential fields $K$ and
$\widetilde{K}$ respectively, with Picard-Vessiot extensions $L$
and $\widetilde{L}$. Let $\varphi$ be a transformation mapping $K$ to $\widetilde{K}$
and $\mathcal{L}$ to $\widetilde{\mathcal{L}}$ (so we may also choose $\varphi$
to map $L$ to  $\widetilde{L}$).
We say that:
\begin{enumerate}
\item $\varphi$ is an {\it isogaloisian transformation} if
$$\mathrm{DGal}(L/K)=\mathrm{DGal}(\widetilde{L}/\widetilde{K}).$$
When $\widetilde{L}=L$ and $\widetilde{K}=K$, we'll say that $\varphi$
is a {\it strong isogaloisian transformation}.
\item $\varphi$ is a {\it virtually isogaloisian transformation} if $$(\mathrm{DGal}(L/K))^0=(\mathrm{DGal}(\widetilde{L}/\widetilde{K}))^0.$$
\end{enumerate}
\end{definition}

\begin{remark}\label{prop1} The eigenring s of two operators $\mathfrak
L$ and $\widetilde{\mathfrak L}$ are preserved under isogaloisian
transformations.
\end{remark}

\begin{proposition}\label{rlig} Consider the
differential equations
$$\mathcal{L}:=\partial_x^2y+a\partial_xy+by=0,\quad \widetilde{\mathcal{L}}:=\partial_x^2\zeta=r\zeta,\quad a,\,b,\,r\,\in K.$$
Let $\kappa\in\mathbb{Q}$, $f\in K$, $a=2\kappa\partial_x(\ln f)$
and $\varphi$ be the transformation such that
$\mathcal{L}\mapsto\widetilde{\mathcal{L}}$. The following
statements holds:

\begin{enumerate}
\item $\varphi$ is a strong isogaloisian transformation for
$\kappa\in\mathbb{Z}$.
\item $\varphi$ is a virtually strong isogaloisian transformation for
$\kappa\in\mathbb{Q}\setminus\mathbb{Z}$.
\end{enumerate}
\end{proposition}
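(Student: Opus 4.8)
The plan is to write $\varphi$ as an explicit multiplicative substitution, then to compare the two Picard--Vessiot extensions directly in case (1) and, in case (2), after adjoining a single radical to the base field. Since $a=2\kappa\,\partial_x(\ln f)$ we have $\exp\!\big(-\tfrac12\int a\big)=f^{-\kappa}$, so the classical reduction of $\mathcal L$ to normal form is the substitution $y=f^{-\kappa}\zeta$: plugging into $\partial_x^2y+a\partial_xy+by=0$ produces $\partial_x^2\zeta=r\zeta$ with $r=\tfrac14a^2+\tfrac12\partial_xa-b\in K$. Hence $\varphi$ is the transformation $\zeta=f^{\kappa}y$, its inverse being $y=f^{-\kappa}\zeta$, and $\widetilde K=K$ (both equations are defined over $K$). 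Fixing a fundamental system $y_1,y_2$ of $\mathcal L$ inside $L$, the functions $\zeta_i:=f^{\kappa}y_i$ form a fundamental system of $\widetilde{\mathcal L}$ (their ratio $\zeta_1/\zeta_2=y_1/y_2$ is nonconstant), so $\widetilde L=K\langle\zeta_1,\zeta_2\rangle$. Every field occurring below is a differential extension of $K$ built by adjoining solutions and algebraic elements, hence has $\mathbb C$ as field of constants; I use this tacitly.

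First I would settle (1). If $\kappa\in\mathbb Z$ then $f^{\kappa}\in K$ (note $f\neq0$, since $\ln f$ occurs), so $\zeta_i,\partial_x\zeta_i\in L$, giving $\widetilde L\subseteq L$; conversely $y_i=f^{-\kappa}\zeta_i\in\widetilde L$ gives $L\subseteq\widetilde L$. Thus $\widetilde L=L$ with $\widetilde K=K$, whence $\mathrm{DGal}(L/K)=\mathrm{DGal}(\widetilde L/\widetilde K)$ and $\varphi$ is strong isogaloisian.

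For (2), write $\kappa=p/q$ with $\gcd(p,q)=1$, $q\geq2$, put $\alpha:=f^{1/q}$ and $N:=K(\alpha)$. Since $\mathbb C\subseteq K$, the extension $N/K$ is finite cyclic Galois (degree dividing $q$). Over $N$ the element $\alpha^{p}=f^{\kappa}$ lies in the base field, so repeating the argument of case (1) over $N$ shows that $L(\alpha)=N\langle y_1,y_2\rangle$ and $\widetilde L(\alpha)=N\langle\zeta_1,\zeta_2\rangle$ coincide; call this common field $M$ (here one uses $\zeta_i=\alpha^{p}y_i$ and $y_i=\alpha^{-p}\zeta_i$). By the base-change/compositum theorem of Picard--Vessiot theory, applicable because $L/K$ and $\widetilde L/K$ are Picard--Vessiot, restriction gives isomorphisms of algebraic groups
\[
\mathrm{DGal}(M/N)\ \xrightarrow{\ \sim\ }\ \mathrm{DGal}(L/L\cap N),\qquad \mathrm{DGal}(M/N)\ \xrightarrow{\ \sim\ }\ \mathrm{DGal}(\widetilde L/\widetilde L\cap N).
\]
Now $L\cap N$ is a finite algebraic extension of $K$ lying inside $L$, hence it is fixed pointwise by $\big(\mathrm{DGal}(L/K)\big)^{0}$, whose fixed field is exactly the algebraic closure of $K$ in $L$; therefore $\big(\mathrm{DGal}(L/K)\big)^{0}\subseteq\mathrm{DGal}(L/L\cap N)\subseteq\mathrm{DGal}(L/K)$, the first inclusion having finite index, so the identity component of $\mathrm{DGal}(L/L\cap N)$ equals $\big(\mathrm{DGal}(L/K)\big)^{0}$. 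The same argument gives $\big(\mathrm{DGal}(\widetilde L/\widetilde L\cap N)\big)^{0}=\big(\mathrm{DGal}(\widetilde L/K)\big)^{0}$. Combining this with the two isomorphisms above,
\[
\big(\mathrm{DGal}(L/K)\big)^{0}=\big(\mathrm{DGal}(M/N)\big)^{0}=\big(\mathrm{DGal}(\widetilde L/K)\big)^{0},
\]
i.e. $\varphi$ is virtually strong isogaloisian.

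The computational content — the normal-form reduction and the equalities $L(\alpha)=\widetilde L(\alpha)=M$ — is routine. The step requiring real care is the passage to identity components in (2), where one must combine (i) the compositum isomorphism $\mathrm{DGal}(M/N)\cong\mathrm{DGal}(L/L\cap N)$ from Picard--Vessiot theory with (ii) the structural fact that the identity component of a differential Galois group is the subgroup fixing the algebraic closure of the base field, and with the elementary observation that a finite-index closed subgroup of a linear algebraic group has the same identity component. The bookkeeping with fields of constants is immediate because the constants of $K$ are already algebraically closed.
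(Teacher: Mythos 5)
Your proof is correct and follows essentially the same route as the paper: reduce $\mathcal L$ to normal form via $y=f^{-\kappa}\zeta$, observe that the new fundamental system is $\{f^{\kappa}y_1,f^{\kappa}y_2\}$, and distinguish $f^{\kappa}\in K$ (giving $\widetilde L=L$) from $f^{\kappa}$ merely algebraic over $K$. The only difference is one of rigor: where the paper simply asserts that $\widetilde L$ is an algebraic extension of $L$ of degree at most $m$ and stops, you carefully pass to the compositum $M=L(f^{1/q})=\widetilde L(f^{1/q})$ and invoke the compositum theorem together with the finite-index/identity-component argument, which is the honest way to justify the equality of connected components that the paper leaves implicit.
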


\begin{proof} Let $\mathcal{B}=\{y_1,y_2\}$ be a basis
of solutions of $\mathcal{L}$ and $L$ be the corresponding Picard-Vessiot extension of $K$;
similarly, let $\mathcal{B}'=\{\zeta_1,\zeta_2\}$ be a basis of
solutions of $\widetilde{\mathcal{L}}$ if a Picard-Vessiot extension $\widetilde L$.
 With the change of dependent variable
$y=\zeta e^{-\frac12\int a}$ we obtain $r=a^2/4+\partial_x a/2-b$
and $\widetilde K=K$. Thus, the relationship between
$L$ and $\widetilde L$ depends on $a$:
\begin{enumerate}
\item If $\kappa=n\in\mathbb{Z}$, then $\mathcal{B}'=\{f^ny_1,f^ny_2\}$
which means that $L=\widetilde L$ and $\varphi$ is strong
isogaloisian.
\item If $\kappa=\frac{n}m$, with $\gcd(n,m)=1$, $\frac{n}m\notin\mathbb{Z}$, then $\mathcal{B}'=\{f^{n\over m} y_1,f^{n\over m}y_2\}$
which means that $\widetilde L$ is either an algebraic extension
of degree at most $m$ of $L$, and $\varphi$ is virtually strong
isogaloisian, or $L=\widetilde L$ when $f^{\frac{n}m}\in K$ which
means that $\varphi$ is strong isogaloisian.
\end{enumerate}
\end{proof}
\begin{remark} The transformation $\varphi$ in proposition \ref{rlig} is
not injective, there are a lot of differential equations
$\mathcal{L}$ that are transformed in the same differential
equation $\widetilde{\mathcal{L}}$. They are called projectively equivalent.
\end{remark}
\medskip

The following is an easy consequence of the proposition: \\

\begin{corollary} Let $\mathcal{L}$ be the differential equation $$\partial_x\left(a \partial_xy\right)=(\lambda b-\mu)y,\quad
a,b\in K,\quad \lambda,\mu\in\mathbb{C}$$ 
where $L$,
$\widetilde L$, $\widetilde{\mathcal{L}}$ and $\varphi$ are given
as in proposition \ref{rlig}. Then either $\widetilde L$ is a
quadratic extension of $L$ (which means that $\varphi$ is virtually
strong isogaloisian) or $\widetilde L=L$ (this is when $a^{\frac12}\in K$)
which means that $\varphi$ is strong isogaloisian.
\end{corollary}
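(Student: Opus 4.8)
The plan is to put $\mathcal{L}$ into the normalized shape to which Proposition \ref{rlig} applies and then read off the conclusion directly. First I would expand the self-adjoint form: since $\partial_x(a\partial_x y)=a\,\partial_x^2 y+(\partial_x a)\,\partial_x y$, the equation $\partial_x(a\partial_x y)=(\lambda b-\mu)y$ becomes, after dividing by $a$,
$$\partial_x^2 y+\frac{\partial_x a}{a}\,\partial_x y-\frac{\lambda b-\mu}{a}\,y=0,$$
which is an equation of the type $\partial_x^2 y+A\,\partial_x y+By=0$ treated in Proposition \ref{rlig}, with $A=\partial_x a/a=\partial_x(\ln a)\in K$ and $B=-(\lambda b-\mu)/a\in K$. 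Here I would note that $\lambda b-\mu\in K$ precisely because $\lambda,\mu$ lie in the field of constants $\mathbb{C}\subseteq K$, so the presence of the spectral parameter causes no difficulty and the zeroth-order coefficient is a legitimate element of $K$, as Proposition \ref{rlig} requires.

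Next I would check the hypothesis $A=2\kappa\,\partial_x(\ln f)$ of Proposition \ref{rlig}. The natural choice is $f=a$ and $\kappa=\tfrac12$, for which $2\kappa\,\partial_x(\ln f)=\partial_x(\ln a)=A$. Since $\kappa=\tfrac12\in\mathbb{Q}\setminus\mathbb{Z}$, written as $n/m$ with $n=1$, $m=2$, case (2) of Proposition \ref{rlig} applies verbatim: the transformation $\varphi$, which is the change of dependent variable $y=\zeta\,e^{-\frac12\int A}=\zeta\,a^{-1/2}$ carrying $\mathcal L$ to $\widetilde{\mathcal L}:\partial_x^2\zeta=r\zeta$, is virtually strong isogaloisian, and a basis of solutions of $\widetilde{\mathcal L}$ is $\{a^{1/2}y_1,a^{1/2}y_2\}$ where $\{y_1,y_2\}$ is a basis of solutions of $\mathcal L$. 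Consequently $\widetilde L=L(a^{1/2})$, an algebraic extension of $L$ of degree at most $2$: it is a genuine quadratic extension unless $a^{1/2}\in K\subseteq L$, in which case $\widetilde L=L$ and $\varphi$ is strong isogaloisian. This is exactly the dichotomy asserted in the corollary.

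I expect there to be essentially no obstacle: the entire content is the observation that the self-adjoint operator $\partial_x\circ a\circ\partial_x$ has first-order coefficient equal to the logarithmic derivative $\partial_x(\ln a)$, which is precisely the $\kappa=\tfrac12$ instance of Proposition \ref{rlig}. The only point deserving a line of care is the remark above that $\lambda b-\mu\in K$; once that is noted, the corollary is an immediate specialization of the proposition, with $m=2$ accounting for the quadratic extension and the criterion $a^{1/2}\in K$ accounting for the strongly isogaloisian case.
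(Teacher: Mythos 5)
Your proof is correct and is precisely the verification the paper omits (the corollary is stated as an ``easy consequence'' of Proposition \ref{rlig} with no written proof): dividing by $a$ puts the self-adjoint equation in the form of the proposition with first-order coefficient $\partial_x(\ln a)=2\kappa\,\partial_x(\ln f)$ for $f=a$, $\kappa=\tfrac12$, so case (2) with $m=2$ gives the stated dichotomy. The only caveat, inherited from the paper's own phrasing of the proposition, is that the condition $a^{1/2}\in K$ is sufficient but not strictly necessary for $\widetilde L=L$ (it would also hold if $a^{1/2}\in L\setminus K$), but this matches the statement being proved.
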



\section{Galoisian Aspects of Darboux Transformations and Shape Invariance}

Here we present a Galoisian approach to the Darboux transformation, the
Crum iteration and shape invariant potentials.

Recall that 
$Wr(y_1,\ldots,y_n)$ denotes the Wronskian
$$Wr(y_1,\ldots,y_n)=\left|\begin{matrix}y_1&\cdots& y_n\\
 \vdots& & \vdots \\ \partial^{n-1}_xy_1&\cdots&
 \partial^{n-1}_xy_n\end{matrix}\right|.$$ 
 The Darboux transformation will be abreviated by $\mathrm{DT}$
 and its $n$-th iteration by $\mathrm{DT}_n$; similarly, 
 the abreviation $\mathrm{CI}_n$ will refer to the Crum iteration.  
 
As above, $K$ denotes
 the coefficient field of the studied linear differential equation, i.e. the smallest differential containing the coefficients of the equation.  We mostly consider equations over the base field $K=\mathbb{C}(x)$.
  

  \subsection{Darboux Transformations}
The following result comes from the original Darboux transformation (see \cite{da1,da2}), but rewritten in a way which is adapted to our Galoisian analysis.
\begin{theorem}[Darboux Transformation]\label{darth} 
Consider the Schr\"odinger operators $H_\pm=\partial_x^2+ V_{\pm}(x)$, with $V_{\pm}(x)\in K$, 
and assume that the spectrum satisfies $\Lambda\neq\emptyset$. 
Let $\mathcal{L}_\lambda$ (resp. $\widetilde{\mathcal{L}}_\lambda$) denote the Schr\"odinger equation $H_-\Psi^{(-)}=\lambda\Psi^{(-)}$
(resp. $H_+\Psi^{(+)}=\lambda\Psi^{(+)}$).\\
Let $\mathrm{DT}$ be the transformation
such that $\mathcal{L}\mapsto\widetilde{\mathcal{L}}$, $V_-\mapsto
V_{+}$,
 $\Psi^{(-)}\mapsto\Psi^{(+)}$. Then the following statements
 holds:
\begin{description}
\item[i)] $\mathrm{DT}(V_-)=V_+=\Psi^{(-)}_{\lambda_1}\partial_x^2\left({1\over
\Psi^{(-)}_{\lambda_1}}\right)+\lambda_1=V_--2\partial_x^2(\ln\Psi^{(-)}_{\lambda_1}),$\\
$\mathrm{DT}(\Psi^{(-)}_{\lambda_1})={\Psi}^{(+)}_{\lambda_1}={1\over\Psi^{(-)}_{\lambda_1}}$,
 where $\Psi^{(-)}_{\lambda_1}$ is a particular solution of $\mathcal{L}_{\lambda_1}$, $\lambda_1\in\Lambda$.

\item[ii)] $\mathrm{DT}(\Psi^{(-)})=\Psi^{(+)}=\partial_x\Psi^{(-)}-\partial_x(\ln\Psi^{(-)}_{\lambda_1})\Psi^{(-)}
={W(\Psi^{(-)}_{\lambda_1},\Psi^{(-)})\over
W(\Psi^{(-)}_{\lambda_1})}$, $\lambda\neq\lambda_1$, where
$\Psi^{(-)}=\Psi^{(-)}_{\lambda}$ is the general solution of
$\mathcal{L}_\lambda$ for
$\lambda\in\Lambda\setminus\{\lambda_1\}$ and
$\Psi^{(+)}=\Psi^{(+)}_{\lambda}$ is the general solution of
$\widetilde{\mathcal{L}}_\lambda$ also for
$\lambda\in\Lambda\setminus\{\lambda_1\}$.
\end{description}
\end{theorem}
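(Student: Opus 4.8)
The plan is to deduce the statement from the classical Darboux transformation recalled in the introduction (the corollary with $R=1$, $P=0$), applied to the Schr\"odinger equation written in normal form, and then to carry out the routine bookkeeping that re-expresses the output in Wronskian form and disposes of the exceptional value $\lambda=\lambda_1$ by hand. Concretely, I would first write $\mathcal{L}_\lambda$ in normal form as $\partial_x^2\Psi^{(-)}=(V_--\lambda)\Psi^{(-)}$ and $\widetilde{\mathcal{L}}_\lambda$ as $\partial_x^2\Psi^{(+)}=(V_+-\lambda)\Psi^{(+)}$. Since $\Lambda\neq\emptyset$, fix $\lambda_1\in\Lambda$ and a particular solution $\theta:=\Psi^{(-)}_{\lambda_1}$ of $\mathcal{L}_{\lambda_1}$, so that $\partial_x^2\theta=(V_--\lambda_1)\theta$. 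Applying the Darboux corollary with $f:=V_-$, $m:=-\lambda$, $m_1:=-\lambda_1$, $y:=\Psi^{(-)}_\lambda$ and this $\theta$ shows that $u:=\partial_xy-(\partial_x\theta/\theta)\,y$ satisfies $\partial_x^2u=\big(\theta\partial_x^2(1/\theta)-m_1+m\big)u=\big(\theta\partial_x^2(1/\theta)+\lambda_1-\lambda\big)u$; comparing with $\widetilde{\mathcal{L}}_\lambda$ identifies $V_+=\theta\partial_x^2(1/\theta)+\lambda_1=\Psi^{(-)}_{\lambda_1}\partial_x^2\big(1/\Psi^{(-)}_{\lambda_1}\big)+\lambda_1$, and the companion identity $\theta\partial_x^2(1/\theta)-m_1=f-2\partial_x(\partial_x\theta/\theta)$ from the same corollary rewrites this as $V_+=V_--2\partial_x^2\big(\ln\Psi^{(-)}_{\lambda_1}\big)$. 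This is the first line of (i).

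For the remaining identity in (i), observe that the Darboux formula degenerates when fed $y=\theta$ (it returns $u\equiv 0$), so the image of $\Psi^{(-)}_{\lambda_1}$ under $\mathrm{DT}$ must be supplied separately; using the formula for $V_+$ just obtained one checks directly that $(V_+-\lambda_1)(1/\theta)=\big(\theta\partial_x^2(1/\theta)\big)(1/\theta)=\partial_x^2(1/\theta)$, i.e. $1/\theta$ solves $\widetilde{\mathcal{L}}_{\lambda_1}$, whence $\mathrm{DT}(\Psi^{(-)}_{\lambda_1})=\Psi^{(+)}_{\lambda_1}=1/\Psi^{(-)}_{\lambda_1}$. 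For (ii), the first equality is simply the definition of $u$ with $\partial_x\theta/\theta=\partial_x\ln\theta$, and the Wronskian form follows from $W(\theta,\Psi^{(-)})=\theta\,\partial_x\Psi^{(-)}-\partial_x\theta\,\Psi^{(-)}$ together with $W(\theta)=\theta$, so that $W(\theta,\Psi^{(-)})/W(\theta)=u$; this is the $n=1$ case of Crum's theorem recalled above.

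It remains to justify the word ``general'' in (ii). The map $\mathrm{DT}$ is $\mathcal C$-linear on the two-dimensional solution space of $\mathcal{L}_\lambda$, and for $\lambda\neq\lambda_1$ it is injective: if $u=0$ then $W(\theta,\Psi^{(-)})=0$, hence $\Psi^{(-)}=c\,\theta$ for a constant $c$, and substituting this into the two normal-form equations (whose potentials $V_--\lambda_1$ and $V_--\lambda$ are distinct) gives $(\lambda-\lambda_1)\,c\,\theta=0$, so $c=0$ and $\Psi^{(-)}=0$. Hence $\mathrm{DT}$ carries the solution space of $\mathcal{L}_\lambda$ isomorphically onto a two-dimensional subspace of solutions of $\widetilde{\mathcal{L}}_\lambda$, which must then be all of it; applied to a basis of solutions, this yields the assertions about general solutions. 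The only genuinely delicate points are this separate treatment of the value $\lambda=\lambda_1$ and the matching of the sign normalizations between Darboux's $(f,m,m_1)$ and the present $(V_\pm,\lambda,\lambda_1)$; everything else is substitution.
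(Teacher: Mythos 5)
Your proof is correct, and it takes a genuinely different route from the one written in the paper. The paper proves the theorem from scratch via the supersymmetric factorization: it introduces the superpotential $W=-\partial_x\ln\Psi^{(-)}_{\lambda_1}$, factors $\widetilde H_-=(-\partial_x+W)(\partial_x+W)$ and $\widetilde H_+=(\partial_x+W)(-\partial_x+W)$ to read off $V_\pm=W^2\pm\partial_xW$ (which yields both formulas in (i), including $\Psi^{(+)}_{\lambda_1}=1/\Psi^{(-)}_{\lambda_1}$ from $W=\partial_x\ln\Psi^{(+)}_{\lambda_1}$), and then verifies (ii) by a page-long direct computation of $\partial_x^2\Psi^{(+)}$. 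You instead specialize the classical Darboux corollary quoted in the introduction, with the dictionary $f=V_-$, $m=-\lambda$, $m_1=-\lambda_1$, $\theta=\Psi^{(-)}_{\lambda_1}$; this is precisely the shortcut the authors themselves acknowledge in the remark immediately following their proof (``this gives us a faster proof of theorem \ref{darth}''). Your version buys brevity and avoids the long derivative computation, at the cost of not exhibiting the factorization/superpotential structure that the paper reuses later (e.g.\ in Proposition \ref{darbiso}, where $W$ and its algebraicity over $K$ are the key objects). Two of your additions are worth highlighting: the separate direct check that $1/\theta$ solves $\widetilde{\mathcal L}_{\lambda_1}$ (needed because the Darboux formula annihilates $\theta$ itself), and the injectivity argument via the Wronskian showing that $\mathrm{DT}$ maps the two-dimensional solution space of $\mathcal L_\lambda$ isomorphically onto that of $\widetilde{\mathcal L}_\lambda$ for $\lambda\neq\lambda_1$ — the paper asserts that $\Psi^{(+)}$ is the general solution but never actually argues this, so your proof is on this point more complete than the original.
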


\begin{proof} We treat each item separately.
\begin{description}
\item[i)] By hypothesis, we have  $\mathrm{Card}(\Lambda)\geq 1$,
$\lambda_1\in\Lambda$,\\
$H_-\Psi^{(-)}_{\lambda_1}=\lambda_1\Psi^{(-)}_{\lambda_1},$ and setting
$\widetilde H_\pm= H_\pm-\lambda_1$, $\widetilde V_\pm= V_\pm-\lambda_1$ we obtain
$\widetilde H_-\Psi^{(-)}_{\lambda_1}=0$. Thus, the superpotential $W=-\partial_x\ln\Psi^{(-)}_{\lambda_1}$, depending of $\lambda_1$, exists and satisfies
$$\left(-\partial_x +W\right)\left({\partial_x}
+W\right)\Psi^{(-)}_{\lambda_1}=\widetilde{H}_-\Psi^{(-)}_{\lambda_1}=0.$$
Now, developing both sides of the expression, we have 
$\widetilde{V}_-=-\partial_x W+W^2$. So, interchanging operators we have
$$\left(\partial_x +W \right)\left(-\partial_x +W\right)\Psi^{(+)}_{\lambda_1}=\widetilde{H}_+\Psi^{(+)}_{\lambda_1}=0,$$
so that developing both sides of the expression we have 
$\widetilde{V}_+=\partial_xW+W^2$ which lead us to $W=\partial_x\ln\Psi^{(+)}_{\lambda_1}$. Thus, we obtain
$$\mathrm{DT}(\Psi^{(-)}_{\lambda_1})=\Psi^{(+)}_{\lambda_1}={1\over
\Psi^{(-)}_{\lambda_1}}.$$ On another hand, as
$\widetilde{V}_+-\widetilde{V}_-=V_+-V_-=2\partial_xW$,
$\widetilde{V}_++\widetilde{V}_-=V_++V_--2\lambda_1=2W^2$, we obtain $\mathrm{DT}(V_-)=V_+$ which is given by
$$V_--2\partial_x \left({\partial_x\Psi^{(-)}_{\lambda_1}\over
\Psi^{(-)}_{\lambda_1} }\right)=2\left({\partial_x\Psi^{(-)}_{\lambda_1}\over
\Psi^{(-)}_{\lambda_1}}\right)^2-V_-+2\lambda_1=\Psi^{(-)}_{\lambda_1}\partial_x^2\left({1\over\Psi^{(-)}_{\lambda_1}}\right)+\lambda_1.$$

\item[ii)] Suppose that $\mathrm{Card}(\Lambda)>1$,
$\lambda\neq\lambda_1$ and $\Psi^{(\pm)}=\Psi^{(\pm)}_\lambda$ satisfying

$$H_-\Psi^{(-)}=\lambda\Psi^{(-)},\quad \lambda\neq\lambda_1,$$
applying again the raising and lowering operators we have

$$\left(-\partial_x+W\right)\left(\partial_x
+W\right)\Psi^{(-)}\neq 0,\quad \left(\partial_x+W\right)\left(-\partial_x+W\right)\Psi^{(+)}\neq 0,$$ so that

$$\left(-\partial_x+W\right)\left(\partial_x+W\right)\Psi^{(-)}=\left(-\partial_x+W\right)\Psi^{(+)},$$ we obtain
$$\Psi^{(+)}=\left(\partial_x+W\right)\Psi^{(-)}=\partial_x\Psi^{(-)}+ W\Psi^{(-)}$$ $$=\partial_x\Psi^{(-)}-{\partial_x\Psi^{(-)}_{\lambda_1}\over
\Psi^{(-)}_{\lambda_1}}\Psi^{(-)}={W(\Psi^{(-)}_{\lambda_1},\Psi^{(-)})\over
W(\Psi^{(-)}_{\lambda_1})}.$$ Now, we will show that effectively
$\Psi^{(+)}$ satisfy
$H_+\Psi^{(+)}=\lambda\Psi^{(+)}$
for $\lambda\neq\lambda_1$. Let suppose that
$$\Psi^{(+)}=\partial_x\Psi^{(-)}-\Psi^{(-)}{\partial_x\Psi^{(-)}_{\lambda_1}\over \Psi^{(-)}_{\lambda_1}},$$
we can see that

$$\partial^2_x\Psi^{(-)}-\Psi^{(-)}{\partial^2_x\Psi^{(-)}_{\lambda_1}\over \Psi^{(-)}_{\lambda_1}}=\Psi^{(-)}(\lambda_1-\lambda).$$
therefore, differentiating$\Psi^{(+)}$, we
have 
$$\partial_x\Psi^{(+)}=\partial^2_x\Psi^{(-)}-\partial_x\Psi^{(-)}{\partial_x\Psi^{(-)}_{\lambda_1}\over \Psi^{(-)}_{\lambda_1}}-\Psi^{(-)}{\partial^2_x\Psi^{(-)}_{\lambda_1}\over \Psi^{(-)}_{\lambda_1}}+\Psi^{(-)}\left({\partial_x\Psi^{(-)}_{\lambda_1}\over \Psi^{(-)}_{\lambda_1}}\right)^2$$ $$=\Psi^{(-)}(\lambda_1-\lambda)+\Psi^{(-)}\left({\partial_x\Psi^{(-)}_{\lambda_1}\over \Psi^{(-)}_{\lambda_1}}\right)^2-\partial_x\Psi^{(-)}{\partial_x\Psi^{(-)}_{\lambda_1}\over \Psi^{(-)}_{\lambda_1}},$$

By differentiating $\partial_x\Psi^{(+)}$, we have
$$\partial_x^2\Psi^{(+)}=\partial_x\Psi^{(-)}(\lambda_1-\lambda)+2\partial_x\Psi^{(-)}\left({\partial_x\Psi^{(-)}_{\lambda_1}\over \Psi^{(-)}_{\lambda_1}}\right)^2+2\Psi^{(-)}{\partial_x\Psi^{(-)}_{\lambda_1}\over \Psi^{(-)}_{\lambda_1}}{\partial^2_x\Psi^{(-)}_{\lambda_1}\over \Psi^{(-)}_{\lambda_1}}$$ $$-2\Psi^{(-)}\left({\partial_x\Psi^{(-)}_{\lambda_1}\over \Psi^{(-)}_{\lambda_1}}\right)^3-\partial^2_x\Psi^{(-)}{\partial_x\Psi^{(-)}_{\lambda_1}\over \Psi^{(-)}_{\lambda_1}}-\partial_x\Psi^{(-)}{\partial^2_x\Psi^{(-)}_{\lambda_1}\over \Psi^{(-)}_{\lambda_1}}$$
$$=\partial_x\Psi^{(-)}(\lambda_1-\lambda)+2\partial_x\Psi^{(-)}\left({\partial_x\Psi^{(-)}_{\lambda_1}\over \Psi^{(-)}_{\lambda_1}}\right)^2-2\Psi^{(-)}\left({\partial_x\Psi^{(-)}_{\lambda_1}\over \Psi^{(-)}_{\lambda_1}}\right)^3-\partial_x\Psi^{(-)}{\partial^2_x\Psi^{(-)}_{\lambda_1}\over \Psi^{(-)}_{\lambda_1}}$$ $$-{\partial_x\Psi^{(-)}_{\lambda_1}\over \Psi^{(-)}_{\lambda_1}}\left(\partial^2_x\Psi^{(-)}-\Psi^{(-)}{\partial^2_x\Psi^{(-)}_{\lambda_1}\over \Psi^{(-)}_{\lambda_1}}\right)+\Psi^{(-)}{\partial_x\Psi^{(-)}_{\lambda_1}\over \Psi^{(-)}_{\lambda_1}}{\partial^2_x\Psi^{(-)}_{\lambda_1}\over \Psi^{(-)}_{\lambda_1}}$$
$$=\partial_x\Psi^{(-)}(\lambda_1-\lambda)+2\partial_x\Psi^{(-)}\left({\partial_x\Psi^{(-)}_{\lambda_1}\over \Psi^{(-)}_{\lambda_1}}\right)^2-2\Psi^{(-)}\left({\partial_x\Psi^{(-)}_{\lambda_1}\over \Psi^{(-)}_{\lambda_1}}\right)^3-\partial_x\Psi^{(-)}{\partial^2_x\Psi^{(-)}_{\lambda_1}\over \Psi^{(-)}_{\lambda_1}}$$ $$-\Psi^{(-)}{\partial_x\Psi^{(-)}_{\lambda_1}\over \Psi^{(-)}_{\lambda_1}}\left(\lambda_1-\lambda\right)+\Psi^{(-)}{\partial_x\Psi^{(-)}_{\lambda_1}\over \Psi^{(-)}_{\lambda_1}}{\partial^2_x\Psi^{(-)}_{\lambda_1}\over \Psi^{(-)}_{\lambda_1}}$$
$$=(\lambda_1-\lambda)\left(\partial_x\Psi^{(-)}-\Psi^{(-)}{\partial_x\Psi^{(-)}_{\lambda_1}\over \Psi^{(-)}_{\lambda_1}}\right)-{\partial^2_x\Psi^{(-)}_{\lambda_1}\over \Psi^{(-)}_{\lambda_1}}\left(\partial_x\Psi^{(-)}-\Psi^{(-)}{\partial_x\Psi^{(-)}_{\lambda_1}\over \Psi^{(-)}_{\lambda_1}}\right)$$ $$+2\left({\partial_x\Psi^{(-)}_{\lambda_1}\over \Psi^{(-)}_{\lambda_1}}\right)^2\left(\partial_x\Psi^{(-)}-\Psi^{(-)}{\partial_x\Psi^{(-)}_{\lambda_1}\over \Psi^{(-)}_{\lambda_1}}\right)$$
$$=\left(2\left({\partial_x\Psi^{(-)}_{\lambda_1}\over \Psi^{(-)}_{\lambda_1}}\right)^2-{\partial^2_x\Psi^{(-)}_{\lambda_1}\over \Psi^{(-)}_{\lambda_1}}+\lambda_1-\lambda\right)\left(\partial_x\Psi^{(-)}-\Psi^{(-)}{\partial_x\Psi^{(-)}_{\lambda_1}\over \Psi^{(-)}_{\lambda_1}}\right)$$ $$=\left(2\left({\partial_x\Psi^{(-)}_{\lambda_1}\over \Psi^{(-)}_{\lambda_1}}\right)^2-{\partial^2_x\Psi^{(-)}_{\lambda_1}\over \Psi^{(-)}_{\lambda_1}}+\lambda_1-\lambda\right)\Psi^{(+)}$$
$$=\left(2\left({\partial_x\Psi^{(-)}_{\lambda_1}\over
\Psi^{(-)}_{\lambda_1}}\right)^2-V_-+2\lambda_1-\lambda\right)\Psi^{(+)},$$
which implies that, for $\lambda\neq\lambda_1$, the function
$\Psi^{(+)}$ satisfies the Schr\"odinger equation $H_+\Psi^{(+)}=\lambda\Psi^{(+)}$,
$$V_+=2\left({\partial_x\Psi^{(-)}_{\lambda_1}\over
\Psi^{(-)}_{\lambda_1}}\right)^2-V_-+2\lambda_1=V_--2\partial_x\left({\partial_x\Psi^{(-)}_{\lambda_1}\over
\Psi^{(-)}_{\lambda_1}
}\right)=\Psi^{(-)}_{\lambda_1}\partial_x^2\left({1\over\Psi^{(-)}_{\lambda_1}}\right)+\lambda_1.$$
\end{description}
\end{proof}

\begin{remark}
According to the corollary of the general Darboux transformation, see section \ref{introd}, we have  $m=-\lambda$, $m_1=-\lambda_1$, $f(x)=V_-$, $y=\Psi^{(-)}$ and $u=y=\Psi^{(+)}$. This gives us a faster proof of theorem \ref{darth}.
\end{remark}
In agreement with the previous theorem, we obtain the following
results.

\begin{proposition}\label{darbiso}
$\mathrm{DT}$ is isogaloisian and virtually strong isogaloisian.
Furthermore, if $\partial_x(\ln\Psi^{(-)}_{\lambda_1})\in K$, then
$\mathrm{DT}$ is strong isogaloisian.
\end{proposition}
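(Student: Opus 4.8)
The plan is to realise both Picard--Vessiot extensions inside a single one and then reduce everything to the nature of the extension $K(w)/K$, where $w:=\partial_x(\ln\Psi^{(-)}_{\lambda_1})$. Set $u:=\Psi^{(-)}_{\lambda_1}$, so $w=\partial_x u/u$ and, by Theorem~\ref{darth}(ii), $\mathrm{DT}$ acts on solutions as the first order operator $\partial_x-w$. First I would record that $\mathrm{DT}$ is invertible on the solution space of $\mathcal L_\lambda$ for $\lambda\neq\lambda_1$: since $u$ solves $\mathcal L_{\lambda_1}$ one has the Riccati identity $\partial_x w+w^2=V_--\lambda_1$, hence the operator identity $(\partial_x+w)(\partial_x-w)=\partial_x^2-(V_--\lambda_1)$; evaluated on a solution $\Psi^{(-)}$ of $\mathcal L_\lambda$ this gives $(\partial_x+w)\,\mathrm{DT}(\Psi^{(-)})=(\lambda_1-\lambda)\Psi^{(-)}$, a nonzero multiple of $\Psi^{(-)}$. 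Consequently, if $\{\Psi^{(-)}_1,\Psi^{(-)}_2\}$ is a fundamental system of $\mathcal L_\lambda$ generating $L_\lambda$, then $\Psi^{(+)}_i:=\mathrm{DT}(\Psi^{(-)}_i)$ (which solves $\widetilde{\mathcal L}_\lambda$ by Theorem~\ref{darth}(ii)) is a fundamental system of $\widetilde{\mathcal L}_\lambda$ generating $\widetilde L_\lambda$, and the two relations
$$\Psi^{(+)}_i=\partial_x\Psi^{(-)}_i-w\,\Psi^{(-)}_i,\qquad \Psi^{(-)}_i=\tfrac{1}{\lambda_1-\lambda}\bigl(\partial_x\Psi^{(+)}_i+w\,\Psi^{(+)}_i\bigr),$$
together with $w=(\partial_x\Psi^{(-)}_i-\Psi^{(+)}_i)/\Psi^{(-)}_i$, show that $\widetilde L_\lambda\subseteq L_\lambda(w)$, $L_\lambda\subseteq\widetilde L_\lambda(w)$ and $w\in L_\lambda\cdot\widetilde L_\lambda$; hence $M:=L_\lambda\cdot\widetilde L_\lambda=L_\lambda(w)=\widetilde L_\lambda(w)$.

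Next I would pin down $w$ itself. By Theorem~\ref{darth}(i) we have $V_+=V_--2\partial_x w$, so $\partial_x w=\tfrac12(V_--V_+)\in K$ (as $V_\pm\in K$); combined with the Riccati identity this gives $w^2=(V_--\lambda_1)-\partial_x w\in K$ as well. Differentiating the last relation, $2w\,\partial_x w=\partial_x(w^2)\in K$: if $\partial_x w\neq0$ then $w=\partial_x(w^2)/(2\partial_x w)\in K$, while if $\partial_x w=0$ then $w$ is a constant, hence $w\in\mathcal C\subseteq K$. In every case $w\in K$, so $\widetilde K=K$ and $M=L_\lambda=\widetilde L_\lambda$.

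It then remains to compare the two faithful matrix representations attached to the common Galois group. Since $w\in K$, the operator $\mathrm{DT}=\partial_x-w$ commutes with every $\sigma\in\mathrm{DGal}(L_\lambda/K)=\mathrm{DGal}(\widetilde L_\lambda/\widetilde K)$ (it fixes $K$ and commutes with $\partial_x$). Hence if $A_\sigma\in\mathrm{GL}(2,\mathbb{C})$ is the matrix of $\sigma$ in the basis $\{\Psi^{(-)}_1,\Psi^{(-)}_2\}$, then applying $\mathrm{DT}$ shows that $A_\sigma$ is also the matrix of $\sigma$ in the basis $\{\Psi^{(+)}_1,\Psi^{(+)}_2\}$ of $\widetilde L_\lambda$; the two representations therefore coincide as subgroups of $\mathrm{GL}(2,\mathbb{C})$, which is precisely the assertion that $\mathrm{DT}$ is strong isogaloisian. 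A fortiori $\mathrm{DT}$ is isogaloisian and virtually strong isogaloisian, and the ``furthermore'' clause — where $w=\partial_x(\ln\Psi^{(-)}_{\lambda_1})\in K$ is posited rather than deduced — is subsumed.

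The one genuinely non-formal step is the middle one, that $w\in K$: a priori $w=\partial_x u/u$ lives in the Picard--Vessiot extension of the \emph{different} equation $\mathcal L_{\lambda_1}$ and has no visible link with $K$, and it is only the compatibility $V_\pm\in K$ that, through the Riccati identity, forces $\partial_x w$ and $w^2$ — and then $w$ — into $K$. Should one work with a seed for which $w$ is merely algebraic over $K$ (the situation if $V_+$ is not required to lie in $K$), the same squeezing would instead give $[M:L_\lambda]<\infty$ and $[M:\widetilde L_\lambda]<\infty$, so the natural surjections from $\mathrm{DGal}(M/K)$ onto $\mathrm{DGal}(L_\lambda/K)$ and onto $\mathrm{DGal}(\widetilde L_\lambda/\widetilde K)$ would have finite kernels, and one would recover the \emph{virtually} strong isogaloisian conclusion through the common extension $M$; this is the robustness the weaker notions in the statement are meant to capture.
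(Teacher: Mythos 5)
Your computations are all correct, and the observation that $V_+\in K$ together with the Riccati identity forces $\partial_x w\in K$ and $w^2\in K$, hence $w\in K$, is a genuine one --- it is essentially the paper's own Proposition \ref{sspp}. But this is precisely where your proof parts company with the statement it is supposed to establish. By taking the hypothesis ``$V_\pm\in K$'' of Theorem \ref{darth} at face value you conclude that $w=\partial_x(\ln\Psi^{(-)}_{\lambda_1})$ \emph{always} lies in $K$, so that $\mathrm{DT}$ is \emph{always} strong isogaloisian and the ``Furthermore'' clause is vacuous. That cannot be the intended content: the proposition carefully distinguishes ``virtually strong isogaloisian'' from ``strong isogaloisian'', and the corollary immediately following it (``$W\in\mathbb{C}(x)$ if and only if $\mathrm{DT}$ is strong isogaloisian'') only makes sense if $\mathrm{DT}$ can fail to be strong isogaloisian. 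The paper's own proof works in the setting where $V_-\in K$ but $V_+=\mathrm{DT}(V_-)$ is only known to lie in $\widetilde K=K\langle \partial_x(\ln\Psi^{(-)}_{\lambda_1})\rangle$, which may be a proper (algebraic) extension of $K$; the entire substance of the proposition is what happens in that case.

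That general case is addressed only in your closing remark, and there two things are missing. First, the finiteness $[\widetilde L_\lambda:L_\lambda]<\infty$ does not come from ``the same squeezing'' (which is unavailable once $V_+\notin K$); it comes from the fact that $w$ is \emph{algebraic} over $K$, which you posit but do not prove. The paper derives it from Singer's theorem: since $\lambda_1\in\Lambda$, the equation $\mathcal L_{\lambda_1}$ is integrable, so the associated Riccati equation admits an algebraic solution, and the seed $\Psi^{(-)}_{\lambda_1}$ is (implicitly) chosen so that $-\partial_x(\ln\Psi^{(-)}_{\lambda_1})$ is that solution --- note this is a condition on the choice of seed, not automatic for an arbitrary solution of $\mathcal L_{\lambda_1}$. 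Second, even granting algebraicity, your finite-kernel-surjection argument through $M=L_\lambda(w)$ only delivers the equality of identity components, i.e.\ the ``virtually strong'' conclusion; the on-the-nose equality $\mathrm{DGal}(L_\lambda/K)=\mathrm{DGal}(\widetilde L_\lambda/\widetilde K)$ asserted by the word ``isogaloisian'' requires the further identification $\widetilde L_\lambda=\widetilde K\langle\Psi^{(-)}_{(1,\lambda)},\Psi^{(-)}_{(2,\lambda)}\rangle$ and a comparison of the two Galois groups over the \emph{enlarged} base $\widetilde K$, which is how the paper argues and which your sketch does not supply.
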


\begin{proof}
 Let $K$ and  ${L_\lambda}$ be respectively the coefficient field and
the Picard-Vessiot extension of the equation
$\mathcal{L}_\lambda$. Similarly, let $\widetilde{{K}}$,
$\widetilde{{L}}_\lambda$ be the coefficient field and the
Picard-Vessiot extension of the equation
$\widetilde{\mathcal{L}}_\lambda$. As $\mathrm{DT}(V_-)=V_+=2W^2-V_--2\lambda_1$, where
$W=-\partial_x(\ln\Psi^{(-)}_{\lambda_1})$, we have
$\widetilde{{K}}=K\left\langle
\partial_x(\ln\Psi^{(-)}_{\lambda_1})\right\rangle.$ The Riccati equation
$\partial_xW=V_--W^2$ has one algebraic solution (Singer 1981,
\cite{sil}), in this case $W=-\partial_x(\ln
\Psi^{(-)}_{\lambda_1})$. Let $\langle
\Psi^{(-)}_{(1,\lambda)},\Psi^{(-)}_{(2,\lambda)}\rangle$ be a
basis of solutions for equation $\mathcal{L}_\lambda$ and $\langle
\Psi^{(+)}_{(1,\lambda)},\Psi^{(+)}_{(2,\lambda)}\rangle$ a basis
of solutions for equation $\widetilde{\mathcal{L}}_\lambda$. Since
the coefficient field for equation
$\widetilde{\mathcal{L}}_\lambda$ is $\widetilde
K=K\langle\partial_x(\ln \Psi^{(-)}_{\lambda_1})\rangle$, we have
 $L=K\langle
\Psi^{(-)}_{(1,\lambda)},\Psi^{(-)}_{(2,\lambda)}\rangle$ and
$$\widetilde L=\widetilde K\langle \Psi^{(+)}_{(1,\lambda)},\Psi^{(+)}_{(2,\lambda)}\rangle=K\langle
\Psi^{(+)}_{(1,\lambda)},\Psi^{(+)}_{(2,\lambda)},\partial_x(\ln
\Psi^{(-)}_{\lambda_1})\rangle$$ $$=K\langle
\Psi^{(-)}_{(1,\lambda)},\Psi^{(-)}_{(2,\lambda)},\partial_x(\ln
\Psi^{(-)}_{\lambda_1})\rangle=\widetilde K\langle
\Psi^{(-)}_{(1,\lambda)},\Psi^{(-)}_{(2,\lambda)}\rangle,$$ for
$\lambda=\lambda_1$ and for $\lambda\neq\lambda_1$. Since
$\partial_x(\ln \Psi^{(-)}_{\lambda_1})$ is algebraic over $K$,
then
$$(\mathrm{DGal}(L_\lambda/K))^0=(\mathrm{DGal}(\widetilde L_\lambda/K))^0,\quad \mathrm{DGal}(L_\lambda/K)
=\mathrm{DGal}(\widetilde L_\lambda/\widetilde K),$$
which means that $\mathrm{DT}$ is a virtually strong and isogaloisian transformation.\\
In the case $\partial_x(\ln \Psi^{(-)}_{\lambda_1})\in K$, then
$\widetilde K=K$ and $\widetilde L=L$, which means that
$\mathrm{DT}$ is a strong isogaloisian transformation.
\end{proof}

\begin{proposition}\label{dareige} Consider  Schr\"odinger operators $\mathfrak{L}_\lambda:=H_--\lambda$ and $\widetilde{
\mathfrak{L}}_\lambda:=H_+-\lambda$ such that
$\mathrm{DT}(H_--\lambda)=H_+-\lambda$. The eigenrings of
$\mathfrak{L}_\lambda$ and $\widetilde{\mathfrak{L}}_\lambda$ are
isomorphic.

\end{proposition}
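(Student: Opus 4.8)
The plan is to obtain the isomorphism from two ingredients already in hand: $\mathrm{DT}$ is isogaloisian (Proposition~\ref{darbiso}), and isogaloisian transformations preserve eigenrings (Remark~\ref{prop1}). In one line, $\mathrm{DT}$ being isogaloisian gives $\mathcal E(\mathfrak L_\lambda)\cong\mathcal E(\widetilde{\mathfrak L}_\lambda)$. A careful write-up then consists in making the mechanism explicit and in separating the two endpoints ($\lambda\neq\lambda_1$ versus $\lambda=\lambda_1$).

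First I would set $W=-\partial_x(\ln\Psi^{(-)}_{\lambda_1})$ and use the ladder operators $A=\partial_x+W$, $A^+=-\partial_x+W$, for which $A^+A=H_--\lambda_1$ and $AA^+=H_+-\lambda_1$; rewriting in terms of $\mathfrak L_\lambda=H_--\lambda$ and $\widetilde{\mathfrak L}_\lambda=H_+-\lambda$ yields the intertwining identities
\[
A\,\mathfrak L_\lambda=\widetilde{\mathfrak L}_\lambda\,A,\qquad A^+\,\widetilde{\mathfrak L}_\lambda=\mathfrak L_\lambda\,A^+,
\]
\[
A^+A=\mathfrak L_\lambda+(\lambda-\lambda_1),\qquad AA^+=\widetilde{\mathfrak L}_\lambda+(\lambda-\lambda_1).
\]
The coefficients of $A$ and $A^+$ lie in $\widetilde K=K\langle W\rangle$, the coefficient field of $\widetilde{\mathcal L}_\lambda$, and $W$ is algebraic over $K$ (the algebraic Riccati solution of Singer's theorem, as already used in Proposition~\ref{darbiso}). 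Hence every element of $\mathrm{DGal}(\widetilde L_\lambda/\widetilde K)$ fixes these coefficients, so for $\lambda\neq\lambda_1$ the operator $A$ restricts to a $\mathrm{DGal}(\widetilde L_\lambda/\widetilde K)$-equivariant isomorphism of solution spaces $V(\mathfrak L_\lambda)\to V(\widetilde{\mathfrak L}_\lambda)$ (injective because $\ker A$ is spanned by $\Psi^{(-)}_{\lambda_1}\notin V(\mathfrak L_\lambda)$ when $\lambda\neq\lambda_1$, surjective by dimension count). Since the eigenring of a second order operator is canonically the commutant of the image of its differential Galois group in the endomorphisms of the solution space, equivalence of the two Galois representations yields the desired isomorphism of $\mathcal C$-algebras. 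Alternatively one can avoid Galois theory and exhibit the isomorphism directly: send $\mathfrak R\in\mathcal E(\mathfrak L_\lambda)$ to the order $\leq 1$ operator obtained by reducing $\tfrac1{\lambda-\lambda_1}A\,\mathfrak R\,A^+$ modulo $\widetilde{\mathfrak L}_\lambda$, with candidate inverse $\widetilde{\mathfrak R}\mapsto\tfrac1{\lambda-\lambda_1}A^+\widetilde{\mathfrak R}A\bmod\mathfrak L_\lambda$; checking that these land in the respective eigenrings, are $\mathcal C$-algebra homomorphisms, and are mutually inverse is a routine manipulation of the four displayed identities (the hypothesis $\lambda\neq\lambda_1$ is exactly what makes $A^+A$ and $AA^+$ units in the respective quotients). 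At the level of systems this is merely the statement that the companion first order systems of $\mathfrak L_\lambda$ and $\widetilde{\mathfrak L}_\lambda$ are gauge equivalent over $\widetilde K$ (via the matrix attached to $A$), whence their system eigenrings are isomorphic, and the dictionary recorded earlier between $\mathcal E(\mathfrak L)$ and $\mathcal E(A)$ transports this back to operators.

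The point requiring care is the bookkeeping of the base field: $\widetilde{\mathfrak L}_\lambda$ naturally lives over $\widetilde K=K\langle W\rangle$ rather than over $K$, so $\mathcal E(\mathfrak L_\lambda)$ and $\mathcal E(\widetilde{\mathfrak L}_\lambda)$ are a priori algebras over different differential fields; the assertion is an isomorphism of $\mathcal C$-algebras, which is precisely what the equivariance (equivalently, the algebraicity of $W$) guarantees, and is the content of Remark~\ref{prop1} specialized to $\mathrm{DT}$. The only genuinely separate situation is $\lambda=\lambda_1$, where the construction above degenerates since $A^+A=\mathfrak L_{\lambda_1}$ is no longer a unit modulo $\mathfrak L_{\lambda_1}$; there one falls back on Proposition~\ref{darbiso}, which already gives $\mathrm{DGal}(L_{\lambda_1}/K)=\mathrm{DGal}(\widetilde L_{\lambda_1}/\widetilde K)$ with $\Psi^{(+)}_{\lambda_1}=1/\Psi^{(-)}_{\lambda_1}$ playing the role of the intertwiner of solution spaces, so the Galois-theoretic route of the previous paragraph applies verbatim. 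Everything else is routine computation, so I anticipate no obstruction beyond this bookkeeping.
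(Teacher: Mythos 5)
Your proposal is correct and follows essentially the same route as the paper: the paper's proof is precisely the conjugation $T\mapsto ATA^\dagger \bmod \widetilde{\mathfrak L}_\lambda$ by the raising and lowering operators, presented via the same commuting diagram of solution spaces, together with the appeal to Proposition~\ref{darbiso}. You supply details the paper leaves implicit (the normalization by $(\lambda-\lambda_1)^{-1}$ needed for multiplicativity and the degenerate case $\lambda=\lambda_1$), but the mechanism is identical.
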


\begin{proof}
Let $\mathcal E(\mathfrak L_\lambda)$ and $\mathcal
E(\widetilde{\mathfrak L}_\lambda)$ denote the eigenrings of
$\mathfrak{L}_\lambda$ and $\widetilde{\mathfrak{L}}_\lambda$
respectively. By proposition \ref{darbiso} the connected identity
component of the Galois group is preserved by Darboux
transformation.
Let $T\in\mathcal E
(\mathfrak L_\lambda)$. Let $\mathrm{Sol}(\mathfrak L_\lambda)$ and
$\mathrm{Sol}(\widetilde{\mathfrak{L}}_\lambda)$ be the solution
spaces for $\mathfrak L_\lambda\Psi^{(-)}=0$ and
$\widetilde{\mathfrak L}_\lambda\Psi^{(-)}=0$ respectively. To
transform $\mathcal E (\mathfrak L_\lambda)$ into $\mathcal E
(\widetilde{\mathfrak{L}}_\lambda)$, we follow the commuting diagram:

 $$\xymatrix{\mathrm{Sol}(\mathfrak L_\lambda) \ar[r]^-{T} & \mathrm{Sol}(\mathfrak L_\lambda) \\
   \mathrm{Sol}(\widetilde{\mathfrak L}_\lambda)  \ar[r]_{\widetilde T} \ar[u]^{A^\dagger} & \ar@{<-}[u]_{A} \mathrm{Sol}(\widetilde{\mathfrak L}_\lambda) }
   \qquad \begin{array}{c}\\ \\ \\
  \quad\hbox{\rm so }\quad  \widetilde{T}=ATA^\dagger{\rm mod}(\widetilde{\mathfrak L}_\lambda)\in\mathcal E(\widetilde{\mathfrak L}_\lambda),\end{array}$$
where $A^\dagger$ and $A$ are the raising and lowering operators.
\end{proof}

\begin{example}
Consider the Schr\"odinger equation $\mathcal L_\lambda$ with
potential $V=V_-=0$, which means that $\Lambda=\mathbb{C}$. If we
choose $\lambda_1=0$ and as particular solution
$\Psi_{0}^{(-)}=x$, then for $\lambda\neq 0$ the general solution
is given by
$$\Psi^{(-)}_{\lambda}=c_1e^{\sqrt{-\lambda}x}+c_2e^{-\sqrt{-\lambda}x}.$$
Applying the Darboux transformation $\mathrm{DT}$, we have
$\mathrm{DT}(\mathcal L_\lambda)=\widetilde{\mathcal L}_\lambda$,
where
$$\mathrm{DT}(V_-)=V_+={2\over x^2}$$ and for $\lambda\neq
0$
$$\mathrm{DT}(\Psi^{(-)}_{\lambda})=\Psi^{(+)}_{\lambda}={c_1(\sqrt{-\lambda}x-1)e^{\sqrt{-\lambda}x}\over x}-{c_2(\sqrt{-\lambda}x+1)e^{-\sqrt{-\lambda}x}\over x}.$$
We can see that $\widetilde{K}=K=\mathbb{C}(x)$ for all
$\lambda\in \Lambda$ and the Picard-Vessiot extensions are given
by $L_0=\widetilde{L}_0=\mathbb{C}(x)$,
$L_\lambda=\widetilde{L}_\lambda=\mathbb{C}(x,e^{\sqrt{\lambda}x})$
for $\lambda\in\mathbb{C}^*$. It follows that  
$\mathrm{DGal}(L_0/K)=\mathrm{DGal}(\widetilde{L}_0/K)=e$; for
$\lambda\neq0$, we have
$\mathrm{DGal}(L_\lambda/K)=\mathrm{DGal}(\widetilde{L}_\lambda/K)=\mathbb{G}_m$.
The eigenrings of the operators $\mathfrak L_0$ and $\mathfrak
L_\lambda$ are given by
$$\mathcal E(\mathfrak L_0)=\mathrm{Vect}\left(1, \partial_x,x\partial_x-1,x^2\partial_x-x\right),$$
$$\mathcal E(\widetilde{\mathfrak
L}_0)=\mathrm{Vect}\left(1,x\partial_x-1,x^4\partial_x-2x^3,{\partial_x\over
x^2}+{1\over x^3}\right)$$ and for $ \lambda\neq 0$
$$\mathcal E(\mathfrak L_{\lambda})=\mathrm{Vect}\left(1,\partial_x\right), \quad \mathcal
E(\widetilde{\mathfrak
L}_{\lambda})=\mathrm{Vect}\left(1,-\left(\lambda +{1\over x^2}
\right)\partial_x-{1\over x^3}\right),$$ where $\mathcal
L_{\lambda}:=\mathfrak L_{\lambda}\Psi^{(-)}=0$ and
$\widetilde{\mathcal L}_{\lambda}:=\widetilde{\mathfrak
L}_{\lambda}\Psi^{(+)}=0$.

\end{example}
\medskip

Applying iteratively the Darboux transformation, theorem
\ref{darth}, and by propositions \ref{darbiso}, \ref{dareige}
we have the following results.\\

\begin{proposition}[Galoisian version of DT$_n$]\label{darbit1}
We continue with the notations of theorem \ref{darth}. 
We set $V_0=V_-$, $H^{(0)}=H_-$, $\Psi^{(0)}=\Psi^{(-)}$
We still assume that the algebraic spectrum $\Lambda$ is non-empty.
We consider the operators $\mathcal{L}^{(n)}_\lambda$ (given by $H^{(n)}\Psi^{(n)}=\lambda\Psi^{(n)}$) obtained by successive Darboux transformations.
We let $K_n$ be the coefficient field of  $\mathcal{L}^{(n)}_\lambda$ (i.e $V_n\in{K}_n$) and $K=K_0$.
Let $L_\lambda^{(n)}$ denote the Picard-Vessiot
extension of $\mathcal{L}^{(n)}_\lambda$. 
The Darboux transformation $\mathrm{DT}_n$  is such that
$\mathcal{L}_\lambda^{(n)}\mapsto\mathcal{L}_\lambda^{(n+1)}$,
$V_n\mapsto V_{n+1}$,
 $\Psi^{(n)}_{\lambda}\mapsto\Psi^{(n+1)}_\lambda$ and $L_\lambda^{(n+1)}$ the Picard-Vessiot
extension of $\mathcal{L}^{(n+1)}_\lambda$. 
\\
Then the following statements hold:
 \begin{description}
\item[i)] $\mathrm{DT}_n(V_-)=\mathrm{DT}(V_n)=V_{n+1}=V_n-2\partial_x^2\left(\ln
\Psi_{\lambda}^{(n)}\right)=V_--2\sum_{k=0}^{n}\partial_x^2\left(\ln
\Psi_{\lambda_k}^{(k)}\right)$, where $\Psi_{\lambda_k}^{(k)}$ is
a particular solution for $\lambda=\lambda_k$, $k=0,\ldots, n$. In
particular, if $\lambda_n=\lambda_0$ and $\Lambda=\mathbb{C}$,
then there exists $\Psi^{(n)}_{\lambda_n}$ such that $V_{n}\neq
V_{n-2}$, with $n\geq 2$.
\item[ii)] $\mathrm{DT}(\Psi_{\lambda}^{(n)})=\mathrm{DT}_n(\Psi^{(-)}_{\lambda})=\Psi^{(n+1)}_{\lambda}=
\partial_x\Psi^{(n)}_{\lambda}-\Psi^{(n)}_{\lambda}{\partial_x\Psi_{\lambda_n}^{(n)}\over
\Psi_{\lambda_n}^{(n)}}={W(\Psi^{(n)}_{\lambda_n},\Psi^{(n)}_{\lambda})\over
W(\Psi^{(n)}_{\lambda_n})} $ where $\Psi^{(n)}_{\lambda}$ is a
general solution for $\lambda\in\Lambda\setminus\{\lambda_n\}$ of
$\mathcal{L}^{(n)}_\lambda$.

\item[iii)] $K_{n+1}=K_n\left\langle\partial_x(\ln
\Psi^{(n)}_{\lambda_n}) \right\rangle$.

\item[iv)] $\mathrm{DT}_n$ is isogaloisian and virtually strongly isogaloisian. Furthermore, if $\partial_x(\ln\Psi^{(n)}_{\lambda_n})\in K_n$ then $\mathrm{DT}_n$
is strongly isogaloisian.
\item[v)] The eigenrings of $H^{(n)}-\lambda$ and $H^{(n+1)}-\lambda$
are isomorphic.
\end{description}
\end{proposition}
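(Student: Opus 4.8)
The plan is to set this up as a straightforward induction on $n$, with Theorem \ref{darth} and Propositions \ref{darbiso}, \ref{dareige} providing the base case and the single-step inductive engine. For the base case $n=0$: by our conventions $V_0=V_-$, $H^{(0)}=H_-$, $\Psi^{(0)}=\Psi^{(-)}$, so statement (i) for $n=0$ is exactly part (i) of Theorem \ref{darth} (written as $V_1=V_0-2\partial_x^2\ln\Psi^{(0)}_{\lambda_0}$), statement (ii) is part (ii) of Theorem \ref{darth}, statement (iii) is the identity $\widetilde K=K\langle\partial_x(\ln\Psi^{(-)}_{\lambda_1})\rangle$ established in the proof of Proposition \ref{darbiso}, statement (iv) is Proposition \ref{darbiso} itself, and statement (v) is Proposition \ref{dareige}.

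For the inductive step, the key observation is that $\mathrm{DT}_n$ is, by definition, \emph{the ordinary Darboux transformation} $\mathrm{DT}$ applied to the pair of Schr\"odinger operators $(H^{(n)}, H^{(n+1)})$ over the base field $K_n$, using the particular solution $\Psi^{(n)}_{\lambda_n}$ of $\mathcal{L}^{(n)}_{\lambda_n}$. The only thing one must check before invoking the previous results is that the hypotheses are met at level $n$: namely that $K_n$ is a differential field with constant field $\mathbb{C}$ and that the algebraic spectrum of $H^{(n)}$ is non-empty. The latter follows because $\mathrm{DT}$ preserves $\Lambda$ (Theorem \ref{darth} gives, for every $\lambda\in\Lambda\setminus\{\lambda_k\}$, an explicit solution $\Psi^{(n+1)}_\lambda$ of $\mathcal{L}^{(n+1)}_\lambda$, and by Proposition \ref{darbiso} integrability is preserved since $\mathrm{DT}_n$ is virtually isogaloisian), so in fact $\Lambda$ is the same set at every level (with the convention that the spectral values $\lambda_0,\dots,\lambda_{n-1}$ already used remain available as limits/by the formula). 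Once the hypotheses hold at level $n$, items (i)--(ii) at level $n$ are a direct translation of Theorem \ref{darth} with $V_-\rightsquigarrow V_n$, $\lambda_1\rightsquigarrow\lambda_n$; item (iii) is the base-field computation from the proof of Proposition \ref{darbiso}; item (iv) is Proposition \ref{darbiso}; item (v) is Proposition \ref{dareige}, all now applied at level $n$.

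There remain two genuinely new pieces beyond bookkeeping. First, the telescoped formula $V_{n+1}=V_--2\sum_{k=0}^{n}\partial_x^2(\ln\Psi^{(k)}_{\lambda_k})$: this follows by iterating the one-step relation $V_{k+1}=V_k-2\partial_x^2(\ln\Psi^{(k)}_{\lambda_k})$ from $k=0$ to $k=n$ and summing, using linearity of $\partial_x^2$. One should note that this is consistent with Crum's theorem, since $\sum_{k=0}^n \partial_x^2 \ln\Psi^{(k)}_{\lambda_k} = \partial_x^2 \ln\prod_{k=0}^n \Psi^{(k)}_{\lambda_k}$ and the Crum product of successive Darboux solutions equals the Wronskian $Wr(\Psi_{\lambda_0},\dots,\Psi_{\lambda_n})$ (cf.\ \cite{ornoro}); this identification can be invoked but is not needed for the proof. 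Second, the non-degeneracy claim in (i): if $\lambda_n=\lambda_0$ and $\Lambda=\mathbb{C}$, then one can choose $\Psi^{(n)}_{\lambda_n}$ so that $V_n\neq V_{n-2}$ for $n\geq2$. Here I would argue that $V_n-V_{n-2}=-2\partial_x^2\bigl(\ln(\Psi^{(n-2)}_{\lambda_{n-2}}\Psi^{(n-1)}_{\lambda_{n-1}})\bigr)$ by two applications of the one-step formula, and that the space of solutions of $\mathcal{L}^{(n-2)}_{\lambda_{n-2}}$ (resp.\ at level $n-1$) is two-dimensional, so the logarithmic-derivative data can be chosen to make this expression nonzero; vanishing would force a rigid relation between the chosen particular solutions that a generic choice avoids. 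This last point is the one place where a small argument, rather than pure citation, is needed, and it is the part I would write out most carefully; everything else is a mechanical induction feeding off Theorem \ref{darth} and Propositions \ref{darbiso}--\ref{dareige}.
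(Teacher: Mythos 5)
Your proposal is correct and follows essentially the same route as the paper, whose entire proof is the one-line remark that items (i)--(ii) follow by induction on Theorem \ref{darth}, items (iii)--(iv) by induction on Proposition \ref{darbiso}, and item (v) by induction on Proposition \ref{dareige}. You supply considerably more detail than the authors do (verification of the hypotheses at each level, the telescoping of the potential formula, and a sketch for the non-degeneracy claim $V_n\neq V_{n-2}$, none of which the paper writes out), so your write-up is a faithful, fleshed-out version of the intended argument.
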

\begin{proof} By induction on theorem \ref{darth} we obtain i) and
ii). By induction on proposition \ref{darbiso} we obtain iii) and
iv). By induction on proposition \ref{dareige} we obtain v).

\end{proof}
\medskip

\begin{example}
Starting with $V=0$, the following potentials can be obtained
using Darboux iteration $\mathrm{DT}_n$ (see \cite{beev,blbose}).
$$
I)\,\,V_n={n(n-1)b^2\over(bx+c)^2},\quad
II)\,\,V_n={m^2n(n-1)(b^2-a^2)\over (a\cosh(mx)+b\sinh(mx))^2},
$$
$$
III)\,\,V_n={-4abm^2n(n-1)\over (ae^{mx}+be^{-mx})^2},\quad
IV)\,\,V_n={m^2n(n-1)(b^2+a^2)\over (a\cos(mx)+b\sin(mx))^2}.
$$ In particular for the rational potential given in \textit{I}), we have $K=K_n=\mathbb{C}(x)$
and for $\lambda_n=\lambda=0$, we have $$\Psi_0^{(n)}={c_1\over
(bx+c)^n}+c_2(bx+c)^{n+1}, \textrm{   so that   }
\mathrm{DGal}(L_0/K)=\mathrm{DGal}(L_0^{(n)}/K)=e,$$
$$\mathcal E(H^{(n)})=\mathrm{Vect}\left(1,x\partial_x-1,x^{2n+2}\partial_x-(n+1)x^{2n+1},{\partial_x\over
x^{2n} }+{n\over x^{2n+1}}\right),$$ whilst for $\lambda\neq 0$
and $\lambda_n=0$, the general solution $\Psi_\lambda^{(n)}$ is
given by
$$\Psi_\lambda^{(n)}(x)=c_1f_n(x,\lambda)h_n(\sin(\sqrt{\lambda}x)+c_2g_n(x,\lambda)j_n(\cos(\sqrt{\lambda}x),$$
where $f_n,g_n,h_n,j_n\in\mathbb{C}(x)$, so that
$$ \mathrm{DGal}(L_\lambda/K)=\mathrm{DGal}(L^{(n)}_\lambda/K)=\mathbb{G}_m,$$ and
$$\dim_{\mathbb{C}}\mathcal E(H-\lambda)=
\dim_{\mathbb{C}}\mathcal E(H^{(n)}-\lambda)=2.$$

\end{example}

\subsection{Crum Iteration}

\begin{proposition}[Galoisian version of CI$_n$]\label{crumit1}
Consider  $\mathcal L_\lambda$ given by $H\Psi=\lambda\Psi$,
$H=-\partial_x^2+V$, $V\in K$, such that
$\mathrm{Card}(\Lambda)>n$ for a fixed $n\in\mathbb{Z}_+$. Let
$\mathcal L_\lambda^{(n)}$ be given by
$H^{(n)}\Psi^{(n)}=\lambda\Psi^{(n)}$, where
$H^{(n)}=\partial_x^2+V_n$, $V_n\in K_n$. Let $\mathrm{CI}_n$ be
the transformation such that
$\mathcal{L}_\lambda\mapsto\mathcal{L}_\lambda^{(n)}$, $V\mapsto
V_n$, $(\Psi_{\lambda_1},\ldots,
\Psi_{\lambda_n},\Psi_\lambda)\mapsto\Psi_\lambda^{(n)}$, where
for $k=1,\ldots, n$ and the equation $\mathcal L_\lambda$,
 the function $\Psi_\lambda$ is   the general solution  for
$\lambda\neq\lambda_k$ and $\Psi_{\lambda_k}$ is a particular
solution for $\lambda=\lambda_k$. Then the following statements
 holds:
 \begin{description}
\item[i)] $\mathrm{CI}_n(\mathcal{L}_\lambda)=\mathcal L_\lambda^{(n)}$
where $\mathrm{CI}_n(V)=V_n=V-2\partial_x^2\left(\ln
W(\Psi_{\lambda_1},\ldots,\Psi_{\lambda_n})\right)$ and
$$\mathrm{CI}_n(\Psi_\lambda)=\Psi_\lambda^{(n)}={W(\Psi_{\lambda_1},\ldots,\Psi_{\lambda_n},\Psi_\lambda)\over
W(\Psi_{\lambda_1},\ldots,\Psi_{\lambda_n})},$$ where
$\Psi_\lambda^{(n)}$ is the general solution of $\mathcal
L_\lambda^{(n)}$.

\item[ii)] $K_{n}=K\langle \partial_x\left(\ln
W(\Psi_{\lambda_1},\ldots,\Psi_{\lambda_n})\right)\rangle$.
\item[iii)] $\mathrm{CI}_n$ is isogaloisian and virtually strongly isogaloisian. Furthermore, if $$\partial_x\left(\ln
W(\Psi_{\lambda_1},\ldots,\Psi_{\lambda_n})\right)\in K_n,$$ then
$\mathrm{CI}_n$ is strongly isogaloisian.
\item[iv)] The eigenrings of $H-\lambda$ and
$H^{(n)}-\lambda$ are isomorphic.
\end{description}
\end{proposition}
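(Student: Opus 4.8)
The plan is to realise the Crum iteration $\mathrm{CI}_n$ as an $n$-fold composition of Darboux transformations and then import everything from Proposition~\ref{darbit1}. Explicitly, I would set $\Psi^{(0)}_\lambda=\Psi_\lambda$, $V_0=V$, $K_0=K$, and for $k=1,\dots,n$ apply the Darboux transformation of Theorem~\ref{darth} to $\mathcal{L}^{(k-1)}_\lambda$ with seed $s_k$, the current image of the particular solution $\Psi_{\lambda_k}$ under the first $k-1$ transformations; the hypothesis $\mathrm{Card}(\Lambda)>n$ ensures that the seed energies $\lambda_1,\dots,\lambda_n\in\Lambda$ are all available and distinct from the running spectral value $\lambda$, and Proposition~\ref{darbit1} describes the effect of one such step. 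It is classical that iterating Darboux transformations reproduces Crum's formulas (see \cite{ornoro}, and originally \cite{cr,masa}); I would re-derive this from the already proven Theorem~\ref{darth}, so that the proposition becomes a corollary of Proposition~\ref{darbit1}.

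For item (i): iterating Proposition~\ref{darbit1}(ii), the general solution at level $k$ is $\Psi^{(k)}_\lambda=W(s_k,\Psi^{(k-1)}_\lambda)/W(s_k)=W(s_k,\Psi^{(k-1)}_\lambda)/s_k$, and the remaining seeds obey the same recursion. An induction based on the classical Wronskian-of-Wronskians identity
$$W\!\left(\frac{W(f_1,\dots,f_{k-1},g_1)}{W(f_1,\dots,f_{k-1})},\dots,\frac{W(f_1,\dots,f_{k-1},g_m)}{W(f_1,\dots,f_{k-1})}\right)=\frac{W(f_1,\dots,f_{k-1},g_1,\dots,g_m)}{W(f_1,\dots,f_{k-1})}$$
then collapses the nested ratios: writing $W_j:=W(\Psi_{\lambda_1},\dots,\Psi_{\lambda_j})$ and $W_0:=1$, one gets $s_k=W_k/W_{k-1}$ and $\Psi^{(n)}_\lambda=W(\Psi_{\lambda_1},\dots,\Psi_{\lambda_n},\Psi_\lambda)/W_n$, which is Crum's formula for $\mathrm{CI}_n(\Psi_\lambda)$. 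For the potential, Proposition~\ref{darbit1}(i) gives $V_k=V_{k-1}-2\partial_x^2\ln s_k$; summing over $k=1,\dots,n$ and telescoping the logarithms yields $V_n=V-2\partial_x^2\ln W_n$, and $\Psi^{(n)}_\lambda$ solves $\mathcal{L}^{(n)}_\lambda$ because each step preserves the Schr\"odinger shape by Theorem~\ref{darth}.

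For items (ii)--(iv): by Proposition~\ref{darbit1}(iii) one has $K_k=K_{k-1}\langle\partial_x\ln s_k\rangle$, and since $\sum_{k=1}^{n}\partial_x\ln s_k=\partial_x\ln W_n$ while each $\partial_x\ln s_k$ is algebraic over $K$ (the Riccati equation attached to the integrable equation $\mathcal{L}^{(k-1)}_{\lambda_k}$ has an algebraic solution, exactly as in the proof of Proposition~\ref{darbiso}), the tower $K\subseteq K_1\subseteq\cdots\subseteq K_n$ collapses; combining this with $V_n=V-2\partial_x(\partial_x\ln W_n)\in K_n$ one identifies $K_n=K\langle\partial_x\ln W_n\rangle$, which is (ii). For (iii), each of the $n$ Darboux steps is isogaloisian and virtually strongly isogaloisian by Proposition~\ref{darbit1}(iv), and a composition of transformations of this type is again of this type; if moreover $\partial_x\ln W_n\in K$ the whole tower equals $K$, every step is strongly isogaloisian, and hence so is $\mathrm{CI}_n$. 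Finally (iv) follows by applying Proposition~\ref{darbit1}(v) $n$ times, or, alternatively, from (iii) together with Remark~\ref{prop1}.

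The main obstacle I anticipate is the combinatorial heart of item (i): verifying that the successive Darboux transformations, each of which produces a \emph{ratio of Wronskians}, compose so that the nested expressions collapse to the single quotient $W(\Psi_{\lambda_1},\dots,\Psi_{\lambda_n},\Psi_\lambda)/W_n$. This rests entirely on the determinantal Wronskian-of-Wronskians identity above (a form of the Sylvester--Jacobi identity), and once it is in place the rest is a routine induction on Proposition~\ref{darbit1}. A secondary, more delicate point is the exact identification of the base field in (ii): one must check that adjoining the single logarithmic derivative $\partial_x\ln W_n$ to $K$ recovers $K_n$ and not merely a proper subfield, for which one uses that $V_n\in K_n$ pins down $\partial_x^2\ln W_n$ and that $\partial_x\ln W_n$ is algebraic over $K$.
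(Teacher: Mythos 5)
Your proposal is correct and follows essentially the same route as the paper, whose proof is simply the statement that (i) follows by induction on Theorem \ref{darth} and (ii)--(iv) by induction on Propositions \ref{darbiso} and \ref{dareige}. You carry out exactly that induction, merely making explicit the Wronskian-of-Wronskians (Jacobi--Sylvester) identity and the telescoping of the fields $K_k$ that the paper leaves implicit.
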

\begin{proof} By induction on theorem \ref{darth} we obtain i). By induction on proposition \ref{darbiso} we obtain ii) and
iii). By induction on proposition \ref{dareige} we obtain iv).

\end{proof}
\medskip

\begin{example}
To illustrate the Crum iteration with rational potentials, we
consider $V=\frac2{x^2}$. The general solution of $\mathcal
L_\lambda:= H\Psi=\lambda\Psi$ is $${c_1e^{kx}(kx-1)\over
x}+{c_2e^{-kx}(kx+1)\over x},\quad \lambda=-k^2,$$ the
eigenfunctions for $\lambda_1=-1$, and $\lambda_2=-4$, are
respectively given by $$\Psi_{-1}={e^{-x}(x+1)\over x},\quad
\Psi_{-4}={e^{-2x}(2x+1)\over 2x}.$$ Thus, we obtain
$$\mathrm{CI}_2(V)=V_2=\frac8{(2x+3)^2}$$ and the
general solution of $\mathcal L
_\lambda^{(2)}:=H^{(2)}\Psi^{(2)}=\lambda\Psi^{(2)}$ is
$$\mathrm{CI}_2(\Psi_\lambda)=\Psi_\lambda^{(2)}={\frac{c_1\left( k \left( 2\,x+3
 \right) -2 \right) {e^{kx}}}{2\,x+3}}+{\frac {c_2 \left( 2+k
 \left( 2\,x+3 \right)  \right) {e^{-kx}}}{4\,x+6}},\quad
 \lambda=-k^2.$$ The differential Galois groups and eigenrings are
 given by:
$$\mathrm{DGal}(L_0/K)=\mathrm{DGal}(L^{(2)}_0/K)=e,\quad \dim_{\mathbb{C}}\mathcal E(H)=\dim_{\mathbb{C}}\mathcal E(H^{(2)})=4,$$ and for
$\lambda\neq 0$
$$\mathrm{DGal}(L_\lambda/K)=\mathrm{DGal}(L^{(2)}_\lambda/K)=\mathbb{G}_m,\quad \dim_{\mathbb{C}}\mathcal
E(H-\lambda)=\dim_{\mathbb{C}}\mathcal E(H^{(2)}-\lambda)=2.$$
\end{example}
\begin{proposition}\label{sspp}
The supersymmetric partner potentials $V_\pm$ are rational
functions if and only if the superpotential $W$ is a rational
function.
\end{proposition}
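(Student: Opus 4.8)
The plan is to reduce everything to the two algebraic relations linking $W$ to $V_\pm$. From the defining formulas $V_\pm=\bigl(W/\sqrt2\bigr)^2\pm\tfrac1{\sqrt2}\partial_x\bigl(W/\sqrt2\bigr)$ one reads off $W^2=V_++V_-$ and $\partial_xW=V_+-V_-$ (with the obvious rescaling for any other normalization), equivalently $2V_\pm=W^2\pm\partial_xW$. Thus $W^2$ and $\partial_xW$ are $\mathbb{C}$-linear combinations of $V_+$ and $V_-$, and conversely $V_\pm$ is a $\mathbb{C}$-linear combination of $W^2$ and $\partial_xW$. So the proposition is exactly the elementary statement that a function lies in $\mathbb{C}(x)$ as soon as both its square and its derivative do.

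First I would dispatch the easy implication: if $W\in\mathbb{C}(x)$, then $W^2\in\mathbb{C}(x)$ and $\partial_xW\in\mathbb{C}(x)$, hence $V_\pm=\tfrac12(W^2\pm\partial_xW)\in\mathbb{C}(x)$. For the converse, assume $V_+,V_-\in\mathbb{C}(x)$ and set $g:=W^2=V_++V_-$ and $h:=\partial_xW=V_+-V_-$, both in $\mathbb{C}(x)$. Since $W$ is a root of $T^2-g$ it is algebraic over $\mathbb{C}(x)$, hence lives in a finite differential extension of $\mathbb{C}(x)$ where we may differentiate $W^2=g$ to get $2W\,\partial_xW=\partial_xg$, i.e. $2Wh=\partial_xg\in\mathbb{C}(x)$. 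If $h\neq0$ this yields $W=\partial_xg/(2h)\in\mathbb{C}(x)$; if $h=\partial_xW=0$ then $W$ is a constant, hence already in $\mathbb{C}(x)$. Either way $W$ is rational, proving the claim.

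The argument is short, so I do not expect a genuine obstacle; the only points needing a moment's care are the harmless normalization constants in the relations between $W$ and $V_\pm$ (which play no role) and the degenerate branch $\partial_xW=0$, handled as above. If desired, the key observation ``$W^2,\partial_xW\in\mathbb{C}(x)\Rightarrow W\in\mathbb{C}(x)$'' can be isolated as a small remark, since the same passage between a superpotential and its partner potentials recurs elsewhere.
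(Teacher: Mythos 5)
Your proof is correct and follows essentially the same route as the paper: extract $W^2$ and $\partial_xW$ as linear combinations of $V_\pm$, differentiate $W^2$ to get $W\partial_xW\in\mathbb{C}(x)$, and divide by $\partial_xW$. The only difference is that you explicitly treat the degenerate branch $\partial_xW=0$, which the paper's proof silently skips; that is a small but genuine improvement in rigor.
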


\begin{proof} The supersymmetric partner potentials $V_\pm$ are
written as $V_\pm=W^2\pm\partial_xW$. We start considering the
superpotential $W\in\mathbb{C}(x)$, so trivially we have 
$V_\pm\in\mathbb{C}(x)$. Now assuming that $V_\pm\in\mathbb{C}(x)$
we have  $\partial_xW\in\mathbb{C}(x)$ and
$W^2\in\mathbb{C}(x)$, which implies that
$W\partial_xW\in\mathbb{C}(x)$ and therefore $W\in\mathbb{C}(x)$.

\end{proof}
\medskip

\begin{corollary} The superpotential $W\in\mathbb{C}(x)$ if and only if $\mathrm{DT}$ is strong isogaloisian.
\end{corollary}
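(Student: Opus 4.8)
The plan is to read the equivalence off Proposition~\ref{darbiso} together with Definition~\ref{deftriso}, once the transformed coefficient field has been made explicit. Recall from the proof of Theorem~\ref{darth} that the transformation $\mathrm{DT}$ under consideration is attached to a particular solution $\Psi^{(-)}_{\lambda_1}$ of $\mathcal{L}_{\lambda_1}$ with $\lambda_1\in\Lambda$, and that its superpotential is $W=-\partial_x(\ln\Psi^{(-)}_{\lambda_1})$, which satisfies the Riccati equation $\partial_xW=V_--W^2$; here the base field is $K=\mathbb{C}(x)$, so $V_-\in K$. From the proof of Proposition~\ref{darbiso} the new coefficient field is $\widetilde{K}=K\langle W\rangle$ (and the new Picard--Vessiot field is $\widetilde{L}=L\langle W\rangle$). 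Since $V_-\in K$, the Riccati relation gives $\partial_xW\in K(W)$, hence inductively all derivatives of $W$ lie in $K(W)$, so in fact $\widetilde{K}=K\langle W\rangle=K(W)$.

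For the implication ``$W\in\mathbb{C}(x)\Rightarrow\mathrm{DT}$ strong isogaloisian'': if $W\in\mathbb{C}(x)=K$ then $\partial_x(\ln\Psi^{(-)}_{\lambda_1})=-W\in K$, which is exactly the hypothesis in the last sentence of Proposition~\ref{darbiso}; that proposition then gives $\widetilde{K}=K$ and $\widetilde{L}=L$, i.e.\ $\mathrm{DT}$ is strong isogaloisian.

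For the converse ``$\mathrm{DT}$ strong isogaloisian $\Rightarrow W\in\mathbb{C}(x)$'': by Definition~\ref{deftriso} a strong isogaloisian transformation has $\widetilde{K}=K$. Combined with $\widetilde{K}=K(W)$ from the first step, this forces $W\in K=\mathbb{C}(x)$. (Alternatively one may argue through Proposition~\ref{sspp}: $\widetilde{K}=K$ forces $V_+=\mathrm{DT}(V_-)\in\mathbb{C}(x)$, and then $V_\pm\in\mathbb{C}(x)$ yields $W\in\mathbb{C}(x)$.)

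The only step that is not a pure unwinding of earlier statements is the identity $\widetilde{K}=K(W)$ --- that is, the fact that adjoining $W$ (equivalently $\partial_x\ln\Psi^{(-)}_{\lambda_1}$) to $K$ already produces a differential field --- and this is immediate from $\partial_xW=V_--W^2$ with $V_-\in K$. Consequently no genuine obstacle is expected; the content is entirely carried by Proposition~\ref{darbiso}.
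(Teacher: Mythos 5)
Your proof is correct and follows essentially the same route as the paper: the forward direction is exactly the last clause of Proposition~\ref{darbiso}, and for the converse the paper uses precisely your parenthetical alternative ($\widetilde K=K$ forces $V_\pm\in\mathbb{C}(x)$, then Proposition~\ref{sspp} gives $W\in\mathbb{C}(x)$). Your primary converse argument, reading $W\in K$ directly off $\widetilde K=K\langle W\rangle=K(W)$, is an equally valid and slightly more direct repackaging of the same fact.
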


\begin{proof} Assume that the superpotential $W\in\mathbb{C}(x)$.
Thus, by proposition \ref{darbiso}, $\mathrm{DT}$ is strong
isogaloisian. Now, assume that $\mathrm{DT}$ is strong
isogaloisian. Thus, $V_\pm\in\mathbb{C}(x)$ and by proposition
\ref{sspp} we have  $W\in\mathbb{C}(x)$.

\end{proof}
\subsection{Shape Invariance}

The following definition is a partial Galoisian adaptation of the
original definition given in \cite{ge} ($K=\mathbb{C}(x)$). The
complete Galoisian adaptation is given when $K$ is any
differential field.\\

\begin{definition}[Rational Shape Invariant Potentials] Assume
$V_\pm(x;\mu)\in\mathbb{C}(x;\mu)$, where $\mu$ is a family of
parameters. The potential $V=V_-\in\mathbb{C}(x)$ is called a
rational shape invariant potential with respect to $\mu$ and
$E=E_n$ 
if there exists $f$ such that
$$V_+(x;a_0)=V_-(x;a_1)+R(a_1),\quad a_1=f(a_0),\quad E_n=\sum_{k=2}^{n+1}R(a_k),\quad E_0=0.$$
\end{definition}
\medskip

\begin{remark}\label{remshape} We propose the following steps to check whether $V\in\mathbb{C}(x)$ is shape
invariant.

\begin{description}
\item[Step 1.] Introduce parameters in $W(x)$ to obtain $W(x;\mu)$,
write $V_\pm(x;\mu)=W^2(x;\mu)\pm\partial_xW(x;\mu)$, and replace
$\mu$ by $a_0$ and $a_1$.
\item [Step 2.] Obtain polynomials $\mathcal P\in\mathbb{C}[x;a_0,a_1]$ and $\mathcal
Q\in\mathbb{C}[x;a_0,a_1]$ such that
$$\partial_x(V_+(x;a_0)-V_-(x;a_1))={\mathcal P(x;a_0,a_1)\over \mathcal Q(x;a_0,a_1)}.$$
\item [Step 3.] Set $\mathcal P(x;a_0,a_1)\equiv 0$, as polynomial in $x$, to obtain $a_1$ in function of $a_0$, i.e.,
$a_1=f(a_0)$. Also obtain $R(a_1)=V_+(x;a_0)-V_-(x;a_1)$ and
verify that exists $k\in\mathbb{Z}^+$ such that
$R(a_1)+\cdots+R(a_k)\neq 0$.
\end{description}

\end{remark}
\medskip

\begin{example} Consider the superpotential of the three
dimensional harmonic oscillator $W(r;\ell)=r-\frac{\ell+1}{r}$. By
step 1, the supersymmetric partner potentials are
$$V_-(r;\ell)=r^2+\frac{\ell(\ell+1)}{r^2}-2\ell-3,\quad
V_+(r;\ell)=r^2+\frac{(\ell+1)(\ell+2)}{r^2}-2\ell-1.$$ By step 2,
we have $\partial_r(V_+(r;a_0)-V_-(r;a_1))=- 2{a_0^2 + 3a_0 -
a_1^2 - a_1 + 2\over r^3}$. By step 3, $(a_0+1)(a_0+2)
=a_1(a_1+1)$, so that $a_1=f(a_0)=a_0+1$, $a_n=f(a_{n-1})=a_0+n$,
$R(a_1)=2$. Thus, we obtain the energy levels $E_n=2n$ and the
wave functions $\Psi^{(-)}_n(r;\ell)=A^\dagger(r;\ell)\cdots
A^\dagger(r;\ell+n-1)\Psi^{(-)}_0(r;\ell+n)$, compare with \cite{duka}.\\

\end{example}
By theorem \ref{darth} and propositions \ref{darbiso},
\ref{dareige} and \ref{sspp} we have the following result.

\begin{theorem}\label{thsip} Consider $\mathcal
L_n:=H\Psi^{(-)}=E_n\Psi^{(-)}$ with Picard-Vessiot extension
$L_n$, where $n\in\mathbb{Z}_+$. If $V=V_-\in\mathbb{C}(x)$ is a
shape invariant potential with respect to $E=E_n$, then
$$\mathrm{DGal}(L_{n+1}/K)=\mathrm{DGal}(L_{n}/K),\quad \mathcal E(H-E_{n+1})\simeq \mathcal E(H-E_n),\quad n>0.$$
\end{theorem}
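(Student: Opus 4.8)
The plan is to reduce Theorem \ref{thsip} to the results already established for a single Darboux transformation. The key observation is that shape invariance is precisely the statement that the Darboux partner $H_+$ of the shifted operator $H-E_{\lambda_1}$ (with $\lambda_1 = E_0 = 0$) coincides, up to the constant shift $R(a_1)$ and a relabeling of parameters, with the original operator $H$ acting on the next family member. More concretely, I would first unwind the definitions: with $V=V_-$ shape invariant we have $V_+(x;a_0)=V_-(x;a_1)+R(a_1)$, so the Schr\"odinger equation $H_+\Psi^{(+)}=\lambda\Psi^{(+)}$ for the potential $V_+(x;a_0)$ is, after the constant shift, the same equation as $H\Psi^{(-)}=(\lambda-R(a_1))\Psi^{(-)}$ for the potential $V_-(x;a_1)$. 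Since a constant shift $\lambda\mapsto\lambda-R(a_1)$ in the spectral parameter does not change the coefficient field or the Picard--Vessiot extension (it only translates the eigenvalue bookkeeping, exactly as in the passage $H_\pm\mapsto\widetilde H_\pm=H_\pm-\lambda_1$ used in the proof of Theorem \ref{darth}), the differential Galois group and the eigenring of $H_+-\lambda$ at level $\lambda=E_{n+1}$ are identified with those of $H-E_n$ for the family member with parameter $a_1$.

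Next I would invoke Proposition \ref{darbiso}: the Darboux transformation $\mathrm{DT}$ sending $\mathcal L_\lambda$ (for $H_-=H$, potential $V_-(x;a_0)$, using the ground state $\Psi^{(-)}_0$ with $\lambda_1=E_0=0$) to $\widetilde{\mathcal L}_\lambda$ (for $H_+$, potential $V_+(x;a_0)$) is isogaloisian and virtually strongly isogaloisian, and since $V\in\mathbb{C}(x)$ and (by Proposition \ref{sspp}, because $V_\pm\in\mathbb{C}(x)$) the superpotential $W$ is rational, we have $\partial_x(\ln\Psi^{(-)}_0)=-W\in K=\mathbb{C}(x)$, so $\mathrm{DT}$ is in fact \emph{strongly} isogaloisian: $\mathrm{DGal}(\widetilde L_\lambda/K)=\mathrm{DGal}(L_\lambda/K)$ with $\widetilde K=K$. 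Likewise Proposition \ref{dareige} gives $\mathcal E(H_+-\lambda)\simeq\mathcal E(H_--\lambda)=\mathcal E(H-\lambda)$. Combining with the previous paragraph's identification of $H_+$ (potential $V_+(x;a_0)$) at spectral value $\lambda$ with $H$ (potential $V_-(x;a_1)$) at spectral value $\lambda-R(a_1)$, and then iterating through the chain of parameters $a_0,a_1,\dots,a_n$ via Proposition \ref{darbit1} (the Galoisian version of $\mathrm{DT}_n$), one telescopes: $\mathrm{DGal}(L_{n+1}/K)=\mathrm{DGal}(L_n/K)$ and $\mathcal E(H-E_{n+1})\simeq\mathcal E(H-E_n)$ for $n>0$, where $E_n=\sum_{k=2}^{n+1}R(a_k)$ are exactly the shape-invariant energy levels supplied by the definition and by the Gendenshte\"\i n theorem recalled in the introduction.

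The step I expect to be the main obstacle is the careful bookkeeping of the parameter shift and the constant energy shift so that the identification ``$H_+$ at level $\lambda$ for parameter $a_0$ equals $H$ at level $\lambda-R(a_1)$ for parameter $a_1$'' is made precise at the level of differential fields and Picard--Vessiot extensions, rather than merely as an equality of potentials. In particular one must check that replacing the parameter $a_0$ by $a_1=f(a_0)$ is harmless because the family $V_-(x;\mu)\in\mathbb{C}(x;\mu)$ has coefficient field $\mathbb{C}(x)$ for every admissible value of the parameter, so that all the $\mathcal L_n$ share the same base field $K=\mathbb{C}(x)$; and that the Wronskian/eigenstate formula $\Psi^{(-)}_n(x;a_1)=\prod_{k=1}^n A^+(x;a_k)\Psi^{(-)}_0(x;a_{n+1})$ from Gendenshte\"\i n's theorem is exactly the $n$-fold iterated Darboux transform, so that Proposition \ref{darbit1}(iv)--(v) applies verbatim. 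Once this dictionary is in place the proof is a one-line consequence of Theorem \ref{darth} and Propositions \ref{darbiso}, \ref{dareige}, \ref{darbit1}, and \ref{sspp}, as announced.
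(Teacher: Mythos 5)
Your proposal follows exactly the route the paper takes: Theorem \ref{thsip} appears there with no proof body beyond the single sentence ``By theorem \ref{darth} and propositions \ref{darbiso}, \ref{dareige} and \ref{sspp} we have the following result,'' and your argument is precisely an unwinding of that citation --- shape invariance identifies $H_+(x;a_0)$ with $H_-(x;a_1)+R(a_1)$ up to a constant shift of the spectral parameter, Proposition \ref{sspp} makes the superpotential rational so that by Proposition \ref{darbiso} the Darboux transformation is strongly isogaloisian over $K=\mathbb{C}(x)$, and Proposition \ref{dareige} transports the eigenrings. The one delicate point you flag is real (the telescoping identifies the data at level $E_{n+1}$ for parameter $a_0$ with the data at level $E_n$ for the \emph{shifted} parameter $a_1$, so one still needs to compare corresponding levels across members of the parameter family to close the loop), but the paper supplies no argument for this either, so your write-up is if anything more explicit than the original.
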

\medskip

\begin{remark} 
As $W\in\mathbb{C}(x)$, a differential automorphism $\sigma$ commutes
with the raising and lowering operators $A$ and $A^\dagger$.
Furthermore the wave functions $\Psi^{(-)}_n$
can be written as $\Psi^{(-)}_n=P_nf_n\Psi^{(-)}_0$, where $P_n$
is a polynomial of degree $n$ in $x$ and $f_n$ is a sequence of
functions with $f_0(x)=1$ such as was shown in the case of
Harmonic oscillators and Coulomb potentials.
\end{remark}


\section{Schr\"odinger Equations with Rational Potentials}\label{susyrapo} 
Along this section we take  $K=\mathbb{C}(x)$ to be our
coefficient field. So all our differential operators will have coefficients in $K=\mathbb{C}(x)$.

\subsection{Polynomial Potentials}
 We start by considering the Schr\"odinger equation
\eqref{equ1} with polynomial potentials, i.e.
$V\in\mathbb{C}[x]$, see \cite{bo,va1}. For simplicity and without
loss of generality, we consider monic polynomials  (if the polynomial is $c_nx^n+\ldots+c_1x+ c_0$,
apply the transformation $x\mapsto \sqrt[n+2]{c_n} x$).

When $V$ is a polynomial of odd degree, is well known that the
differential Galois group of such a Schr\"odinger equation
\eqref{equ1} is
$\mathrm{SL}(2,\mathbb{C})$, see \cite{ko}.\\

We present here the complete result for the Schr\"odinger equation
\eqref{equ1} with non-constant polynomial potential (Theorem
\ref{polynint}), see also \cite[\S 2]{acbl}. The following lemma
is useful for our purposes.\\

\begin{lemma}[Completing Squares, \cite{acbl}]\label{cosq}
Every even degree monic polynomial of  can be written in one only
way completing squares, that is,
\begin{equation}\label{square}
Q_{2n}(x)=x^{2n}+\sum_{k=0}^{2n-1}q_kx^k=\left(x^n+\sum_{k=0}^{n-1}a_kx^k\right)^2+\sum_{k=0}^{n-1}b_kx^k,
\end{equation}
where
$$a_{n-1}={q_{2n-1}\over 2},\quad a_{n-2}={q_{2n-2}-a^2_{n-1}\over 2},\quad a_{n-3}={q_{2n-3}-2a_{n-1}a_{n-2}\over 2},\cdots,$$

$$a_0={q_n-2a_1a_{n-1}-2a_2a_{n-2}-\cdots\over 2},\quad
b_0=q_0-a_0^2,\quad b_1=q_1-2a_0a_1,\quad \cdots,$$
$$b_{n-1}=q_{n-1}-2a_0a_{n-1}-2a_1a_{n-2}-\cdots.$$
\end{lemma}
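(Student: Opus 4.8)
The plan is to determine all the unknowns $a_0,\dots,a_{n-1},b_0,\dots,b_{n-1}$ by expanding the right-hand side of \eqref{square} and matching coefficients of like powers of $x$, and to observe that the resulting system is triangular, hence has a unique solution given exactly by the stated formulas. Write $P(x)=x^n+\sum_{k=0}^{n-1}a_kx^k=\sum_{k=0}^{n}p_kx^k$ with $p_n=1$ and $p_k=a_k$ for $k<n$, so that $P(x)^2=\sum_{m=0}^{2n}c_mx^m$ with $c_m=\sum_{i+l=m}p_ip_l$. Then \eqref{square} is equivalent to the three requirements $c_{2n}=1$, $c_m=q_m$ for $n\le m\le 2n-1$, and $b_m=q_m-c_m$ for $0\le m\le n-1$.

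First I would dispose of the leading coefficient: $c_{2n}=p_n^2=1$, which is automatic from monicity and imposes nothing. The key step is then the triangular structure of the equations $c_{2n-j}=q_{2n-j}$ for $1\le j\le n$. In the sum $c_{2n-j}=\sum_{i+l=2n-j}p_ip_l$ every index obeys $i,l\ge n-j$; isolating the two terms with $\{i,l\}=\{n,n-j\}$ gives
$$c_{2n-j}=2a_{n-j}+\sum_{\substack{i+l=2n-j\\ n-j<i,l<n}}p_ip_l,$$
where the remaining sum involves only $a_{n-1},\dots,a_{n-j+1}$ and not $a_{n-j}$ itself. Hence $c_{2n-j}=q_{2n-j}$ determines $a_{n-j}$ uniquely in terms of $q_{2n-j}$ and the previously computed coefficients. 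Running $j=1,2,\dots,n$ produces precisely the recursion of the lemma: $a_{n-1}=q_{2n-1}/2$ (the remaining sum being empty), $a_{n-2}=(q_{2n-2}-a_{n-1}^2)/2$, $a_{n-3}=(q_{2n-3}-2a_{n-1}a_{n-2})/2$, and so on down to $a_0=(q_n-2a_1a_{n-1}-2a_2a_{n-2}-\cdots)/2$.

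Once $a_0,\dots,a_{n-1}$ are pinned down, $P(x)$ is fixed, so the coefficients $c_0,\dots,c_{n-1}$ are determined, and the equations $b_m=q_m-c_m$ then define $b_0,\dots,b_{n-1}$ uniquely. Since $c_0=a_0^2$, $c_1=2a_0a_1$, $c_2=2a_0a_2+a_1^2$, and so forth, these are exactly the expressions $b_0=q_0-a_0^2$, $b_1=q_1-2a_0a_1,\dots$ in the statement. This yields simultaneously existence and uniqueness of the decomposition \eqref{square}.

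I do not expect a genuine obstacle here; the single point demanding care is the index bookkeeping in the expansion of $P(x)^2$, namely verifying that the coefficient of $x^{2n-j}$ is indeed $2a_{n-j}$ plus an expression in the strictly higher-index $a$'s only, so that the recursion is well posed and non-circular. The rest is a routine (if slightly tedious) check that the closed forms displayed in the lemma coincide term by term with this recursion.
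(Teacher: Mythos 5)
Your proof is correct. The paper gives no internal argument for this lemma --- it simply cites Lemma 2.4 of \cite{acbl} --- so there is nothing to compare step by step; your coefficient-matching argument, resting on the observation that the equations $c_{2n-j}=q_{2n-j}$ ($1\le j\le n$) form a triangular system in $a_{n-1},\dots,a_0$, is the standard (and essentially the only) proof, and the one index check you flag does go through: in $c_{2n-j}=\sum_{i+l=2n-j}p_ip_l$ every ordered pair other than $(n,n-j)$ and $(n-j,n)$ satisfies $n-j<i,l<n$, so each step determines $a_{n-j}$ from strictly higher-index $a$'s and the recursion is non-circular, yielding both existence and uniqueness.
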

\medskip
\begin{proof} See lemma 2.4 in \cite[p. 275]{acbl}.

\end{proof}
\medskip

We remark that $V(x)$ as in equation \eqref{square} can be written
in terms of the superpotential $W(x)$, i.e.,
$V(x)=W^2(x)-\partial_xW(x)$, when
$$nx^{n-1}+\sum_{k=1}^{n-1}ka_kx^{k-1}=-\sum_{k=0}^{n-1}b_kx^k$$
and $W(x)$ is given by $$x^n+\sum_{k=0}^{n-1}a_kx^k.$$

The following theorem also can be found in \cite[\S 2]{acbl}, see
also \cite{acbl2}. Here we present a quantum mechanics adapted
version.\\

\begin{theorem}[Polynomial potentials and Galois groups, \cite{acbl}]\label{polynint}
Let us consider the Schr\"odinger equation \eqref{equ1}, with
$V(x)\in\mathbb{C}[x]$ a polynomial of degree $k>0$. Then,  its
differential Galois group $\mathrm{DGal}(L_{\lambda}/K)$ falls in
one of the following cases:
\begin{enumerate}
\item $\mathrm{DGal}(L_{\lambda}/K)=\mathrm{SL}(2,\mathbb{C})$,
\item $\mathrm{DGal}(L_{\lambda}/K)=\mathbb{B}$,
\end{enumerate}
and the eigenring  of $H-\lambda$ is trivial, i.e., $\mathcal
E(H-\lambda)=\mathrm{Vect}(1)$. Furthermore,
$\mathrm{DGal}(L_{\lambda}/K)=\mathbb{B}$ if and only if the
following conditions hold:
\begin{enumerate}
\item $V(x)-\lambda$ is a polynomial of degree $k=2n$ writing in the form of equation
\eqref{square}.
\item $b_{n-1}-n$ or $-b_{n-1}-n$ is a positive even number $2m$, $m\in\mathbb{Z}_+$.
\item There exist a monic polynomial $P_m$ of degree $m$,
satisfying {\small \begin{displaymath}
\partial_x^2P_m + 2\left(x^n+\sum_{k=0}^{n-1}a_kx^k\right)\partial_xP_m +
\left(nx^{n-1}+\sum_{k=0}^{n-2}(k+1)a_{k+1}x^{k} -
\sum_{k=0}^{n-1}b_kx^k\right)P_m = 0,
\end{displaymath}}
or {\small \begin{displaymath}
\partial_x^2P_m - 2\left(x^n+\sum_{k=0}^{n-1}a_kx^k\right)\partial_xP_m -
\left(nx^{n-1}+\sum_{k=0}^{n-2}(k+1)a_{k+1}x^{k} +
\sum_{k=0}^{n-1}b_kx^k\right)P_m = 0.
\end{displaymath}}

\end{enumerate}
In such cases, the only possibilities for eigenfunctions with
rational superpotentials are given by
$$
\Psi_\lambda=P_me^{f(x)},\quad \textit{or}\quad
\Psi_\lambda=P_me^{-f(x)},\quad \textit {   where
}f(x)={x^{n+1}\over n+1}+\sum_{k=0}^{n-1}{a_kx^{k+1}\over k+1}.
$$
\end{theorem}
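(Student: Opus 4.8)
The plan is to run Kovacic's algorithm on the reduced equation $\partial_x^2\Psi = r\Psi$, where $r:=V(x)-\lambda\in\mathbb{C}[x]$ has degree $k>0$, so that $x=\infty$ is the unique singular point and it is irregular. Since there is no first-order term the Wronskian is constant, hence $\mathrm{DGal}(L_\lambda/K)\subseteq\mathrm{SL}(2,\mathbb{C})$, and a priori the possibilities are: $\mathrm{DGal}$ reducible (Kovacic case 1, conjugate into $\mathbb{B}$), imprimitive (a subgroup of $\mathbb{D}_\infty$, case 2), primitive finite (case 3), or $\mathrm{SL}(2,\mathbb{C})$ (case 4).

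First I would discard everything except ``reducible'' and ``$\mathrm{SL}(2,\mathbb{C})$''. If $\mathrm{DGal}$ is finite, every solution of $\partial_x^2\Psi=r\Psi$ is algebraic over $\mathbb{C}(x)$; but solutions are entire (the equation is regular at every finite point), an entire algebraic function is a polynomial, and a nonzero polynomial cannot solve $\Psi''=r\Psi$ with $\deg r\ge1$ for degree reasons — so no finite group occurs. If $\mathrm{DGal}=\mathbb{D}_\infty$, the Riccati equation $\omega'+\omega^2=r$ has a solution $\omega$ algebraic of degree exactly $2$ over $\mathbb{C}(x)$; but every Riccati solution equals $\Psi'/\Psi$ for an entire $\Psi$, hence is single-valued and meromorphic on $\mathbb{C}$, whereas a genuinely degree-$2$ algebraic function has nontrivial monodromy permuting its two branches — contradiction. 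This leaves $\mathrm{DGal}=\mathrm{SL}(2,\mathbb{C})$ or $\mathrm{DGal}$ reducible, which (together with the eigenring step below) yields alternative (1)/(2).

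Next, assume $\mathrm{DGal}$ is reducible, so there is a solution with $\omega:=\Psi'/\Psi\in\mathbb{C}(x)$ satisfying $\omega'+\omega^2=V-\lambda$. Since $V-\lambda$ has no finite poles, a Laurent analysis at a finite pole $x_0$ of $\omega$ (comparing the orders of $\omega'$ and $\omega^2$) forces that pole to be simple with residue $1$; hence $\omega=P_m'/P_m+g$ with $P_m$ monic squarefree of degree $m$ and $g\in\mathbb{C}[x]$, i.e. $\Psi=P_m e^{f}$ with $f:=\int g\in\mathbb{C}[x]$. Substituting into $\Psi''=(V-\lambda)\Psi$ gives
\[
P_m''+2f'P_m'+\big(f''+(f')^2\big)P_m=(V-\lambda)P_m,
\]
and dividing by $P_m$ (legitimate since $\gcd(P_m,P_m')=1$) shows $(f')^2$ and $V-\lambda$ agree in degrees $n,\dots,2n$ once one checks $\deg f\ge1$, then $\deg(V-\lambda)=2n$ and $\deg f=n+1$. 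Writing $V-\lambda=W_0^2+B$ with $W_0=x^n+\sum_{j<n}a_jx^j$ and $B=\sum_{j<n}b_jx^j$ via Lemma~\ref{cosq}, the fact that $f'$ and $W_0$ are both $\pm$-monic of degree $n$ and have equal high-degree part forces $f'=\pm W_0$, hence $f=\pm\big(\tfrac{x^{n+1}}{n+1}+\sum_{j<n}\tfrac{a_jx^{j+1}}{j+1}\big)$, the $f(x)$ of the statement. The remaining identity becomes precisely one of the two displayed second-order equations for $P_m$ in condition (3), and vanishing of the coefficient of $x^{m+n-1}$ in it gives $b_{n-1}-n=2m$ (resp. $-b_{n-1}-n=2m$), which is condition (2); condition (1) has already appeared. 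Conversely, if (1)--(3) hold, $P_m e^{\pm f}$ is directly checked to solve the equation, so $\mathrm{DGal}$ is reducible. Finally, a reducible $\mathrm{DGal}$ with $\deg r\ge1$ cannot be $\mathbb{G}_a$ or finite (no polynomial solution exists) and cannot be diagonal $\mathbb{G}_m$ (a second solution $\widetilde P e^{\mp f}$ would force both $b_{n-1}-n$ and $-b_{n-1}-n$ to be non-negative even integers, impossible for $n\ge1$); hence $\mathrm{DGal}=\mathbb{B}$, and the only solutions with $\Psi'/\Psi\in\mathbb{C}(x)$ are the $P_m e^{\pm f}$.

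For the eigenring I would use that $\mathcal{E}(H-\lambda)$ is isomorphic to the algebra of $\mathrm{DGal}$-equivariant endomorphisms of the two-dimensional solution space: for $\mathrm{DGal}=\mathrm{SL}(2,\mathbb{C})$ this is $\mathbb{C}$ by irreducibility, and for $\mathrm{DGal}=\mathbb{B}$ a one-line matrix computation shows the commutant of $\mathbb{B}$ in $\mathrm{GL}(2,\mathbb{C})$ is scalar; in both cases $\mathcal{E}(H-\lambda)=\mathrm{Vect}(1)$. The step I expect to be the main obstacle is making the exclusion of Kovacic's case 2 and the ``$\mathrm{DGal}=\mathbb{B}$ exactly'' determination fully airtight, together with controlling the low-degree boundary cases ($\deg f$ or $m$ small) in the degree count of the third step.
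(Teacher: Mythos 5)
Your argument is, in substance, the intended one: the paper itself defers the proof to \cite{acbl}, where it is a direct run of Kovacic's algorithm on $\partial_x^2\Psi=(V-\lambda)\Psi$. There, cases 2 and 3 are discarded because $\circ(r_\infty)=-2n<2$ makes the relevant sets empty, and case 1 under condition $(\infty_3)$ yields precisely $[\sqrt r]_\infty=x^n+\sum_{k<n}a_kx^k$, $\alpha_\infty^{\pm}=\tfrac12(\pm b_{n-1}-n)$ and the recurrence \eqref{recu1}, i.e.\ conditions (1)--(3) of the statement. Your hand-made substitutes for the algorithmic exclusions (entireness of solutions to kill finite groups, single-valuedness of Riccati solutions on $\mathbb{C}$ to kill $\mathbb{D}_\infty$), the local analysis forcing $\omega=\partial_xP_m/P_m\pm W_0$ with $P_m$ squarefree, the degree count giving $k=2n$ and $f'=\pm W_0$, and the identification $\mathcal E(H-\lambda)\simeq\mathrm{End}_{\mathrm{DGal}}(\mathrm{Sol})$ with the Schur/commutant computation are all correct.

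The one step to tighten is the final ``$\mathrm{DGal}=\mathbb{B}$ exactly'' claim. Among the reducible algebraic subgroups of $\mathrm{SL}(2,\mathbb{C})$ listed in Section 2.2 you exclude the finite ones, $\mathbb{G}_a$ and the diagonal group $\mathbb{G}_m$, but $\mathbb{G}^{\{n\}}$ with $n\geq 2$ is infinite, non-diagonal and not unipotent, so none of your three exclusions applies to it. It is ruled out by the same transcendence observation you already rely on: for such a group the unique invariant line is the one on which the diagonal characters are $n$-th roots of unity, so the semi-invariant solution would satisfy $(P_me^{\pm f})^{n}\in\mathbb{C}(x)$, hence $e^{nf}\in\mathbb{C}(x)$, which is impossible because $f$ is a nonconstant polynomial. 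With that sentence added, the case analysis is complete and the proof goes through.
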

\medskip
\begin{proof} See theorem 2.5 in \cite[p. 276]{acbl}.

\end{proof}
\medskip
An easy consequence of the above theorem is the following.\\

\begin{corollary} Assume that $V(x)$ is an algebraically solvable polynomial
potential. Then $V(x)$ is of degree $2$.
\end{corollary}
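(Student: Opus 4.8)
The plan is to feed the structural dichotomy of Theorem \ref{polynint} into a finite–dimensionality argument. We may assume $V$ is non-constant. First I would dispose of the odd-degree case: if $\deg V$ is odd then $V(x)-\lambda$ has odd degree for \emph{every} $\lambda$, so hypothesis (1) of Theorem \ref{polynint} fails for all $\lambda$, forcing $\mathrm{DGal}(L_\lambda/K)=\mathrm{SL}(2,\mathbb{C})$ for all $\lambda$ (see also \cite{ko}); hence $\Lambda=\emptyset$, contradicting that $V$ is algebraically solvable (i.e. that $\Lambda$ is infinite). So $\deg V=2n$ is even, and after a rescaling of $x$ (as in the opening of this section) we may take $V$ monic. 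It then remains to exclude $n\geq 2$, so from now on I assume $n\geq 2$ and seek a contradiction with $\Lambda$ infinite.

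The crucial point is that, when $n\geq 2$, the ``quantization data'' in Theorem \ref{polynint} do not move with $\lambda$. Passing from $V$ to $V-\lambda$ only alters the constant term $q_0$ of the polynomial, so by the explicit formulas of Lemma \ref{cosq} for the completed-square form \eqref{square}, $V(x)-\lambda=\bigl(x^n+\sum a_kx^k\bigr)^2+\sum b_kx^k$, every $a_k$ and every $b_k$ with $k\geq 1$ depends only on $q_1,\dots,q_{2n-1}$, hence is independent of $\lambda$; only $b_0$ moves, and affinely: $b_0=\beta-\lambda$ for a constant $\beta$. Since $n\geq 2$ we have $n-1\geq 1$, so $b_{n-1}$ is $\lambda$-independent; consequently, by hypothesis (2) of Theorem \ref{polynint}, the positive integer $m$ and the sign $\varepsilon\in\{+1,-1\}$ selecting $e^{\varepsilon f}$ are \emph{the same} for every $\lambda\in\Lambda$ (and if no such $m$ exists, $\Lambda=\emptyset$, a contradiction). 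Thus, by Theorem \ref{polynint}, for each $\lambda\in\Lambda$ there is a monic polynomial $P^{(\lambda)}$ of degree $m$ — so $P^{(\lambda)}$ lies in the $(m+1)$-dimensional space $\mathbb{C}[x]_{\leq m}$ of polynomials of degree $\leq m$ — with $\Psi_\lambda:=P^{(\lambda)}e^{\varepsilon f}$ a non-zero solution of $-\partial_x^2\Psi_\lambda+V\Psi_\lambda=\lambda\Psi_\lambda$, where $f(x)=\tfrac{x^{n+1}}{n+1}+\sum_{k=0}^{n-1}\tfrac{a_kx^{k+1}}{k+1}$ is itself independent of $\lambda$.

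Next I would conjugate away the exponential factor. Set $g:=f'=x^n+\sum a_kx^k$ and $\beta_0(x):=\beta+\sum_{k=1}^{n-1}b_kx^k$, so that $V(x)=g(x)^2+\beta_0(x)$ with $g,\beta_0$ both $\lambda$-independent. Differentiating $Pe^{\varepsilon f}$ twice yields, for any polynomial $P$,
\[
e^{-\varepsilon f}\bigl(-\partial_x^2+V\bigr)\bigl(e^{\varepsilon f}P\bigr)=M(P),\qquad M:=-\partial_x^2-2\varepsilon g\,\partial_x+\bigl(\beta_0-\varepsilon g'\bigr),
\]
a fixed (that is, $\lambda$-free) second-order linear differential operator with polynomial coefficients. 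Hence $\Psi_\lambda=P^{(\lambda)}e^{\varepsilon f}$ solves the Schr\"odinger equation at level $\lambda$ if and only if $M\bigl(P^{(\lambda)}\bigr)=\lambda P^{(\lambda)}$. So $\{P^{(\lambda)}:\lambda\in\Lambda\}$ is a family of eigenvectors of the single operator $M$ with pairwise distinct eigenvalues, hence linearly independent over $\mathbb{C}$; but all of them lie in $\mathbb{C}[x]_{\leq m}$, which has dimension $m+1$. Therefore $\mathrm{Card}(\Lambda)\leq m+1$, contradicting the infinitude of $\Lambda$. Hence $n=1$, i.e. $\deg V=2$.

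The only step needing real care is the bookkeeping in the second paragraph: one must check, from the formulas in Lemma \ref{cosq}, that for $n\geq 2$ the triple $(\varepsilon,m,f)$ is genuinely independent of $\lambda$ — everything else is a one-line computation together with the elementary fact that eigenvectors for distinct eigenvalues are independent. It is instructive that this is precisely where the argument collapses for $n=1$: there $b_{n-1}=b_0$ \emph{does} depend on $\lambda$, so $m$ ranges over an infinite set and no contradiction is produced — in agreement with the fact that genuinely algebraically solvable polynomial potentials (harmonic oscillator, Weber's equation; cf. the examples above) do have degree $2$.
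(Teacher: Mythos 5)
Your proof is correct, and its core pivot is the same as the paper's: writing $V(x)-\lambda$ in the completed-square form of Lemma \ref{cosq} and observing that $\lambda$ enters only through $b_0$, so that for $n\geq 2$ the coefficient $b_{n-1}$ in condition (2) of Theorem \ref{polynint} is independent of $\lambda$. The paper's own proof essentially stops there ("$b_{n-1}$ must vary with $\lambda$, hence $n=1$"), leaving implicit why a $\lambda$-independent $b_{n-1}$ — equivalently a fixed degree $m$ and fixed sign $\varepsilon$ — actually forces $\Lambda$ to be finite. You supply exactly that missing step, and cleanly: conjugating $-\partial_x^2+V$ by the fixed factor $e^{\varepsilon f}$ produces a single $\lambda$-free operator $M$ whose eigenvectors $P^{(\lambda)}$, being associated to distinct eigenvalues, are linearly independent yet confined to the $(m+1)$-dimensional space $\mathbb{C}[x]_{\leq m}$, so $\mathrm{Card}(\Lambda)\leq m+1$. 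This is consistent with how the paper itself bounds $\Lambda$ in its worked examples (the degree-$(n+1)$ polynomial $Q_{n+1}(\lambda)$ for the quartic oscillator), and you also make explicit the odd-degree and non-monic reductions that the paper takes for granted. In short: same route, but yours is the complete version of the argument.
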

\medskip
\begin{proof} Writing $V(x)-\lambda$ in the form of equation
\eqref{square} we see that $b_{n-1}-n=2m$ or $-b_{n-1}-n=2m$,
where $m\in\mathbb{Z}_+$. Thus, the integrability of the
Schr\"odinger equation  with $\mathrm{Card}(\Lambda)>1$ is
obtained when $b_{n-1}$ is constant, so $n=1$.
\end{proof}
\medskip

\begin{remark}\label{rembs} Given a polynomial potential $V(x)$
 such that $\mathrm{Spec}_p(H)\cap\Lambda\neq \emptyset$, we can obtain bound
states and normalized wave functions if and only if the potential
$V(x)$ is a polynomial of degree $4n+2$. Furthermore, one
integrability condition of $H\Psi=\lambda\Psi$ for
$\lambda\in\Lambda$ is that $b_{2n}$ must be an odd integer. In
particular, if the potential
$$V(x)=x^{4n+2n}+\mu x^{2n},\quad n>0$$ is
a quasi-exactly solvable, then $\mu$ is an odd integer. For this
kind of potentials, we obtain bound states only when $\mu$ is a
negative odd integer.\\

On another hand, the non-constant polynomial potentials $V(x)$ of
degree $4n$ are associated to non-hermitian Hamiltonians and
$\mathcal{PT}$ invariance which are not considered here, see
\cite{bebo}. Furthermore, one integrability condition of
$H\Psi=\lambda\Psi$ for $\lambda\in\Lambda$ is that $b_{2n-1}$
must be an even integer. In particular, if the Schr\"odinger
equation
$$H\Psi=\lambda\Psi, \quad V(x)=x^{4n}+\mu x^{2n-1}, \quad \lambda\in\Lambda$$ is
integrable, then $\mu$ is an even integer.
\end{remark}
\medskip

 We present the following examples to illustrate the
previous
theorem and remark.\\

\noindent \textbf{Weber's Equation and Harmonic Oscillator.} The
Schr\"odinger equation with potential $V(x)=x^2+q_1x+q_0$
corresponds to the Rehm form of the Weber's equation. By lemma \ref{cosq} we have
$$V(x)-\lambda=(x+a_0)^2+b_0,\quad a_0=q_1/2, \quad
b_0=q_0-q_1^2/4-\lambda.$$ So that we obtain $\pm b_0-1=2m$, where
$m\in\mathbb{Z}_+$. If $b_0$ is an odd integer, then
$$\mathrm{DGal}(L_{\lambda}/K)=\mathbb{B}, \, \mathcal E(H-\lambda)=\mathrm{Vect}(1),\,
\lambda\in\Lambda=\{\pm(2m+1)+q_0-q_1^2/4:m\in\mathbb{Z}_+\}$$ and
the set of eigenfunctions is either
$$\Psi_\lambda=P_me^{\frac12(x^2+q_1x)},\textrm{ or, }
\Psi_\lambda=P_me^{-\frac12(x^2+q_1x)}.$$ In the second case we
have bound state wave function and
$\mathrm{Spec}_p(H)\cap\Lambda=\mathrm{Spec}_p(H)=\{E_m=2m+1+q_0-q_1^2/4:m\in\mathbb{Z}_+\}$,
which is infinite. The polynomials $P_m$ are related with the
Hermite polynomials $H_m$, \cite{ch,is,niuv}.\\

In particular we have the harmonic oscillator potential
$$V(x)={1\over 4}\omega^2x^2-{\omega\over 2},$$ and the Schr\"odinger equation
$H\Psi=E\Psi$. Through the change of independent variable
$x\mapsto\sqrt{2\over \omega}x$ we obtain $V(x)=x^2-1$ and
$\lambda={2\over\omega}E$, that is, $q_1=0$ and $q_0=-1$. 
It follows that  $\Lambda=\{\pm( 2m+1)-1:m\in\mathbb{Z}_+\}$ and the set of
eigenfunctions is either
$$\Psi_\lambda=P_me^{\frac12x^2},\textit{ or, }
\Psi_\lambda=P_me^{-\frac12x^2},$$ where as below,
$\mathrm{DGal}(L_{\lambda}/K)=\mathbb{B}$ and
$\mathcal{E}(H-\lambda)=\{1\}$ for all $\lambda\in\Lambda$. In the
second case we have bound state wave function,
$\mathrm{Spec}_p(H)\cap\Lambda=\mathrm{Spec}_p(H)=\Lambda_+=\{2m:m\in\mathbb{Z}_+\}$
and $P_m=H_m$. The wave functions of $H\Psi=E\Psi$ for the
harmonic oscillator potential are given by

$$\Psi_m=H_m\left(\sqrt{2\over \omega}x\right)\Psi_0,\quad \Psi_0=e^{-{\omega\over4}x^2},\quad E=E_m=m\omega.$$
\medskip

\noindent \textbf{Quartic and Sextic Anharmonic Oscillator.} The
Schr\"odinger equation with potential
$V(x)=x^4+q_3x^3+q_2x^2+q_1x+q_0$ can be obtained through
transformations of confluent Heun's equation, though we will not adress this here.  
By lemma \ref{cosq} we have
$$V(x)-\lambda=(x^2+a_1x+a_0)^2+b_1x+b_0,$$ where $a_1=q_3/2$, $a_0=q_2/2-a_1^2/2$, $b_1=q_1-2a_0a_1$ and $b_0=q_0-a_0^2-\lambda.$
 So we obtain $\pm b_1-2=2m$, where
$m\in\mathbb{Z}_+$. If $\Lambda\neq\emptyset$, then $b_1$ is an
even integer, $P_m$ satisfy the relation \eqref{recu1} and
$\mathrm{DGal}(L_{\lambda}/K)=\mathbb{B}$ for all
$\lambda\in\Lambda$. The set of eigenfunctions is either
$$\Psi_\lambda=P_me^{\frac{x^3}3+\frac{a_1x^2}2+a_0x},\textit{ or, }
\Psi_\lambda=P_me^{-\left(\frac{x^3}3+\frac{a_1x^2}2+a_0x\right)},$$
where $\lambda$ and $m$ are related, which means that $\Lambda$ is
finite, i.e., the potential is algebraically quasi-solvable. In
particular for $q_3=2\imath l$, $q_2=l^2-2k$, $q_1=2\imath(lk-J)$
and $q_0=0$, we have the quartic anharmonic oscillator potential,
which can be found in \cite{bebo}.\\

Now, considering the potentials $V(x,\mu)=x^4+4x^3+2x^2-\mu x$,
again by lemma \ref{cosq} we have 
$$V(x,\mu)-\lambda=(x^2+2x-1)^2+(4-\mu)x-1-\lambda,$$ so that $\pm(4-\mu)-2=2n$, where $n\in\mathbb{Z}_+$ and in consequence $\mu\in2\mathbb{Z}$. Such $\mu$ can be either
$\mu=2-2n$ or $\mu=2n+6$, where $n\in\mathbb{Z}_+$. By theorem
\ref{polynint}, there exists a monic polynomial $P_n$ satisfying
respectively
$$\partial_x^2P_n+(2x^2+4x-2)\partial_xP_n+((\mu-2)x+3+\lambda)P_n=0,\quad \mu=2-2n,\quad\textrm{or}$$
$$\partial_x^2P_n-(2x^2+4x-2)\partial_xP_n+((\mu-6)x-1+\lambda)P_n=0,\quad\mu=2n+6$$
 for $\Lambda\neq\emptyset$. This algebraic relation between the coefficients of $P_n$, $\mu$ and $\lambda$ give us the set
 $\Lambda$ in the following way:

 \begin{enumerate}
 \item Write
 $P_n=x^n+c_{n-1}x^{n-1}+\ldots+c_0$, where $c_i$ are unknown.
 \item Pick $\mu$ and replace $P_n$ in the algebraic relation \eqref{recu1} to obtain a polynomial of degree $n$ with $n+1$ undetermined coefficients
involving $c_0,\ldots c_{n-1}$ and $\lambda$. Each of such
coefficients must be zero.\item  The term $n+1$ is linear in
$\lambda$ and $c_{n-1}$, thus we write $c_{n-1}$ in terms of
$\lambda$. After of the elimination of the term $n+1$, we replace
$c_{n-1}$ in the term $n$ to obtain a quadratic polynomial in
$\lambda$ and so on until we reach the constant term which is a
polynomial $Q_{n+1}(\lambda)$ of degree $n+1$ in $\lambda$. It follows that   $\Lambda=\{\lambda:Q_{n+1}(\lambda)=0\}$ and
$c_0,\ldots,c_{n-1}$ are determined for each value of $\lambda$.
\end{enumerate}

For $\mu=2n+6$, we have:

$$\begin{array}{llll}
n=0,& V(x,6),&P_0=1,& \Lambda=\{1\}\\
n=1,& V(x,8),&P_1=x+1\mp\sqrt2,& \Lambda=\{3\pm2\sqrt2\}\\
\vdots& & &
 \end{array}$$ and the set of eigenfunctions is
$$\Psi_{\lambda,\mu}=P_ne^{-\frac13x^3-x^2+x}.$$

In the same way, we can obtain $\Lambda$, $P_n$ and
$\Psi_{\lambda,\mu}$ for $\mu=2-2n$. However, we do not have bound
states, $\mathrm{Spec}_p(H)\cap\Lambda=\emptyset$,
$\mathrm{DGal}(L_{\lambda}/K)=\mathbb{B}$ and
$\mathcal{E}(H-\lambda)=\mathrm{Vect}(1)$ for all $\lambda\in\Lambda$.\\

The well known \textit{sextic anharmonic oscillator}
$x^6+q_5x^5+\cdots+q_1x+q_0$ can be treated in a similar way,
obtaining bound states wave functions and the Bender-Dunne
orthogonal polynomials, which corresponds to $Q_{n+1}(\lambda)$,
i.e. we also recover the results of \cite{bedu,give,sahaci}. 
The Schr\"odinger equation with this potential, under suitable
transformations, also falls in the class of confluent Heun equations.\\

\subsection{Rational Potentials and Kovacic's Algorithm}
Here we apply Kovacic's algorithm (see appendix A) to solve the
Schr\"odinger
equation with rational (non-polynomial) potentials. \\

 \textbf{Three dimensional harmonic oscillator
potential:}
$$V(r)={1\over 4}\omega^2r^2+{\ell(\ell+1)\over r^2}-\left(\ell+{3\over 2}\right)\omega,\quad \ell\in\mathbb{Z}.$$ 
In this case, the Schr\"odinger equation
is written as
$$\partial_r^2\Psi=\left(\left(\frac12\omega
r\right)^2+{\ell(\ell+1)\over
r^2}-\left(\ell+\frac32\right)\omega-E\right)\Psi.$$ By the change
$r\mapsto\left(\sqrt{2\over\omega}\right)r$ we obtain the
Schr\"odinger equation
$$\partial_r^2\Psi=\left(r^2+{\ell(\ell+1)\over r^2}-(2\ell+3)-\lambda\right)\Psi,\quad \lambda={2\over
\omega}E$$ and, in order to apply Kovacic's algorithm, we set
$$R=r^2+{\ell(\ell+1)\over r^2}-(2\ell+3)-\lambda.$$

We can see that this equation could fall in case 1, in case 2 or
in case 4 of Kovacic's algorithm. We start discarding the case 2
because by step 1 (of Kovacic's algorithm) we should have
conditions $c_2$ and $\infty_3$; so we should have
$E_c=\{2,4+4\ell,-4\ell\}$ and $E_{\infty}=\{-2\}$, and by step 2,
we would have  $n=-4\notin\mathbb{Z}_+$, so that
$D=\emptyset$. So this Schr\"odinger equation never falls in
case 2. Now, we only work with case 1; by step 1, conditions $c_2$
and $\infty_3$ are satisfied, so that
$$\left[ \sqrt {R}\right] _{c}=0,\quad\alpha_{c}^{\pm}={1\pm(2\ell+1)\over 2}, \quad\left[  \sqrt{R}\right]  _{\infty}=r,\quad \alpha_{\infty}^{\pm }={\mp(\lambda+2\ell+3)-1\over 2}.$$ By
step 2 we have the following possibilities for $n\in\mathbb{Z}_+$
and for $\lambda\in\Lambda$:
$$
\begin{array}{lll}
\Lambda_{++})\quad& n=\alpha^+_\infty - \alpha^+_0=-{1\over
2}\left(4\ell +6+\lambda\right),& \lambda=-2n-4\ell-6,\\ & &\\
\Lambda_{+-}) & n=\alpha^+_\infty - \alpha^-_0=-{1\over 2}\left(4
+\lambda\right),& \lambda=-2n-4, \\& &\\
\Lambda_{-+}) & n=\alpha^-_\infty - \alpha^+_0={\lambda\over 2},&
\lambda=2n,\\& & \\ \Lambda_{--}) & n=\alpha^-_\infty -
\alpha^-_0={1\over 2}\left(4\ell +2+\lambda\right),&
\lambda=2n-4\ell-2,
\end{array}
$$
where
$\Lambda_{++}\cup\Lambda_{+-}\cup\Lambda_{-+}\cup\Lambda_{--}=\Lambda$,
which means that $\lambda=2m,$ $m\in\mathbb{Z}$. Now, for
$\lambda\in\Lambda$, the rational function $\omega$ in Kovacic's
algorithm is given by:
$$
\begin{array}{lll}
\Lambda_{++})\quad& \omega=r+{\ell+1\over r},&
R_n=r^2+{\ell(\ell+1)\over r^2}+(2\ell+3)+2n,\\& & \\
\Lambda_{+-}) & \omega=r-{\ell\over r}, &
R_n=r^2+{\ell(\ell+1)\over r^2}-(2\ell-1)+2n,\\& & \\
\Lambda_{-+}) & \omega=-r+{\ell+1\over r},&
R_n= r^2+{\ell(\ell+1)\over r^2}-(2\ell+3)-2n,\\& &\\
\Lambda_{--}) & \omega=-r-{\ell\over r},&
R_n=r^2+{\ell(\ell+1)\over r^2}+(2\ell-1)-2n,
\end{array}
$$
where $R_n$ is the coefficient of the differential equation
$\widetilde{\mathcal L}_n:=\partial_r^2\Psi=R_n\Psi$, which is
integrable for every $n$. For every $\lambda\in\Lambda$, we can
see that
$\mathrm{DGal}(\widetilde{L}_n/K)=\mathrm{DGal}(L_\lambda/K)$,
where $\mathcal L_\lambda:=H\Psi=\lambda\Psi$.

By step 3, there exists a polynomial of degree $n$ satisfying the
relation (\ref{recu1}):
$$
\begin{array}{lllll}
\Lambda_{++})\quad& \partial_r^2P_n+2\left(r+{\ell+1\over
 r}\right)\partial_rP_n-2nP_n&=&0,&\lambda\in\Lambda_-,\\& & \\
\Lambda_{+-}) & \partial_r^2P_n+2\left(r-{\ell\over r}\right)\partial_rP_n-2nP_n&=&0,&\lambda\in\Lambda_-,\\& &\\
\Lambda_{-+}) & \partial_r^2P_n+2\left(-r+{\ell+1\over r}\right)\partial_rP_n+2nP_n&=&0,&\lambda\in\Lambda_+,\\ & &\\
\Lambda_{--}) & \partial_r^2P_n+2\left(-r-{\ell\over
 r}\right)\partial_rP_n+2nP_n&=&0,&\lambda\in\Lambda.
\end{array}
$$

These polynomials exist for all $\lambda\in\Lambda$ when their
degrees are $n\in 2\mathbb{Z}$, while for $n\in 2\mathbb{Z}+1$,
they exist only for the cases $\Lambda_{-+})$ and $\Lambda_{--})$
with special conditions. Hence we have obtained the
algebraic spectrum $\Lambda= 2\mathbb{Z}$, where
$\Lambda_{++}=4\mathbb{Z}_-$, $\Lambda_{+-}=2\mathbb{Z}_-$,
$\Lambda_{-+}=4\mathbb{Z}_+$, $\Lambda_{--}=2\mathbb{Z}$.

The possibilities for eigenfunctions, considering only $\lambda\in
4\mathbb{Z}$, are given by
$$
\begin{array}{lllll}
\Lambda_{++})\quad& \Psi_n(r)&=&r^{\ell+1}P_{2n}(r)e^{r^2\over 2},
&\lambda\in\Lambda_-,\\& &\\
\Lambda_{+-}) &\Psi_n(r)&=&r^{-\ell}P_{2n}(r)e^{r^2\over
2},&\lambda\in\Lambda_-,\\& & \\
\Lambda_{-+}) & \Psi_n(r)&=&r^{\ell+1}P_{2n}(r)e^{-r^2\over 2},
&\lambda\in\Lambda_+,\\& &\\ \Lambda_{--}) &
\Psi_n(r)&=&r^{-\ell}P_{2n}(r)e^{-r^2\over 2}, &\lambda\in\Lambda.
\end{array}
$$
To obtain the point spectrum, we look for $\Psi_n$ satisfying the
bound state conditions which is in only true for
$\lambda\in\Lambda_{-+}$. With the change
$r\mapsto\sqrt{\omega\over 2}r$, the point spectrum and ground
state of the Schr\"odinger equation with the 3D-harmonic
oscillator potential are respectively
$\mathrm{Spec}_p(H)=\{E_n:n\in\mathbb{Z}_+\},$ for
$E_n=2n\omega$, where $\omega$ is the angular velocity, and
$$\Psi_0=\left(\sqrt{\omega\over 2}r\right)^{\ell +
1}e^{-{\omega\over4}r^2}.$$

The bound state wave functions are obtained as
$\Psi_n=P_{2n}\Psi_0.$ Now, we can see that
$\mathrm{DGal}({L_0}/K)=\mathbb{B}$ and $\mathcal
E(H)=\mathrm{Vect}(1)$. Since $\Psi_n=P_{2n}\Psi_0,$ for all
$\lambda\in\Lambda$ we have 
$\mathrm{DGal}(L_\lambda/K)=\mathbb{B}$ and $\mathcal
E(H-\lambda)=\mathrm{Vect}(1)$. In particular,
$\mathrm{DGal}(L_\lambda/K)=\mathbb{B}$ and $\mathcal
E(H-\lambda)=\mathrm{Vect}(1)$ for all
$\lambda\in\mathrm{Spec}_p(H)$, where $\lambda=\frac2{\omega}E$.
\\

We remark that the Schr\"odinger equation with the 3D-harmonic
oscillator potential, through the changes $r\mapsto \frac12\omega
r^2$ and $\Psi\mapsto \sqrt{r}\Psi$, fall in the class of Whittaker
differential equations in which the parameters are given by
$$\kappa={(2\ell+3)\omega +2E\over 4\omega},\quad
\mu=\frac12\ell+\frac14.$$ By Martinet-Ramis theorem, we have
integrability when $\pm\kappa\pm \mu$ is a half integer. These
conditions coincides with our four sets $\Lambda_{\pm\pm}$.
\\

 \textbf{Coulomb potential:}
$$V(r)=-{\mathrm{e}^2\over r}+{\ell(\ell+1)\over r^2}+{\mathrm{e}^4\over4(\ell+1)^2},\quad \ell\in\mathbb{Z}.$$ 
The Schr\"odinger equation in this case can be written as
$$\partial_r^2\Psi=\left({\ell(\ell+1)\over r^2}-{{\rm e}^2\over r} +\frac{{\rm
e}^4}{4(\ell+1)^2}-E\right)\Psi.$$ By the change
$r\mapsto{2(\ell+1)\over {\rm e}^2}r$ we obtain the Schr\"odinger
equation
$$\partial_r^2\Psi=\left({\ell(\ell+1)\over r^2}-{2(\ell+1)\over r}+1-\lambda\right)\Psi,\quad \lambda={4(\ell+1)^2\over {\rm
e}^4}E$$ and in order to apply Kovacic algorithm, we set
$$R={\ell(\ell+1)\over r^2}-{2(\ell+1)\over r}+1-\lambda.$$

First we analyze the case for $\lambda=1$: we can see that this
equation only could fall in case 2 or in case 4 of Kovacic's
algorithm. We start discarding the case 2 because by step 1 we
should have conditions $c_2$ and $\infty_3$. So we should
have $E_c=\{2,4+4\ell,-4\ell\}$ and $E_{\infty}=\{1\}$ and by step
2, we would have  $n\notin\mathbb{Z}$. Thus,
$n\notin\mathbb{Z}_+$ and $D=\emptyset$, so the differential
Galois group of this Schr\"odinger equation for $\lambda=1$ is
${\rm SL}(2,\mathbb{C})$.
\\

Now, we analyze the case for $\lambda\neq 1$: we can see that this
equation could fall in case 1, in case 2 or in case 4. We start
discarding the case 2 because by step 1 we should have conditions
$c_2$ and $\infty_3$, so that we should have
$E_c=\{2,4+4\ell,-4\ell\}$ and $E_{\infty}=\{0\}$. By step 2, we
should have  $n=2\ell\in\mathbb{Z}_+$, so that $D=\{2\ell\}$
and the rational function $\theta$ is $\theta={-2\ell\over x}$,
but we discard this case because there could only exist one polynomial
of degree $2\ell$ for a fixed $\ell$, and for instance, there could only
exist one eigenstate and one eigenfunction for the Schr\"odinger
equation.
\\

Now, we only work with case 1, by step 1, conditions $c_2$ and
$\infty_3$ are satisfied. Thus,
$$\left[ \sqrt {R}\right] _{c}=0,\quad\alpha_{c}^{\pm}={1\pm(2\ell+1)\over 2}, \quad\left[  \sqrt{R}\right]  _{\infty}=\sqrt{1-\lambda},\quad \alpha_{\infty}^{\pm }=\mp{\ell+1\over \sqrt{1-\lambda}}.$$ By
step 2 we have the following possibilities for $n\in\mathbb{Z}_+$
and for $\lambda\in\Lambda$:
$$
\begin{array}{lll}
\Lambda_{++})\quad& n=\alpha^+_\infty -
\alpha^+_0=-(\ell+1)\left(1+{1\over\sqrt{1-\lambda}}\right), &
\lambda=1-\left({\ell+1\over\ell+1+n}\right)^2,\\& &\\
\Lambda_{+-}) & n=\alpha^+_\infty - \alpha^-_0=-{\ell+1\over
\sqrt{1-\lambda}}+\ell,&
\lambda=1-\left({\ell+1\over\ell-n}\right)^2, \\& &\\
\Lambda_{-+}) & n=\alpha^-_\infty -
\alpha^+_0=(\ell+1)\left({1\over \sqrt{1-\lambda}}-1\right),&
\lambda=1-\left({\ell+1\over\ell+1+n}\right)^2,\\& &\\
\Lambda_{--}) & n=\alpha^-_\infty - \alpha^-_0={\ell+1\over
\sqrt{1-\lambda}}+\ell,&
\lambda=1-\left({\ell+1\over\ell-n}\right)^2.
\end{array}
$$
We can see that $\lambda\in\Lambda_-$ when $\lambda\leq 0$, while
$\lambda\in\Lambda_+$  when $0\leq \lambda< 1$. Furthermore:
$$
\begin{array}{lll}
\Lambda_{++})\quad& \ell\leq -1,
\quad&\lambda\in\Bigg\{{\begin{array}{ll}
 \Lambda_-,\quad& \ell\leq {-n-2\over 2}\\&\\
  \Lambda_+,&
{-n-2\over 2}\leq\ell\leq-1
\end{array}}\\& &\\
\Lambda_{+-}) & \ell>0, &\lambda\in\Bigg\{{\begin{array}{ll}
 \Lambda_-,\quad& \ell\geq {n-1\over 2}\\&\\
  \Lambda_+,&
0\leq\ell\leq{n-1\over 2}
\end{array}}\\& &\\
\Lambda_{-+}) &
\ell\in\mathbb{Z},&\lambda\in\Bigg\{{\begin{array}{ll}
 \Lambda_-,\quad& \ell\leq {-n-2\over 2}\\&\\
  \Lambda_+,&
\ell\geq-1
\end{array}}\\& &\\
\Lambda_{--}) & \ell>0, &\lambda\in\Bigg\{{\begin{array}{ll}
 \Lambda_-,\quad& \ell\geq {n-1\over 2}\\&\\
  \Lambda_+,&
0\leq\ell\leq {n-1\over 2}
\end{array}}
\end{array}
$$
 So the possible algebraic spectrum can be
$\Lambda=\Lambda_{++}\cup\Lambda_{+-}\cup\Lambda_{-+}\cup\Lambda_{--}$,
that is
\begin{equation}\label{EQL}
\Lambda=\left\{1-\left({\ell+1\over\ell+1+n}\right)^2:
n\in\mathbb{Z}_+\right\}\cup
\left\{1-\left({\ell+1\over\ell-n}\right)^2:
n\in\mathbb{Z}_+\right\},
\end{equation}

Now, for $\lambda\in\Lambda$, the rational function $\omega$ is
given by: {\small
$$
\begin{array}{llll}
\Lambda_{++})\quad& \omega={\ell+1\over\ell+1+n}+{\ell+1\over r},&\lambda\in\Lambda_{++},& R_n={\ell(\ell+1)\over r^2}-{2(\ell+1)\over r}+\left({\ell+1\over\ell+1+n}\right)^2,\\& &\\
\Lambda_{+-}) & \omega={\ell+1\over\ell-n}-{\ell\over r},&\lambda\in\Lambda_{+-},& R_n={\ell(\ell+1)\over r^2}-{2(\ell+1)\over r}+\left({\ell+1\over\ell-n}\right)^2,\\& &\\
\Lambda_{-+}) & \omega=-{\ell+1\over\ell+1+n}+{\ell+1\over
r},&\lambda\in\Lambda_{-+},& R_n={\ell(\ell+1)\over
r^2}-{2(\ell+1)\over r}+\left({\ell+1\over\ell+1+n}\right)^2,\\&
&\\ \Lambda_{--}) & \omega=-{\ell+1\over\ell-n}-{\ell\over
r},&\lambda\in\Lambda_{--},& R_n={\ell(\ell+1)\over
r^2}-{2(\ell+1)\over r}+\left({\ell+1\over\ell-n}\right)^2,
\end{array}
$$}
\noindent where $R_n$ is the coefficient of the differential
equation $\partial_r^2\Psi=R_n\Psi$, which is integrable for every
$n$.
\\

\noindent By step 3, there exists a polynomial of degree $n$
satisfying the relation (\ref{recu1}),
$$
\begin{array}{llll}
\Lambda_{++})\quad&
\partial_r^2P_n+2\left({\ell+1\over\ell+1+n}+{\ell+1\over r}\right)\partial_rP_n+{2(\ell+1)\over r}\left(1+{\ell+1\over\ell+1+n}\right)P_n&=&0,\\& & &\\
\Lambda_{+-}) & \partial_r^2P_n+2\left({\ell+1\over\ell-n}-{\ell\over r}\right)\partial_rP_n+{2(\ell+1)\over r}\left(1-{\ell+1\over\ell-n}\right)P_n&=&0,\\& & &\\
\Lambda_{-+}) & \partial_r^2P_n+2\left(-{\ell+1\over\ell+1+n}+{\ell+1\over r}\right)\partial_rP_n+{2(\ell+1)\over r}\left(1-{\ell+1\over\ell+1+n}\right)P_n&=&0,\\& & &\\
\Lambda_{--}) &
\partial_r^2P_n+2\left(-{\ell+1\over\ell-n}-{\ell\over
r}\right)\partial_rP_n+{2(\ell+1)\over
r}\left(1+{\ell+1\over\ell-n}\right)P_n&=&0.
\end{array}
$$

These polynomials exist for every $\lambda\in\Lambda$ when $n\in
\mathbb{Z}$, but $P_0=1$ is satisfied only for
$\lambda\in\Lambda_{-+}$. Hence we have confirmed that the
algebraic spectrum $\Lambda$ is given by equation (\ref{EQL}).\\

The possibilities for the eigenfunctions are given by

$$
{\small
\begin{array}{llllll}
\Lambda_{++})\quad&
\Psi_n(r)&=&r^{\ell+1}P_n(r)f_n(r)e^{r},&f_n(r)=e^{{-nr\over
\ell+1+n}}, &\lambda\in\Bigg\{{\begin{array}{ll}
 \Lambda_-,& \ell\leq {-n-2\over 2}\\&\\
  \Lambda_+,&
{-n-2\over 2}\leq\ell\leq-1
\end{array}}\\&&&&&\\
\Lambda_{+-}) &
\Psi_n(r)&=&r^{-\ell}P_n(r)f_n(r)e^{r},&f_n(r)=e^{{n+1\over
\ell-n}r}, &\lambda\in\Bigg\{{\begin{array}{ll}
 \Lambda_-,& \ell\geq {n-1\over 2}\\&\\
  \Lambda_+,&
0\leq\ell\leq {n-1\over 2}
\end{array}}\\&&&&&\\
\Lambda_{-+}) &
\Psi_n(r)&=&r^{\ell+1}P_n(r)f_n(r)e^{-r},&f_n(r)=e^{{nr\over
\ell+1+n}}, &\lambda\in\Bigg\{{\begin{array}{ll}
 \Lambda_-,& \ell\leq {-n-2\over 2}\\&\\
  \Lambda_+,&
\ell\geq-1
\end{array}}\\&&&&&\\
\Lambda_{--}) &
\Psi_n(r)&=&r^{-\ell}P_n(r)f_n(r)e^{-r},&f_n(r)=e^{{n+1\over
n-\ell}r}, &\lambda\in\Bigg\{{\begin{array}{ll}
 \Lambda_-,& \ell\geq {n-1\over 2}\\&\\
  \Lambda_+,&
0\leq\ell\leq {n-1\over 2}
\end{array}}
\end{array}}
$$
but $\Psi_n$ should satisfy the bound state conditions which is
only true for $\lambda\in\Lambda_{-+}\cap \Lambda_+$, so that we
choose $\Lambda_{-+}\cap \Lambda_+={\rm spec}_p(H)$, that is
$$
{\rm spec}_p(H)=\left\{1-\left(
{\ell+1\over\ell+n+1}\right)^2:\quad n\in\mathbb{Z}_+,\quad
\ell\geq -1 \right\}.
$$
By the change $r\mapsto{\rm e
 ^2\over 2(\ell+1)}r$, the point spectrum and ground state of the Schr\"odinger equation with Coulomb potential are respectively
$$\mathrm{Spec}_p(H)=\{E_n:n\in\mathbb{Z}_+\},\quad E_n={{\rm e^4}\over 4}\left({1\over (\ell+1)^2}-{1\over
(\ell+1+n)^2}\right)$$ and
$$\Psi_0=\left({\rm e
 ^2\over 2(\ell+1)}r\right)^{\ell + 1}e^{-{\rm e
 ^2\over 2(\ell+1)}r}.$$
The eigenstates are given by $\Psi_n=P_{n}f_n\Psi_0,$ where $$
f_n(r)=e^{{n{\rm e}^2r\over 2(\ell+1+n)(\ell+1)}}.$$ Now, we can
see that $\mathrm{DGal}({L_0}/K)=\mathbb{B}$ and $\mathcal
E(H)=\mathrm{Vect}(1)$. Since $\Psi_n=P_{2n}f_n\Psi_0,$ for all
$\lambda\in\Lambda$ we have 
$\mathrm{DGal}({L_\lambda}/K)=\mathbb{B}$ and $\mathcal
E(H-\lambda)=\mathrm{Vect}(1)$. In particular,
$\mathrm{DGal}(L_\lambda/K)=\mathbb{B}$ and $\mathcal
E(H-\lambda)=\mathrm{Vect}(1)$ for all $E\in\mathrm{Spec}_p(H)$,
where $E=\frac{\mathrm{e}^4}{4(\ell +1)^2}\lambda$.
\\

We remark that, as in the three dimensional harmonic oscillator,
the Schr\"odinger equation with the Coulomb potential, through the
change $$r\mapsto {\frac {\sqrt {-4\, \left(\ell+1
\right)^{2}E+{\rm e}^{4 }}}{\ell+1}}r,$$ falls in the class of Whittaker
differential equation with the parameters given by
$$\kappa={\frac {{\rm e}^{2
} \left( \ell+1 \right) }{\sqrt {-4\, \left( \ell+1 \right)
^{2}E+{\rm e}^ {4}}}} ,\quad \mu=\ell+\frac12.$$ Applying
Martinet-Ramis theorem, we can impose $\pm\kappa\pm \mu$ half
integer, to coincides with our four sets $\Lambda_{\pm\pm}$.
\\

\begin{remark}
By direct application of Kovacic's algorithm we have:
\begin{itemize}
\item The Schr\"odinger equation \eqref{equ1} with potential
$$V(x)=ax^2+{b\over x^2}$$ is
integrable for $\lambda\in\Lambda$ when

\begin{itemize}
\item $a=0,\quad b=\mu(\mu+1), \quad \mu\in \mathbb{C}, \quad \Lambda=\mathbb{C},$
\item $a=1,\quad b=0, \quad \lambda\in\Lambda=2\mathbb{Z}+1,$
\item $a=1,\quad b=\ell(\ell+1), \quad \ell\in \mathbb{Z}^*, \Lambda=2\mathbb{Z}+1.$
\end{itemize}
\item The only rational potentials (up to transformations) in which the elements of the algebraic spectrum are placed at the same
distance, belongs to the family of potentials given by
$$V(x)=\sum_{k=-\infty}^2 a_kx^k,\quad a_2\neq 0.$$ In particular, the set $\Lambda$ for the harmonic oscillator ($a=1$, $b=0$)
 and $3D$ harmonic oscillator ($a=1$, $b=\ell(\ell +1)$) satisfies this.

\end{itemize}
\end{remark}

\begin{proposition}\label{luckphyst} Let $\mathcal L_\lambda$ be the Schr\"odinger equation \eqref{equ1} with $K=\mathbb{C}(x)$ and Picard-Vessiot extension $L_\lambda$.
 If $\mathrm{DGal}(L_0/K)$ is finite primitive, then $\mathrm{DGal}(L_\lambda/K)$ is not finite primitive for
  all $\lambda\in\Lambda\setminus \{0\}$.
\end{proposition}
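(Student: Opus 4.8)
The plan is to reduce the claim to the necessary conditions for Case~3 of Kovacic's algorithm (see Appendix~A). Recall that, for an equation in reduced form $\partial_x^2\Psi = r\Psi$ with $r\in\mathbb C(x)$, the differential Galois group is finite and primitive (one of $A_4^{\mathrm{SL}_2}$, $S_4^{\mathrm{SL}_2}$, $A_5^{\mathrm{SL}_2}$) exactly when Case~3 of the algorithm applies, and a necessary condition for Case~3 is that $r$ vanish at infinity at least to order $2$; that is, writing $r=P/Q$ in lowest terms, $\deg Q - \deg P \ge 2$. For the family $\mathcal L_\lambda$ we have $r = r_\lambda := V-\lambda$, so all members share the denominator $Q$ of $V$, while the numerators, $P_0$ and $P_0-\lambda Q$, differ only by a nonzero multiple of $Q$ when $\lambda\neq 0$.

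First I would apply this necessary condition to $\mathcal L_0$: since by hypothesis $\mathrm{DGal}(L_0/K)$ is finite primitive, $V=P_0/Q$ must satisfy $\deg Q-\deg P_0\ge 2$; equivalently $V$ is regular at infinity with $\lim_{x\to\infty}V(x)=0$ (and even the $x^{-1}$-term of its expansion there vanishes). Note this already excludes $V$ being a nonconstant polynomial and, more generally, excludes $V$ having a pole or a nonzero finite limit at infinity. Then, for any $\lambda\neq 0$, the numerator $P_0-\lambda Q$ of $r_\lambda$ has degree exactly $\deg Q$, so the order of $r_\lambda$ at infinity is $0<2$; hence $r_\lambda$ fails the Case~3 necessary condition and $\mathrm{DGal}(L_\lambda/K)$ is not finite primitive. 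In particular this holds for every $\lambda\in\Lambda\setminus\{0\}$, which is the assertion. One can phrase the contradiction symmetrically: a finite primitive group at $\lambda$ would force $\lim_{x\to\infty}V(x)=\lambda$, incompatible with the value $0$ obtained from $\lambda=0$.

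The one step that deserves care is the bookkeeping around ``order at infinity'': one must check that passing from $V$ to $V-\lambda$ does not change the denominator $Q$ and cannot produce cancellation --- which is clear, since $\lambda Q$ and $Q$ have the same zeros --- so that the order at infinity genuinely drops to $0$. The only other thing to pin down is the precise statement of the Case~3 necessary conditions together with the equivalence ``finite primitive $\Leftrightarrow$ Kovacic Case~3'' (the finite imprimitive groups $\mathbb D_{2n}$ fall under Case~2 and the cyclic ones under Case~1); with these in place there is nothing further to prove --- no induction and no appeal to the Darboux-transformation machinery of the previous section is needed.
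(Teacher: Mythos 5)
Your proposal is correct and is essentially the paper's own proof: the paper likewise writes $V=s/t$, notes that finite primitivity at $\lambda=0$ forces Case~3 and hence $\deg t\geq\deg s+2$, and then observes that for $\lambda\neq0$ the numerator $s-\lambda t$ has degree $\deg t$, so $\circ(V-\lambda)_\infty=0<2$ and Case~3 cannot occur. The only difference is that you spell out the non-cancellation bookkeeping and the "finite primitive $\Leftrightarrow$ Case~3" equivalence a bit more explicitly, which the paper leaves implicit.
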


\begin{proof} Pick $V\in\mathbb{C}(x)$ such that $\mathcal L_0$ falls in case 3 of Kovacic
algorithm, then $\circ u_\infty\geq 2$. Assume $t,
s\in\mathbb{C}[x]$ such that $V=\frac{s}t$, then ${\rm deg}(t)\geq
{\rm deg}
 (s)+2$ and
$V-\lambda=\frac{s-\lambda t}{t}$. Now, for $\lambda\neq 0 $ we
have  $\mathrm{deg}(s-\lambda t)=\mathrm{deg}(t)$ and
therefore $\circ (V-\lambda)_\infty=0$. So, for $\lambda\neq0$, the equation $\mathcal L_\lambda$ does not falls in case 3 of
Kovacic algorithm and therefore $\mathrm{DGal}(L_\lambda/K)$ is
not finite primitive.
\end{proof}

\begin{corollary}\label{luckyphysc}
 Let $\mathcal L_\lambda$ be the Schr\"odinger equation \eqref{equ1} with $K=\mathbb{C}(x)$ and Picard-Vessiot extension $L_\lambda$. If
 $\mathrm{Card}(\Lambda)>1$, then there is either zero or one
 value of $\lambda$ such that $\mathrm{DGal}(L_\lambda/K)$ is a finite primitive group.
\end{corollary}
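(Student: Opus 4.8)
The corollary is essentially a restatement of Proposition \ref{luckphyst} after a spectral translation, so the plan is to reduce to it. First I would record two preliminary observations. Since $H=-\partial_x^2+V$ carries no first--order term, every group $\mathrm{DGal}(L_\lambda/K)$ is an algebraic subgroup of $\mathrm{SL}(2,\mathbb{C})$; hence ``finite primitive'' here means that $\mathrm{DGal}(L_\lambda/K)$ is one of $A_4^{\mathrm{SL}_2}$, $S_4^{\mathrm{SL}_2}$, $A_5^{\mathrm{SL}_2}$, equivalently that $\mathcal{L}_\lambda$ falls into Case 3 of Kovacic's algorithm. Moreover a finite Galois group forces $\mathcal{L}_\lambda$ to be integrable (the Picard--Vessiot extension is then a finite, hence simple algebraic, extension of $K$ by the primitive element theorem), so any $\lambda$ giving a finite primitive group automatically lies in $\Lambda$; in particular the hypothesis $\mathrm{Card}(\Lambda)>1$ merely places us in the substantive situation and is not used in an essential way.

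\textbf{Main argument.} I would argue by contradiction: suppose $\lambda_1\neq\lambda_2$ in $\mathbb{C}$ are such that both $\mathrm{DGal}(L_{\lambda_1}/K)$ and $\mathrm{DGal}(L_{\lambda_2}/K)$ are finite primitive. Put $\widehat V:=V-\lambda_1\in\mathbb{C}(x)$ and consider $\widehat H:=-\partial_x^2+\widehat V$ with the family $\widehat{\mathcal{L}}_\mu:\widehat H\Psi=\mu\Psi$. Since $\widehat{\mathcal{L}}_\mu$ is literally the equation $\mathcal{L}_{\lambda_1+\mu}$, the Picard--Vessiot extensions and Galois groups correspond under $\mu=\lambda-\lambda_1$; in particular $\mathrm{DGal}$ of $\widehat{\mathcal{L}}_0$ equals $\mathrm{DGal}(L_{\lambda_1}/K)$, which is finite primitive. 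Proposition \ref{luckphyst}, applied to $\widehat H$, then yields that $\mathrm{DGal}$ of $\widehat{\mathcal{L}}_\mu$ is not finite primitive for any nonzero $\mu$ in the corresponding algebraic spectrum; taking $\mu=\lambda_2-\lambda_1\neq 0$ (which lies there, as $\lambda_2\in\Lambda$) contradicts the assumption on $\lambda_2$. Hence at most one value of $\lambda$ can give a finite primitive group, and the case of no such value is of course possible, which is precisely the claimed alternative.

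\textbf{Main obstacle.} I do not anticipate a real difficulty, as the content is already in Proposition \ref{luckphyst}. The one point requiring care is that the proof of that proposition applies verbatim to the translated operator $\widehat H$: writing $\widehat V=s/t$ in lowest terms, Case 3 at $\mu=0$ forces $\circ(\widehat V)_\infty\geq 2$, i.e. $\deg t\geq\deg s+2$, and then for $\mu\neq 0$ one has $\deg(s-\mu t)=\deg t$, whence $\circ(\widehat V-\mu)_\infty=0<2$ and $\widehat{\mathcal{L}}_\mu$ cannot lie in Case 3. If a self--contained formulation is preferred, I would instead run this pole--order--at--infinity bookkeeping directly on $V-\lambda$ and remark that the leading--coefficient cancellation producing $\circ(V-\lambda)_\infty\geq 2$ can occur for at most one $\lambda\in\mathbb{C}$, which gives the corollary without invoking the translation at all.
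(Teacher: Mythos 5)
Your proof is correct and takes essentially the same route as the paper, namely reducing the statement to Proposition \ref{luckphyst}; the paper's own proof is a one--line appeal to that proposition, whereas you make explicit the translation $V\mapsto V-\lambda_1$ needed to apply it at an arbitrary base point, which is exactly the detail the paper leaves implicit. Your closing remark --- that the cancellation giving $\circ(V-\lambda)_\infty\geq 2$ can happen for at most one $\lambda$ --- is a clean self--contained version of the same pole--order bookkeeping and would serve equally well.
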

\begin{proof} Assume that $\mathrm{Card}(\Lambda)>1$. Thus, by proposition \ref{luckphyst}, the
Schr\"odinger equation does not falls in case 3 of Kovacic's
algorithm.
\end{proof}
\medskip

We note that the study of the differential Galois groups of the
Schr\"odinger equation with the Coulomb potential had also been
analyzed by Jean-Pierre Ramis (\cite{comramis}) using his summability theory in the
eighties of the past century.

\section{Transcendental Potentials and the Algebrization Method}\label{algebsusy} 

In supersymmetric quantum mechanics,
there exists potentials which are not rational functions and, for
this reason, it is difficult to apply our Galoisian approach from section \ref{susyrapo}. In this section, we address this problem and present 
some results concerning differential
equations with non-rational coefficients. For these differential
equations it is useful, whenever possible, to replace it by a new
differential equation over the Riemann sphere $\mathbb{P}^1$ (that
is, with rational coefficients). 
We will propose techniques to find such a change of variable and demonstrate, through numerous examples, how our approach allows us to handle many examples from quantum mechanics.
The equation over $\mathbb{P}^1$ is called an
\emph{algebraic form} or \emph{algebrization} of the original
equation.

\medskip

This algebraic form is already studied in the 19th century litterature (Liouville,
Darboux), but the problem of obtaining an algebraic form (if it
exists) of a given differential equation is in general not an easy
task. Here we develop a new methodology using the concept of
\emph{Hamiltonian change of variables}. Such a change of variables
allows us to compute the algebraic form of a large number of
differential equations of different types. In particular, for
second order linear differential equations, we can apply
\textit{Kovacic's algorithm} over the algebraic form to solve the
original equation.

\medskip
The following definition can be found in \cite{ber,howe,howe2}. 

\begin{definition}[Pullbacks of differential equations] Let $\mathfrak L_1\in
K_1[\partial_z]$ and $\mathfrak L_2\in K_2[\partial_x]$ be
differential operators. The expression $\mathfrak L_2\otimes
(\partial_x+v)$ refers to the operator whose solutions are the
solutions of $\mathfrak L_2$ multiplied by the solution $e^{-\int
vdx}$ of $\partial_x+v$.

\begin{itemize}
\item $\mathfrak L_2$ is a \emph{proper pullback} of $\mathfrak L_1$ by means of $f\in K_2$ if the change of variable $z=f(x)$ changes
$\mathfrak L_1$ into $\mathfrak L_2$.
\item $\mathfrak L_2$ is a \emph{pullback} (also known as weak pullback) of $\mathfrak L_1$ by means of $f\in K_2$ if there exists $v\in K_2$
such that $\mathfrak L_2\otimes (\partial_x+v)$ is a proper
pullback of $\mathfrak L_1$ by means of $f$.
\end{itemize}
\end{definition}

In case of compact Riemann surfaces, the geometric mechanism
behind the algebrization is a ramified covering of compact Riemann
surfaces, see \cite{mo2,mo}.

\subsection{Second Order Linear Differential Equations}\label{sec21}
Some results presented in this subsection also can be found in
\cite[\S 2 ]{acbl}.
\begin{proposition}[Change of the independent variable, \cite{acbl}]\label{pr1}
  Let us consider the following equation, with coefficients in
$\mathbb{C}(z)$:
\begin{equation}\label{eq1} \mathcal L_z:=
\partial_z^2 y+a(z)\partial_zy+b(z)y=0,
\end{equation}
and $\mathbb{C}(z)\hookrightarrow  L$ the corresponding
Picard-Vessiot extension. Let $(K,\delta)$ be a differential field
with $\mathbb C$ as field of constants. Let $\theta\in K$ be a
non-constant element. Then, by the change of variable $z =
\theta(x)$, equation \eqref{eq1} is transformed in
\begin{equation}\label{eq2} \mathcal L_x:=
\partial_x^2{r}+\left(a(\theta)\partial_x{\theta}-{\partial_x^2{\theta}\over
\partial_x{\theta}}\right)\partial_x{r}+b(\theta)(\partial_x{\theta})^2r=0,
\quad
 \partial_x=\delta, \quad r=y\circ\theta.
\end{equation}
  Let $K_0\subset K$ be the smallest differential field containing
$\theta$ and $\mathbb C$. Then equation \eqref{eq2} is a
differential equation with coefficients in $K_0$. Let
$K_0\hookrightarrow L_0$ be the corresponding Picard-Vessiot
extension. Assume that
  $$\mathbb C(z) \to K_0,\quad z\mapsto \theta$$
is an algebraic extension, then
$$(\mathrm{DGal}(L_0 / K_0))^0 = (\mathrm{DGal}(L / \mathbb C(z)))^0.$$
\end{proposition}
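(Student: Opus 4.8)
The plan is to recognize equation \eqref{eq2} as the pullback of $\mathcal{L}_z$ along the algebraic base change $z\mapsto\theta$ (a proper pullback by $\theta\in K_0$, in the sense of the preceding definition) — up to a harmless rescaling of the derivation — and then to invoke the classical fact that an algebraic extension of the base field preserves the identity component of the differential Galois group. First I would record the solution correspondence: the chain-rule computation underlying \eqref{eq2} shows that $y$ solves $\mathcal{L}_z$ if and only if $r=y\circ\theta$ solves $\mathcal{L}_x$, with $W_{\partial_x}(y_1\circ\theta,y_2\circ\theta)=(\partial_x\theta)\cdot\bigl(W_{\partial_z}(y_1,y_2)\circ\theta\bigr)$, so (as $\theta$ is non-constant) linear independence is preserved both ways. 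Fixing a fundamental system $\{y_1,y_2\}$ with $L=\mathbb{C}(z)\langle y_1,y_2\rangle$ and setting $\eta_i=y_i\circ\theta$, the field $L_0:=K_0\langle\eta_1,\eta_2\rangle=K_0(\eta_1,\eta_2,\partial_x\eta_1,\partial_x\eta_2)$ is a Picard--Vessiot extension of $K_0$ for $\mathcal{L}_x$; its constant field is $\mathbb{C}$, since $K_0\subseteq K$ already has $\mathbb{C}$ as constants.

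Next I would untangle the change of variable at the level of base fields. On $L_0$ I introduce, alongside $\partial_x=\delta$, the rescaled derivation $D:=(\partial_x\theta)^{-1}\delta$; since $\partial_x\theta\in K_0^{\times}$, both $\delta$ and $D$ stabilize $K_0$ and $L_0$ and have the same field of constants $\mathbb{C}$. The assignment $z\mapsto\theta$ identifies $(\mathbb{C}(z),d/dz)$ with the differential subfield $(\mathbb{C}(\theta),D)$ of $(K_0,D)$, and under this identification $\mathcal{L}_z$ becomes $D^2y+a(\theta)\,Dy+b(\theta)\,y=0$, i.e.\ equation \eqref{eq2} divided through by $(\partial_x\theta)^2$; its Picard--Vessiot extension over $\mathbb{C}(\theta)$ is $\widetilde{L}:=\mathbb{C}(\theta)\langle\eta_1,\eta_2\rangle_D=\mathbb{C}(\theta)(\eta_1,\eta_2,y_1'\circ\theta,y_2'\circ\theta)$, so $\mathrm{DGal}(\widetilde{L}/\mathbb{C}(\theta))\cong\mathrm{DGal}(L/\mathbb{C}(z))$. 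Moreover $L_0=K_0\cdot\widetilde{L}$, since $\delta\eta_i=(\partial_x\theta)\,D\eta_i$ with $\partial_x\theta\in K_0$; and because rescaling the derivation of the tower $K_0\subseteq L_0$ by $(\partial_x\theta)^{-1}\in K_0^{\times}$ affects neither the Picard--Vessiot property (the solution set of $\mathcal{L}_x$ is unchanged) nor the group of differential $K_0$-automorphisms (an automorphism commuting with $\delta$ commutes with $D$ and conversely), one has $\mathrm{DGal}(L_0/K_0;\delta)=\mathrm{DGal}(L_0/K_0;D)$.

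It then remains to compare $\mathrm{DGal}(L_0/K_0;D)$ with $\mathrm{DGal}(\widetilde{L}/\mathbb{C}(\theta);D)$, where $L_0=K_0\cdot\widetilde{L}$ and, by hypothesis, $K_0/\mathbb{C}(\theta)$ is algebraic. This is precisely the standard behaviour of the differential Galois group under algebraic base extension (see \cite{ka,kol} or \cite{vasi}): restriction of automorphisms yields an isomorphism from $\mathrm{DGal}(K_0\cdot\widetilde{L}/K_0)$ onto $\mathrm{Gal}(\widetilde{L}/\widetilde{L}\cap K_0)$, a closed subgroup of $G:=\mathrm{DGal}(\widetilde{L}/\mathbb{C}(\theta))$; since $\widetilde{L}\cap K_0$ is algebraic over $\mathbb{C}(\theta)$ it lies in the fixed field $\widetilde{L}^{\,G^{0}}$ (which is exactly the relative algebraic closure of $\mathbb{C}(\theta)$ in $\widetilde{L}$), so $G^{0}\subseteq\mathrm{DGal}(L_0/K_0;D)\subseteq G$ and hence $(\mathrm{DGal}(L_0/K_0;D))^{0}=G^{0}$. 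Chaining the identifications above gives $(\mathrm{DGal}(L_0/K_0))^{0}=(\mathrm{DGal}(L/\mathbb{C}(z)))^{0}$, as claimed. The main obstacle is not any individual step but keeping the two derivations $\delta$ and $D$ straight; once one observes that rescaling the derivation by an element of the base field is invisible to Picard--Vessiot theory, the statement reduces to the cited algebraic-base-change theorem.
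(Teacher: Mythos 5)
The paper does not actually prove Proposition \ref{pr1}: it is imported from \cite{acbl} without proof, and the closest argument in the text is the proof of the companion Proposition \ref{propjaw}, which runs the standard Galois-correspondence diagram for the compositum of $L$ and $K_0$ over $L\cap K_0\supseteq\mathbb{C}(z)$ and concludes $\mathrm{DGal}(L_0/K_0)\simeq\mathrm{DGal}(L/L\cap K_0)$, whence equality of identity components when $L\cap K_0$ is algebraic over $\mathbb{C}(z)$. Your argument is correct and is in substance that same reduction, but it supplies precisely the bookkeeping the paper elides: you make the identification of $(\mathbb{C}(z),d/dz)$ with $(\mathbb{C}(\theta),D)$, $D=(\partial_x\theta)^{-1}\delta$, explicit; you verify that $\mathcal{L}_x$ is $(\partial_x\theta)^2$ times the $D$-transport of $\mathcal{L}_z$ (so solutions and Wronskians correspond); and you observe that rescaling the derivation by a unit of the base field changes neither the solution set, nor the constants, nor the group of differential $K_0$-automorphisms. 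That is what turns ``equation \eqref{eq2} is the pullback of \eqref{eq1}'' from a notation into a statement one can feed to the algebraic-base-change theorem, and the final step $G^0\subseteq\mathrm{Gal}(\widetilde{L}/\widetilde{L}\cap K_0)\subseteq G$ forcing equal identity components is exactly right. One small caveat: your early justification that $L_0=K_0\langle\eta_1,\eta_2\rangle$ has constant field $\mathbb{C}$ ``since $K_0\subseteq K$ already has $\mathbb{C}$ as constants'' is not by itself an argument (constants could a priori appear in the extension); it is rescued only later, once you identify $L_0=K_0\cdot\widetilde{L}$ and invoke the hypothesis that $K_0$ is algebraic over $\mathbb{C}(\theta)$, since an algebraic extension of a differential field whose constants are algebraically closed acquires no new constants. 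You should either reorder so that this point is established before $L_0$ is declared a Picard--Vessiot extension, or simply take $L_0$ to be the Picard--Vessiot extension posited in the statement and check it coincides with $K_0\cdot\widetilde{L}$.
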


\begin{proposition}\label{propjaw}
Assume that $\mathcal L_x$ and $\mathcal L_z$ are as in proposition \ref{pr1}. Let $\varphi$ be the transformation given by
$$\varphi:\left\{\begin{array}{l}z\mapsto \theta(x)\\ \\
\partial_z\mapsto\frac1{\partial_x\theta}\delta.\end{array}\right.$$ 
Then
$\mathrm{DGal}(L_0/K_0)\simeq\mathrm{DGal}(L/K_0\cap L)\subset
\mathrm{DGal}(L/\mathbb{C}(z))$. Furthermore, if $K_0\cap L$ is
algebraic over $\mathbb{C}(z)$, then
$(\mathrm{DGal}(L_0/K_0))^0\simeq
(\mathrm{DGal}(L/\mathbb{C}(x)))^0.$
\end{proposition}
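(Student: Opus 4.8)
The plan is to view $\varphi$ as the composition of the change of variable $z\mapsto\theta(x)$ (already analyzed in Proposition~\ref{pr1}) with the identification of Picard–Vessiot extensions, and then to track carefully which base field each Galois group lives over. First I would recall from Proposition~\ref{pr1} that the substitution $z=\theta(x)$ transforms $\mathcal L_z$ into $\mathcal L_x$ and that the coefficients of $\mathcal L_x$ lie in $K_0=\mathbb C\langle\theta\rangle$. The key structural observation is that a fundamental system $\{y_1,y_2\}$ of $\mathcal L_z$ over $\mathbb C(z)$ yields, via $r_i=y_i\circ\theta$, a fundamental system of $\mathcal L_x$; hence the Picard–Vessiot field $L_0=K_0\langle r_1,r_2\rangle$ is nothing but the compositum $K_0\cdot L$ sitting inside a large common extension (here $L$ denotes, by abuse, the field generated over $\mathbb C(z)\hookrightarrow K$ by the $y_i$ and their derivatives, transported along $\theta$). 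This is where I must be a little careful: $\mathbb C(z)$ embeds into $K$ via $z\mapsto\theta$, and $L$ should be understood as the corresponding Picard–Vessiot extension inside an ambient differential extension of $K$.

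Granting that, the first isomorphism $\mathrm{DGal}(L_0/K_0)\simeq\mathrm{DGal}(L/K_0\cap L)$ is an instance of the standard Galois-theoretic compositum lemma: for a Picard–Vessiot extension $L/\mathbb C(z)$ and an intermediate-free extension $K_0$ with $K_0\cdot L=L_0$, restriction of automorphisms gives an isomorphism $\mathrm{DGal}(L_0/K_0)\xrightarrow{\ \sim\ }\mathrm{DGal}(L/K_0\cap L)$. I would invoke this (it is the differential analogue of the classical "natural irrationalities" theorem, available in any of the references \cite{vasi,mo} cited in the paper), after checking its hypotheses: $L/\mathbb C(z)$ is Picard–Vessiot, $K_0$ and $L$ have the same field of constants $\mathbb C$, and $L_0$ is generated over $K_0$ by solutions of the equation obtained from $\mathcal L_z$. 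The inclusion $\mathrm{DGal}(L/K_0\cap L)\subset\mathrm{DGal}(L/\mathbb C(z))$ is then immediate since $\mathbb C(z)\subseteq K_0\cap L$, any automorphism fixing $K_0\cap L$ in particular fixes $\mathbb C(z)$.

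For the "furthermore" clause, suppose $K_0\cap L$ is algebraic over $\mathbb C(z)$. Then $\mathrm{DGal}(L/K_0\cap L)$ is a finite-index subgroup of $\mathrm{DGal}(L/\mathbb C(z))$ — finite index because the algebraic subextension $K_0\cap L$ of the Picard–Vessiot extension $L/\mathbb C(z)$ corresponds, by the Galois correspondence, to a finite-index (indeed closed, of finite index) subgroup. A finite-index algebraic subgroup contains the identity component, so the identity components coincide:
$$(\mathrm{DGal}(L/K_0\cap L))^0=(\mathrm{DGal}(L/\mathbb C(z)))^0.$$
Combining with the isomorphism already established and passing to identity components gives $(\mathrm{DGal}(L_0/K_0))^0\simeq(\mathrm{DGal}(L/\mathbb C(x)))^0$, where $\mathbb C(x)$ here plays the role of $\mathbb C(z)$ after the embedding (this matches the notation of Proposition~\ref{pr1}, cf. the hypothesis there that $\mathbb C(z)\to K_0$ be algebraic).

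The main obstacle I anticipate is not any computation but the bookkeeping of fields: making precise the ambient differential field in which $L$, $K_0$, and $L_0=K_0\cdot L$ all sit, and verifying that no new constants are introduced in the compositum (so that $K_0\cdot L$ really is a Picard–Vessiot extension of $K_0$ with the expected Galois group). Once the "no new constants" point is secured — it follows because $\mathbb C$ is algebraically closed and $L/\mathbb C(z)$ is Picard–Vessiot — the compositum lemma and the finite-index argument are routine, and the statement follows formally.
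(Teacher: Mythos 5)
Your argument is correct and follows essentially the same route as the paper: identify $\mathbb{C}(z)$ with its image in $K_0$, realize $L_0$ as the compositum $K_0\cdot L$, and apply the standard Galois-theoretic diagram (natural irrationalities) to get $\mathrm{DGal}(L_0/K_0)\simeq\mathrm{DGal}(L/K_0\cap L)\subset\mathrm{DGal}(L/\mathbb{C}(z))$, with the identity components coinciding because an algebraic (hence finite, as $L$ is finitely generated) subextension corresponds to a finite-index closed subgroup. Your additional remarks on the no-new-constants issue and on reading the paper's $\mathbb{C}(x)$ as $\mathbb{C}(z)$ only make explicit what the paper leaves implicit.
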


\begin{proof} By Proposition \ref{pr1}, the transformation $\varphi$
leads us to
$$\mathbb{C}(z)\simeq\varphi(\mathbb{C}(z))\hookrightarrow
K_0.$$ We identify $\mathbb{C}(z)$ with
$\varphi(\mathbb{C}(z))$,  so that we can view $\mathbb{C}(z)$
as a subfield of $K_0$ and then by the standard Galois theory diagram (see
\cite{ka,we2}),

 \xymatrix{
      & L_0 \ar@{-}[dl] \ar@{-}[dr] \ar@/^/@{.>}[dr]^{\mathrm{DGal}(L_0/K_0)} & \\
L \ar@{-}[dr] \ar@/_/@{.>}[dr]_{\mathrm{DGal}(L/(L\bigcap K_0)) }
& & K_{0}
\ar@{-}[dl] \\
         & L\bigcap K_0  \ar@{-}[d]  &  \\
         & \mathbb{C}(z)             &
}

\noindent  we have
$$\mathrm{DGal}(L_0/K_0)\simeq\mathrm{DGal}(L/K_0\cap
L)\subset \mathrm{DGal}(L/\mathbb{C}(z)).$$
If $K_0\cap L$ is
algebraic over $\mathbb{C}(z)$, then
$$(\mathrm{DGal}(L_0/K_0))^0\simeq
(\mathrm{DGal}(L/\mathbb{C}(z)))^0.$$
\end{proof}

For the
remainder of this section we write $z=z(x)$ instead of
$\theta$.\\

\begin{remark}[Hard Algebrization]\label{hardal}  The proper pullback from
equation \eqref{eq2} to equation \eqref{eq1} is an algebrization
process. Therefore, we can try to algebrize any second order
linear differential equations with non-rational coefficients
(proper pullback) if we can put it in the form of equation
\eqref{eq2}. To do this, which will be called \textit{hard
algebrization}, we use the following steps.
\begin{description}
\item[Step 1] Find a term of the form $(\partial_x z)^2$ in the coefficient of
$y$ to obtain a candidate for $\partial_x z$ and then $z$.

\item[Step 2] Divide the coefficient of
$y$  by $(\partial_x z)^2$  to obtain a candidate $b(z)$ and check whether $b\in\mathbb{C}(z)$.

\item[Step 3] Add $(\partial_x^2 z)/\partial_xz$  to the coefficient of $\partial_xy$ and divide the result
by $\partial_xz$ to obtain a candidate
$a(z)$ and check whether $a\in\mathbb{C}(z)$.

\end{description}
\end{remark}
\medskip

\noindent Let us illustrate this method of  hard algebrization on the following example.\\

\begin{example} In \cite[p. 256]{si3}, Singer presents the second order linear differential equation
$$\partial_x^2 r-{1\over x(\ln x+1)}\partial_x r-(\ln x+1)^2r=0.$$ To algebrize
this differential equation we choose $(\partial_xz)^2=(\ln
x+1)^2$, so that $\partial_xz=\ln x+1$ and for instance
$$z=\int (\ln x+1 )dx=x\ln x,\quad b(z)=-1.$$ Now we find $a(z)$
in the expression
$$a(z)(\ln x+1)-{1\over x(\ln x+1)}=-{1\over x(\ln x+1)},$$
obtaining $a(z)=0$. So that the new differential equation is given
by $\partial_z^2y-y=0$, in which $y(z(x))=r(x)$ and one basis of
solutions of this differential equation is given by $\langle
e^z,e^{-z}\rangle$. Thus, the respective basis of solutions of the
first differential equation is given by $\langle e^{x\ln x}
,e^{-x\ln x}\rangle$.
\end{example}
\medskip

In general, this method is strongly heuristic because the quest of $z=z(x)$
in $b(z)(\partial_xz)^2$ can be purely a lottery, or simply there
may not exist  $z$ such that $a(z),b(z)\in\mathbb{C}(z)$. For
example, the equations presented by Singer in \cite[p. 257, 261,
270]{si3} and given by
$$\partial_x^2 r+{\mp 2x\ln^2x\mp 2x\ln x-1\over x\ln
x+x}\partial_x r+{-2x\ln^2 x-3x\ln x-x\mp 1\over x\ln x+ x }r=0,$$
$$\partial_x^2 r+{4x\ln x+ 2x\over 4x^2\ln x}\partial_x
r-{1\over 4x^2\ln x}r=0,\quad (x^2\ln^2x)\partial_x^2
r+(x\ln^2x-3x\ln x)\partial_x r+3r=0,$$ cannot be algebrized
systematically with this method, although they correspond to
pullbacks (not proper pullbacks) of differential equations with
constant
coefficients.\\

In \cite{brfr}, Bronstein and Fredet developed and implemented an
algorithm to solve differential equations over
$\mathbb{C}(x,e^{\int_{}^{}f})$ without algebrizing them, see also
\cite{fr}. As an application of proposition \ref{pr1} we have the
following result\footnote{This result is given in \cite[\S
2]{acbl}, but we include here the proof for completeness.}.\\

\begin{proposition}[Linear differential equation over
$\mathbb{C}(x,e^{\int_{}^{}f})$, \cite{acbl}]\label{brofe}

Let $f\in\mathbb{C}(x)$ be a rational function. Then, the
differential equation

\begin{equation}\label{exp}
\partial_x^2{r}-\left(f+{\partial_x{f}\over f}
-fe^{\int_{}^{}f}a\left(e^{\int_{}^{}f}\right)\right)\partial_x{r}+\left(f\left(e^{\int_{}^{}f}\right)\right)^2b\left(e^{\int_{}^{}f}\right)r=0,
\end{equation}
is algebrizable by the change $z=e^{\int_{}^{}f}$ and its
algebraic form is given by
$$\partial_z^2y+a(z)\partial_zy+b(z)y=0, \quad
r(x)=y(z(x)).$$
\end{proposition}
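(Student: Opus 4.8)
The plan is to apply Proposition~\ref{pr1} essentially verbatim, taking for the non-constant element $\theta\in K$ the function $z=z(x)=e^{\int f}$, which is exponential over $\mathbb{C}(x)$ and so lives in a differential extension $K$ of $\mathbb{C}(x)$ with the same field of constants $\mathbb{C}$. First I would record the two quantities that Proposition~\ref{pr1} requires. Since $\partial_x z = f\,e^{\int f}=fz$, differentiating once more gives $\partial_x^2 z=(\partial_x f)z+f\,\partial_x z=\big(\partial_x f+f^2\big)z$, and hence
\[
\frac{\partial_x^2 z}{\partial_x z}=\frac{\partial_x f}{f}+f .
\]
This single identity is really the whole computational content of the statement.

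Next I would substitute into equation~\eqref{eq2}. Starting from the asserted algebraic form $\partial_z^2 y+a(z)\partial_z y+b(z)y=0$ over $\mathbb{C}(z)$ and performing the change of variable $z=z(x)$, Proposition~\ref{pr1} produces
\[
\partial_x^2 r+\Big(a(z)\,\partial_x z-\tfrac{\partial_x^2 z}{\partial_x z}\Big)\partial_x r+b(z)(\partial_x z)^2 r=0,\qquad r=y\circ z .
\]
Inserting $\partial_x z=f\,e^{\int f}$ and $\partial_x^2 z/\partial_x z=\partial_x f/f+f$, the coefficient of $\partial_x r$ becomes $f\,e^{\int f}a\!\left(e^{\int f}\right)-\partial_x f/f-f=-\big(f+\partial_x f/f-f\,e^{\int f}a(e^{\int f})\big)$, and the coefficient of $r$ becomes $\big(f\,e^{\int f}\big)^2 b\!\left(e^{\int f}\right)$. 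These are exactly the coefficients of~\eqref{exp}, so \eqref{exp} is the proper pullback of $\partial_z^2 y+a(z)\partial_z y+b(z)y=0$ under $z=e^{\int f}$; by the definition of algebrization this is precisely the claim that \eqref{exp} is algebrizable with the stated algebraic form, and $r(x)=y(z(x))$.

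I expect no genuine obstacle here: once the derivative identity above is in hand, the result is an immediate specialization of Proposition~\ref{pr1}. The only points deserving a sentence are that one needs $f\not\equiv 0$ so that $z$ is non-constant and Proposition~\ref{pr1} applies, and that $a(z),b(z)$ lie in $\mathbb{C}(z)$, which holds by hypothesis because they are by construction the rational-function coefficients of the algebraic form. If one additionally wants the comparison of differential Galois groups, it follows from the last assertion of Proposition~\ref{pr1} (or from Proposition~\ref{propjaw}) as soon as the extension $\mathbb{C}(z)\hookrightarrow K_0$, $z\mapsto e^{\int f}$, is algebraic; for $e^{\int f}$ transcendental this is where a little extra care would be required, but it lies outside the statement being proved.
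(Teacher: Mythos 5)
Your proposal is correct and follows essentially the same route as the paper, which itself presents this proposition as a direct application of Proposition~\ref{pr1}: the whole content is the identity $\partial_x z = f z$, hence $\partial_x^2 z/\partial_x z = \partial_x f/f + f$, substituted into equation~\eqref{eq2}. The paper's proof merely redoes the chain-rule computation by hand instead of quoting \eqref{eq2}, so your version is, if anything, slightly cleaner.
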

\begin{proof}
Assume that $r(x)=y(z(x)),$  and $z=z(x)=e^{\int f dx}$. We can
see that
$$\partial_xz=fz,\quad \partial_zy={\partial_x r\over fz},\quad \partial_z^2y={1\over fz}\partial_x\left({\partial_x r\over fz}\right)=
{1\over \left(fe^{\int_{}^{}f}\right)^2}\left(\partial_x^2
r-f+\left({\partial_x{f}\over f}\right)\right)\partial_x r.$$
Replacing in $\partial_z^2y+a(z)\partial_zy+b(z)y=0$ we obtain
equation \eqref{exp}.
\end{proof}

\begin{example}
The differential equation $$ \partial_x^2{r}-\left(x+{1\over x}
-2xe^{x^2}\right)\partial_x{r}+\lambda x^2e^{x^2}r=0,
$$
is algebrizable by the change $z=e^{x^2\over 2}$ and its algebraic
form is given by
$$\partial_z^2y+2z\partial_zy+\lambda y=0.$$

\end{example}
\begin{remark}
According to proposition \ref{brofe}, we have the following cases.

\begin{enumerate}
\item $f=n{\partial_xh\over h}$, for a rational function $h$, $n\in\mathbb{Z}_+$, we have the trivial case, both equations
are over the Riemann sphere and they have the same differential
field, so that does not need to be algebrized.

\item $f={1\over n}{\partial_xh\over h}$, for a rational function $h$,
$n\in\mathbb{Z}^+$, \eqref{exp} is defined over an algebraic
extension of $\mathbb{C}(x)$ and so that this equation is not
necessarily over the Riemann sphere.

\item $f\neq q{\partial_xh\over h}$, for any rational function $h$,  $q\in\mathbb{Q}$, \eqref{exp} is defined over a
transcendental extension of $\mathbb{C}(x)$ and so that this
equation is not over the Riemann sphere.
\end{enumerate}
\end{remark}

To algebrize second order linear differential equations may be easier
when the term in $\partial_x r$ is absent and the change of
variable is {\it{Hamiltonian}}.

\begin{definition}[Hamiltonian change of variable, \cite{acbl}]\label{def2} A change of
variable $z=z(x)$ is called \textit{Hamiltonian} if
$(z(x),\partial_xz(x))$ is a solution curve of the autonomous
classical one degree of freedom Hamiltonian system
$$\left\{\begin{array}{l}\partial_xz=\partial_wH\\ \partial_xw=-\partial_zH\end{array}\right.\quad \textrm{with}\quad H=H(z,w)={w^2\over 2}+V(z),$$
for some $V\in K$.
\end{definition}
\begin{remark}
  Assume that we algebrize equation \eqref{eq2} through a Hamiltonian
change of variables $z = z(x)$, i.e. $V\in\mathbb{C}(z)$. Then
$K_0 = \mathbb C(z,
\partial_xz, \ldots)$, but we have the algebraic relation
  $$(\partial_xz) ^2 = 2h - 2V(z), \quad h = H(z,\partial_xz) \in \mathbb C,$$
so that $K_0 = \mathbb C(z,\partial_xz)$ is an algebraic extension
of $\mathbb C(z)$. By proposition \ref{pr1}, the identity connected
component of the differential Galois group is preserved. On the
other hand, we can identify a Hamiltonian change of variable
$z=z(x)$ when there exists $\alpha\in K$ such that $(\partial_x
z)^2= \alpha(z)$. Thus, we introduce the {\it Hamiltonian
algebrization}, which corresponds to the algebrization process
done through a Hamiltonian change of variable.

\end{remark}

The following result, which also can be found in \cite[\S
2]{acbl}, is an example of Hamiltonian algebrization and
correspond to the case of reduced second order linear differential
equations.

\begin{proposition}[Hamiltonian Algebrization, \cite{acbl}]\label{pr2}
The differential equation
$$\partial_x^2{r}=q(x)r$$
is algebrizable through a Hamiltonian change of variable $z=z(x)$
if and only if there exist $f,\alpha$ such that
$${\partial_z\alpha\over\alpha},\quad {f\over \alpha}\in \mathbb{C}(z),\text{ where } f(z(x))=q(x),\quad \alpha(z)=2(H-V(z))=(\partial_xz)^2.$$
Furthermore, the algebraic form of the equation
$\partial_x^2{r}=q(x)r$ is
\begin{equation}\label{eq4}
\partial_z^2y+{1\over2}{\partial_z\alpha\over \alpha}\partial_zy-{f\over\alpha}y=0,\quad
r(x)=y(z(x)).
\end{equation}
\end{proposition}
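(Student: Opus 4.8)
The plan is to read off both implications directly from the change-of-variable formula of Proposition~\ref{pr1}, specialised to the reduced equation $\partial_x^2 r=q(x)r$ and to a change $z=z(x)$ that is Hamiltonian; by Definition~\ref{def2} (and the remark following it) this means precisely that $(\partial_x z)^2=\alpha(z)$ for some function $\alpha$, with corresponding potential $V(z)=H-\tfrac12\alpha(z)$. The one computation underlying everything is obtained by differentiating $(\partial_x z)^2=\alpha(z)$ with respect to $x$ and cancelling $\partial_x z$ (legitimate since $z$ is non-constant, so $\partial_x z\not\equiv 0$): this yields $\partial_x^2 z=\tfrac12(\partial_z\alpha)(z)$, hence $\partial_x^2 z/(\partial_x z)^2=\tfrac12\,(\partial_z\alpha/\alpha)(z)$ along $z=z(x)$.

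For the ``only if'' direction, I would suppose that $\partial_x^2 r=q(x)r$ is the proper pullback, under the Hamiltonian change $z=z(x)$, of an equation $\partial_z^2 y+a(z)\partial_z y+b(z)y=0$ with $a,b\in\mathbb{C}(z)$. By Proposition~\ref{pr1} the coefficient of $\partial_x r$ in the pulled-back equation is $a(z)\partial_x z-\partial_x^2 z/\partial_x z$; since the target equation has no first-order term, this term must vanish, forcing $a(z)=\partial_x^2 z/(\partial_x z)^2=\tfrac12\,\partial_z\alpha/\alpha$, so that $\partial_z\alpha/\alpha\in\mathbb{C}(z)$. The coefficient of $r$ is $b(z)(\partial_x z)^2=b(z)\alpha(z)$, and equating it to $-q(x)=-f(z)$, where $f$ is defined by $f(z(x))=q(x)$, gives $b(z)=-f/\alpha\in\mathbb{C}(z)$, i.e.\ $f/\alpha\in\mathbb{C}(z)$. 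The same matching of coefficients identifies the algebraic form as \eqref{eq4}.

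For the ``if'' direction, I would take \eqref{eq4} as the candidate algebraic form. Its coefficients $\tfrac12\,\partial_z\alpha/\alpha$ and $-f/\alpha$ lie in $\mathbb{C}(z)$ by hypothesis, so it is an equation over $\mathbb{P}^1$. Pulling it back by the Hamiltonian change $z=z(x)$ via Proposition~\ref{pr1}, and using $(\partial_x z)^2=\alpha(z)$ together with $\partial_x^2 z=\tfrac12\partial_z\alpha$, the first-order coefficient collapses to $\tfrac12\tfrac{\partial_z\alpha}{\alpha}\partial_x z-\tfrac12\tfrac{\partial_z\alpha}{\partial_x z}=0$ and the zeroth-order coefficient becomes $-\tfrac{f}{\alpha}\cdot\alpha=-f(z(x))=-q(x)$, so the pullback is exactly $\partial_x^2 r=q(x)r$ with $r(x)=y(z(x))$. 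Hence \eqref{eq4} is a proper algebrization of the given equation, as claimed.

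I do not expect a genuine obstacle here: the whole argument is the chain-rule bookkeeping of Proposition~\ref{pr1} plus the single identity $\partial_x^2 z=\tfrac12\partial_z\alpha$. The only points requiring care are definitional — that ``Hamiltonian change of variable'' unwinds exactly to the condition $(\partial_x z)^2=\alpha(z)$, and that ``algebrizable'' is meant in the sense of a \emph{proper} pullback (not a weak pullback, which would allow an extra $\otimes(\partial_x+v)$ twist and spoil the coefficient matching). A harmless subtlety is the sign ambiguity $\partial_x z=\pm\sqrt{\alpha(z)}$, which never enters the formulas because only $(\partial_x z)^2$ and $\partial_x^2 z$ appear.
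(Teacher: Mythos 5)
Your argument is correct. The paper does not actually reprove this proposition (it defers to \cite{acbl}), but your derivation --- specializing the change-of-variable formula of Proposition \ref{pr1} to a vanishing first-order term and using the single identity $\partial_x^2 z=\tfrac12\partial_z\alpha$, obtained by differentiating $(\partial_x z)^2=\alpha(z)$ and cancelling $\partial_x z$ --- is precisely the computation the statement rests on, and it matches the style of the proofs the paper does include for the neighbouring Propositions \ref{brofe} and \ref{cor1}. Both implications and the identification of the algebraic form \eqref{eq4} check out, and your cautionary remarks (proper versus weak pullback, sign of $\sqrt{\alpha}$) are the right ones.
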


\begin{remark}[Using the Algebrization Method]
The goal is to algebrize the differential equation $\partial_x^2
r=q(x)r$. We propose the following steps.

\begin{description}
\item[Step 1] Look for a \emph{Hamiltonian change of variable} $z=z(x)$ and two functions  $f$ and $\alpha$ such that $q(x)=f(z(x))$ and
$(\partial_xz (x))^2=\alpha(z(x))$.
\item[Step 2] Verify whether or not $f(z)/\alpha (z)\in\mathbb{C}(z)$ and $\partial_z\alpha(z)/\alpha
(z)\in\mathbb{C}(z)$ to see if the equation $\partial_x^2 r=q(x)r$
is algebrizable.
\item[Step 3] If the equation $\partial_x^2
r=q(x)r$ is algebrizable, its algebrization is
$$\partial_z^2y+{1\over2}{\partial_z\alpha\over \alpha}\partial_zy-{f\over\alpha}y=0,\quad y(z(x))=r(x).$$
\end{description}
When we have algebrized the differential equation $\partial_x^2
r=q(x)r$, we study its integrability, its eigenring  and its
differential Galois group.
\end{remark}
\begin{examples} Consider the following examples.
\begin{itemize}
\item Given the differential equation $\partial_x^2 r=f(\tan x)r$ with $f\in\mathbb{C}(\tan x)$, we can choose
 $z=z(x)=\tan x$ to obtain $\alpha(z)=(1+z^2)^2$, so that $z=z(x)$ is a Hamiltonian change of variable. We can
see that ${\partial_z\alpha\over\alpha}, {f\over
\alpha}\in\mathbb{C}(z)$ and the algebraic form of the
differential equation $\partial_x^2 r=f(\tan x)r$ with this
Hamiltonian change of variable is
$$\partial_z^2y+{2z\over 1+z^2}\partial_zy-{f(z)\over (1+z^2)^2}y=0,\quad y(\tan x)=r(x).$$

\item Given the differential equation $$\partial_x^2 r={\sqrt{1+x^2}+x^2\over 1+x^2}r,$$ we can choose
 $z=z(x)=\sqrt{1+x^2}$ to obtain $$f(z)={z^2+z-1\over z^2},\quad \alpha(z)={z^2-1\over z^2},$$ so that $z=z(x)$ is a Hamiltonian change of variable. We can
see that ${\partial_z\alpha\over\alpha}, {f\over
\alpha}\in\mathbb{C}(z)$ and the algebraic form for this case is
$$\partial_z^2y+{1\over z(z^2-1)}\partial_zy-{z^2+z-1\over z^2-1}y=0,\quad y(\sqrt{1+x^2})=r(x).$$
\end{itemize}
\end{examples}

We remark that in general the method of Hamiltonian algebrization
is not an algorithm, because it may not be obvious  to obtain a suitable
Hamiltonian $H$ satisfying definition \ref{def2}. We present now a
particular case of Hamiltonian algebrization considered as an
algorithm\footnote{Proposition \ref{cor1} is a slight improvement
of a similar result given in \cite[\S 2]{acbl}. Furthermore, we
include the proof here for completeness.}.

\begin{proposition}[Hamiltonian Algebrization Algorithm, \cite{acbl}]\label{cor1}
Consider $q(x)=g(z_1,\cdots, z_n)$, where $z_i=e^{\lambda_i x}$,
$\lambda_i\in\mathbb{C}^*$. The equation $\partial_x^2{r}=q(x)r$
is algebrizable if and only if.
$${\lambda_i\over \lambda_j}\in \mathbb{Q^*},\quad 1\leq i\leq n,\, 1\leq j\leq n,\quad g\in \mathbb{C}(z).$$
Furthermore, $\lambda_i=c_i\lambda$, where
$\lambda\in\mathbb{C}^*$ and $c_i\in \mathbb{Q}^*$ and for the
Hamiltonian change of variable
\begin{displaymath}
z=e^{\lambda x\over q},\text{ where } c_i={p_i\over q_i},\,
p_i,q_i\in\mathbb{Z}^*,\text{ }\gcd(p_i,q_i)=1 \text{ and }
q=\mathrm{lcm}(q_1,\cdots,q_n),
\end{displaymath} the algebrization of the differential equation
$\partial_x^2{r}=q(x)r$ is $$\partial_z^2y+{1\over
z}\partial_zy-q^2{g(z^{m_1},\ldots, z^{m_n})\over
\lambda^2z^2}y=0,\quad m_i={qp_i\over q_i},\quad y(z(x))=r(x).$$
\end{proposition}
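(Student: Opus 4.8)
The plan is to derive the statement from the Hamiltonian algebrization criterion of Proposition~\ref{pr2}, handling the two implications separately: the ``if'' part by an explicit construction, the ``only if'' part by a transcendence-degree argument.

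For the ``if'' direction, I would argue as follows. Assume $\lambda_i/\lambda_j\in\mathbb{Q}^*$ for all $i,j$ and $g\in\mathbb{C}(z)$, and write $\lambda_i=c_i\lambda$ with $\lambda\in\mathbb{C}^*$, $c_i=p_i/q_i$ in lowest terms, and $q=\mathrm{lcm}(q_1,\dots,q_n)$. First I would propose the change of variable $z=z(x)=e^{\lambda x/q}$ and check that it is Hamiltonian: since $\partial_x z=(\lambda/q)z$ we get $(\partial_x z)^2=\alpha(z)$ with $\alpha(z)=(\lambda^2/q^2)z^2$, so $z(x)$ is a solution curve of $H=w^2/2-(\lambda^2/2q^2)z^2$ in the sense of Definition~\ref{def2}, and moreover $\partial_z\alpha/\alpha=2/z\in\mathbb{C}(z)$. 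Next, because $q_i\mid q$, the exponents $m_i:=qp_i/q_i=q\lambda_i/\lambda$ are integers satisfying $z^{m_i}=e^{m_i\lambda x/q}=e^{\lambda_i x}=z_i$; hence $q(x)=g(z_1,\dots,z_n)=g(z^{m_1},\dots,z^{m_n})=:f(z(x))$ with $f\in\mathbb{C}(z)$, since $g$ is rational. Therefore $f/\alpha=q^2g(z^{m_1},\dots,z^{m_n})/(\lambda^2z^2)\in\mathbb{C}(z)$, and Proposition~\ref{pr2} applies directly: the equation $\partial_x^2 r=q(x)r$ is algebrizable, and substituting $\partial_z\alpha/\alpha=2/z$ and the value of $f/\alpha$ into the algebraic form $\partial_z^2 y+\frac12(\partial_z\alpha/\alpha)\partial_z y-(f/\alpha)y=0$ provided by that proposition yields precisely the displayed equation, with $y(z(x))=r(x)$.

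For the ``only if'' direction, I would argue by contraposition using transcendence degrees. Suppose $\partial_x^2 r=q(x)r$ is algebrizable through some Hamiltonian change of variable $z=z(x)$; by Proposition~\ref{pr2} we have $q(x)=f(z(x))$ with $f\in\mathbb{C}(z)$, so $q$ belongs to $\mathbb{C}(z(x))$, a field of transcendence degree at most one over $\mathbb{C}$. After discarding from $g$ any variable on which it does not genuinely depend, I would observe that the frequencies $\lambda_i$ actually occurring in $q$ must be pairwise commensurable: if two of them were $\mathbb{Q}$-linearly independent, the classical algebraic independence over $\mathbb{C}(x)$ of exponentials with $\mathbb{Q}$-independent exponents would make $q$ generate over $\mathbb{C}$ a field of transcendence degree at least two, contradicting $q\in\mathbb{C}(z(x))$. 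Hence $\lambda_i=c_i\lambda$ with $c_i\in\mathbb{Q}^*$; each $z_i$ is then a rational power of $z(x)$, and the membership $q\in\mathbb{C}(z(x))$ forces $g$ to be a rational function of the monomials $z^{m_i}$, that is, $g\in\mathbb{C}(z)$.

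I expect the main obstacle to lie in the ``only if'' direction: the ``if'' part is a routine verification through Proposition~\ref{pr2}, but necessity requires ruling out \emph{every} conceivable Hamiltonian change of variable, not merely the exponential one. This is where the transcendence input is essential --- algebraic independence of exponentials with $\mathbb{Q}$-independent exponents over $\mathbb{C}(x)$, together with the bound that $\mathbb{C}(z(x))$ has transcendence degree at most one over $\mathbb{C}$ --- and it is also the step where care is needed to first pass to a reduced form of $g$ so that ``genuine dependence'' on each variable is well defined.
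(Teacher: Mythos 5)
Your ``if'' direction is exactly the paper's proof: the authors also set $z=e^{\lambda x/q}$, compute $\alpha=\lambda^2z^2/q^2$, observe $m_i=qp_i/q_i\in\mathbb{Z}$ because $q_i\mid q$, and feed $\partial_z\alpha/\alpha=2/z$ and $f/\alpha$ into Proposition~\ref{pr2}. In fact that is \emph{all} the paper proves; the ``only if'' half is not addressed there, so your attempt to supply it is an addition rather than a reconstruction.

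That added necessity argument has a genuine gap. You claim that if two frequencies were $\mathbb{Q}$-linearly independent, then ``$q$ would generate over $\mathbb{C}$ a field of transcendence degree at least two, contradicting $q\in\mathbb{C}(z(x))$.'' But $\mathbb{C}(q)$ is generated by a \emph{single} element and therefore always has transcendence degree at most one over $\mathbb{C}$, whatever $q$ is; no contradiction can arise this way. The algebraic independence of $e^{\lambda_1x},e^{\lambda_2x}$ over $\mathbb{C}(x)$ says something about the field $\mathbb{C}(x,z_1,z_2)$, not about $\mathbb{C}(q)$. To make an argument of this type work you would at least have to pass to the \emph{differential} field $\mathbb{C}\langle q\rangle=\mathbb{C}(q,\partial_xq,\partial_x^2q,\dots)$, show that genuine dependence of $g$ on incommensurable exponentials forces $\operatorname{trdeg}_{\mathbb{C}}\mathbb{C}\langle q\rangle\geq 2$, and compare with $\mathbb{C}\langle q\rangle\subseteq\mathbb{C}(z(x),\partial_xz(x))=\mathbb{C}(z,\sqrt{\alpha})$, which has transcendence degree one. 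Moreover, your reduction to ``genuine dependence'' does not suffice: take $g(z_1,z_2)=z_1z_2$ with $\lambda_1=1$, $\lambda_2=\sqrt{2}$. Then $g$ genuinely depends on both variables and $\lambda_1/\lambda_2\notin\mathbb{Q}$, yet $q(x)=e^{(1+\sqrt{2})x}$ is perfectly algebrizable via $z=e^{(1+\sqrt{2})x}$. So the ``only if'' statement, read literally as a condition on the chosen representation $(g,\lambda_1,\dots,\lambda_n)$, is false, and no proof along your lines (or any other) can close this; a correct necessity statement would have to be formulated intrinsically in terms of $q$ rather than of its presentation. Since the paper itself silently omits this direction, your sufficiency argument stands, but the necessity paragraph should be either removed or substantially reworked.
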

\begin{proof} Assuming $\lambda_i/\lambda_j=c_{ij}\in\mathbb{Q}^*$ we can see that there exists $\lambda\in\mathbb{C}^*$
 and $c_i\in\mathbb{Q}^*$ such that $\lambda_i=\lambda c_i$, so that $$e^{\lambda_ix}=e^{c_i\lambda x}=e^{{p_i\over q_i}\lambda x}=
 \left(e^{{\lambda\over q}x}\right)^{qp_i\over q_i},\, p_i,q_i\in\mathbb{Z}^*,\, \gcd(p_i,q_i)=1,\, \mathrm{lcm}(q_1,\ldots,q_n)=q.$$ Now,
 setting $z=z(x)=e^{{\lambda\over q}x}$ we can see that
$$f(z)=g(z^{m_1},\ldots, z^{m_n}),\quad m_i={qp_i\over q_i},\quad \alpha={\lambda^2z^2\over q^2}.$$ 
As $q|q_i$, we have  $m_i\in\mathbb{Z}$, so that
$${\partial_z\alpha\over\alpha},\quad {f\over\alpha}\in\mathbb{C}(z)$$ and
the algebraic form is given by $$\partial_z^2y+{1\over
z}\partial_zy-q^2{g(z^{m_1},\ldots, z^{m_n})\over
\lambda^2z^2}y=0,\quad y(z(x))=r(x).$$
\end{proof}
\begin{remark} Propositions \ref{pr2} and \ref{cor1} allow the algebrization
of a large number of second order differential equations, see for
example \cite{poza}. In particular, under the assumptions of
proposition \ref{cor1}, we can algebrize automatically
differential equations with trigonometrical or hyperbolic
coefficients.
\end{remark}
\begin{examples} Consider the following examples.
\begin{itemize}
\item Given the differential equation $$\partial_x^2 r={e^{\frac12x}+3e^{-\frac23x}-2e^{\frac54x}\over e^x+e^{-\frac32x}}r,\,
 \lambda_1=\frac12,\, \lambda_2=-\frac23,\, \lambda_3=\frac54,\, \lambda_4=1,\,\lambda_5=-\frac32,$$
we see that $\lambda_i/\lambda_j\in\mathbb{Q}$, $\lambda=1$,
$q=\mathrm{lcm}(1,2,3,4)=12$ and the Hamiltonian change of
variable for this case is $z=z(x)=e^{\frac1{12}x}$. We can see
that
$$\alpha(z)={1\over 144}z^2,\quad f(z)={z^6+3z^{-8}-2z^{15}\over
z^{12}+z^{-18}},\quad {\partial_z\alpha\over\alpha}, {f\over
\alpha}\in\mathbb{C}(z)$$ and the algebraic form is given by
$$\partial_z^2y+{1\over z}\partial_zy-144{z^6+3z^{-8}-2z^{15}\over z^{14}+z^{-16}}y=0,\quad y(e^{{1\over 12}x})=r(x).$$

\item Given the differential equation $$\partial_x^2 r=(e^{2\sqrt2x}+e^{-\sqrt2x}-e^{3x})r,\, \lambda_1=2\sqrt2,\, \lambda_2=-\sqrt2,\,\lambda_3=3,$$
we see that $\lambda_1/\lambda_2\in\mathbb{Q}$, but
$\lambda_1/\lambda_3\notin\mathbb{Q}$, so that this differential
equation cannot be algebrized.
\end{itemize}
\end{examples}

We remark that it is also possible to use the algebrization method to
transform differential equations  when either the starting
equation has rational coefficients or the transformed equation does not have rational coefficients.

\begin{examples} As an illustration, we treat the following examples.
\begin{itemize}
\item Consider the following differential equation:
$$\partial_x^2r={x^4+3x^2-5\over x^2+1}r=0.$$ We can choose
$z=z(x)=x^2$ so that $\alpha=4z$ and the new differential equation
is $$\partial_z^2y+{1\over2 z}\partial_zy-{z^2+2z-5\over
4z(z+1)}y=0,\quad y(x^2)=r(x).$$
\item Consider the Mathieu differential
equation $\partial_x^2 r=(a+b\cos(x))r$. We can choose
$z(x)=\ln(\cos(x))$ so that $\alpha=e^{-2z}-1$ and the new
differential equation is
$$\partial_z^2y-{1\over 1-e^{2z}}\partial_zy-{ae^{2z}+be^{3z}\over 1-e^{2z}}y=0,\quad y(\ln(\cos (x)))=r(x).$$
\end{itemize}
\end{examples}

Recently, the Hamiltonian algebrization (propositions \ref{pr2}
and \ref{cor1}) has been applied in \cite{ac,acalde,acbl} to
obtain non-integrability results.

\subsection{The Operator $\widehat{\partial_z}$ and the Hamiltonian Algebrization}
The generalization of proposition \ref{pr1} to higher order linear
differential equations is difficult. But, it is possible to obtain
generalizations of proposition \ref{pr2} by means of Hamiltonian
change of variable. We recall that $z=z(x)$ is a Hamiltonian
change of variable if there exists $\alpha$ such that
$(\partial_xz)^2=\alpha(z)$. More specifically, if $z=z(x)$ is a
Hamiltonian change of variable, we can write
$\partial_xz=\sqrt{\alpha}$, which leads us to the following
notation: $\widehat{\partial}_z=\sqrt{\alpha}\partial_z$.

We can see that $\widehat{\partial}_z$ is a \textit{derivation}
because satisfy
$\widehat{\partial}_z(f+g)=\widehat{\partial}_zf+\widehat{\partial}_zg$
and the Leibnitz rules $$\widehat{\partial}_z(f\cdot
g)=\widehat{\partial}_zf\cdot g +f\cdot\widehat{\partial}_zg,\quad
\widehat{\partial}_z\left(\frac{f}{g}\right)=\frac{\widehat{\partial}_zf\cdot
g -f\cdot\widehat{\partial}_zg}{g^2}.$$ We can notice that the
chain rule is given by $\widehat{\partial}_z(f\circ
g)=\partial_gf\circ g\widehat{\partial}_z(g)\neq
\widehat{\partial}_gf\circ g\widehat{\partial}_z(g)$. The
iteration of $\widehat{\partial}_z$ is given by
$$\widehat{\partial}_z^0=1,\quad
\widehat{\partial}_z=\sqrt{\alpha}\partial_z,\quad
\widehat{\partial}_z^n=\sqrt{\alpha}\partial_z\widehat{\partial}^{n-1}_z=
\underbrace{\sqrt{\alpha}\partial_z\left(\ldots
\left(\sqrt{\alpha}\partial_z\right)\right)}_{n\text{ times }
\sqrt{\alpha}\partial_z}.$$ We say that a Hamiltonian change of
variable is rational when the potential $V\in\mathbb{C}(x)$ and
for instance $\alpha\in\mathbb{C}(x)$. For the
remainder of this
paper, we let
$\widehat{\partial}_z=\sqrt{\alpha}\partial_z$ where $z=z(x)$ is a
Hamiltonian change of variable and $\partial_xz=\sqrt{\alpha}$. In
particular, $\widehat{\partial}_z=\partial_z=\partial_x$ if and
only if $\sqrt{{\alpha}}=1$, i.e., $z=x$.\\

\begin{theorem}\label{propjawpa}  Consider the systems of linear differential equations
$[A]$ and $[\widehat{A}]$ given respectively by
$$\partial_x\mathbf{Y}=-A\mathbf{Y},\quad
\widehat{\partial}_z\widehat{\mathbf{Y}}=-\widehat{A}\widehat{\mathbf{Y}},\quad
A=[a_{ij}],\quad \widehat{A}=[\widehat{a}_{ij}],\quad
\mathbf{Y}=[y_{i1}], \quad
\widehat{\mathbf{Y}}=[\widehat{y}_{i1}],$$ where $a_{ij}\in
K=\mathbb{C}(z(x),\partial_x(z(x)))$,
$\widehat{a}_{ij}\in\mathbb{C}(z)\subseteq
\widehat{K}=\mathbb{C}(z,\sqrt{\alpha})$, $1\leq i\leq n$, $1\leq
j\leq n$, $a_{ij}(x)=\widehat{a}_{ij}(z(x))$ and
$y_{i1}(x)=y_{i1}(z(x))$. Suppose that $L$ and $\widehat{L}$ are
the Picard-Vessiot extensions of $[A]$ and $[\widehat{A}]$
respectively. If the transformation $\varphi$ is given by
$$\varphi: \left\{ \begin{array}{l}x\mapsto z\\ a_{ij}\mapsto
\widehat{a}_{ij}\\y_{i1}(x)\mapsto\widehat{y}_{i1}(z(x))\\
 \partial_x\mapsto\widehat{\partial}_z\end{array}\right.,$$
then the following statements hold.
\begin{itemize}
\item $K\simeq \widehat{K}$,\quad $(K,\partial_x)\simeq (\widehat{K},\widehat{\partial}_z)$.
\item $\mathrm{DGal}(L/K)\simeq\mathrm{DGal}(\widehat{L}/\widehat{K})\subset
\mathrm{DGal}(\widehat{L}/{\mathbb{C}(z)})$.
\item $(\mathrm{DGal}(L/K))^0\simeq
(\mathrm{DGal}(\widehat{L}/{\mathbb{C}(z)})^0.$
\item $\mathcal E(A)\simeq \mathcal E(\widehat{A}).$
\end{itemize}
\end{theorem}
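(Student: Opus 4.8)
The plan is to observe that $\varphi$ is nothing but the substitution $z\leftrightarrow z(x)$ and that, because $z=z(x)$ is a Hamiltonian change of variable, it is an isomorphism of \emph{differential} fields $(K,\partial_x)\to(\widehat K,\widehat\partial_z)$; the first assertion is then immediate, the second and third follow from uniqueness of Picard--Vessiot extensions together with the Galois correspondence (the mechanism already used in Propositions \ref{pr1} and \ref{propjaw}), and the fourth by transporting the defining relation of the eigenring through $\varphi$. To set up the isomorphism: since $z=z(x)$ is Hamiltonian we have $(\partial_xz)^2=\alpha(z)$ with $\alpha\in\mathbb{C}(z)$, and $z(x)$ is transcendental over $\mathbb{C}$ (non-constant, $\mathbb{C}$ algebraically closed); hence $\mathbb{C}(z(x))$ is a rational function field and $K=\mathbb{C}(z(x))(\partial_xz(x))$ is presented over it by the single relation $(\partial_xz(x))^2=\alpha(z(x))$, exactly the presentation that $\widehat K=\mathbb{C}(z)(\sqrt{\alpha})$ has over $\mathbb{C}(z)$. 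Thus $z(x)\mapsto z$, $\partial_xz(x)\mapsto\sqrt{\alpha}$ defines a field isomorphism $\varphi:K\to\widehat K$. It intertwines the derivations: $\varphi(\partial_x(z(x)))=\sqrt{\alpha}=\sqrt{\alpha}\,\partial_z(z)=\widehat\partial_z(\varphi(z(x)))$, and differentiating $(\partial_xz)^2=\alpha(z)$ gives $\partial_x^2z(x)=\tfrac12(\partial_z\alpha)(z(x))$, so $\varphi(\partial_x(\partial_xz(x)))=\tfrac12\partial_z\alpha=\sqrt{\alpha}\,\partial_z(\sqrt{\alpha})=\widehat\partial_z(\varphi(\partial_xz(x)))$; as $\varphi\partial_x$ and $\widehat\partial_z\varphi$ are derivations agreeing on a generating set, they coincide, which proves the first bullet.

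For the Galois groups: $\varphi(A)=\widehat A$ (because $a_{ij}(x)=\widehat a_{ij}(z(x))$) and $\varphi$ carries a fundamental solution matrix of $[A]$ to one of $[\widehat A]$ (because $y_{i1}(x)\mapsto\widehat y_{i1}(z(x))$); since $\varphi$ also fixes the common field of constants $\mathbb{C}$, it extends to a differential isomorphism $L\to\widehat L$ over $\varphi:K\to\widehat K$, and conjugation yields $\mathrm{DGal}(L/K)\simeq\mathrm{DGal}(\widehat L/\widehat K)$. Inside $\widehat L$ there is a $\partial_z$-stable chain $\mathbb{C}(z)\subseteq\widehat K\subseteq\widehat L$; writing $\partial_z=(\sqrt{\alpha})^{-1}\widehat\partial_z$ and using $\widehat a_{ij},\alpha\in\mathbb{C}(z)$, the $\mathbb{C}(z)$-span of the entries $\widehat u_{ij}$ of a fundamental matrix of $[\widehat A]$ together with the elements $\sqrt{\alpha}\,\widehat u_{ij}$ is a finite-dimensional $\partial_z$-stable $\mathbb{C}(z)$-subspace of $\widehat L$ that generates $\widehat L$ over $\mathbb{C}(z)$, so $\widehat L$ is also a Picard--Vessiot extension of $(\mathbb{C}(z),\partial_z)$ (this exhibits the algebrization of $[A]$) and $\mathrm{DGal}(\widehat L/\mathbb{C}(z))$ is well defined. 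A $\widehat K$-automorphism of $\widehat L$ commutes with $\widehat\partial_z=\sqrt{\alpha}\,\partial_z$ if and only if it commutes with $\partial_z$ (since $\sqrt{\alpha}\in\widehat K^{*}$ is fixed), so $\mathrm{DGal}(\widehat L/\widehat K)$ is insensitive to this choice and is, by the fundamental theorem, a closed subgroup of $\mathrm{DGal}(\widehat L/\mathbb{C}(z))$; this gives the second bullet. For the third, $\widehat K/\mathbb{C}(z)$ is algebraic of degree at most $2$ (generated by $\sqrt{\alpha}$ with $\alpha\in\mathbb{C}(z)$), so $\mathrm{DGal}(\widehat L/\widehat K)$ has finite index in $\mathrm{DGal}(\widehat L/\mathbb{C}(z))$ and therefore contains its identity component; hence $(\mathrm{DGal}(L/K))^{0}\simeq(\mathrm{DGal}(\widehat L/\widehat K))^{0}=(\mathrm{DGal}(\widehat L/\mathbb{C}(z)))^{0}$. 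The algebraicity that must be hypothesised in Proposition \ref{pr1} is here automatic precisely because the change of variable is Hamiltonian, cf.\ Definition \ref{def2}.

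The fourth bullet is then immediate: since $\varphi:(K,\partial_x)\to(\widehat K,\widehat\partial_z)$ is a differential ring isomorphism with $\varphi(A)=\widehat A$, applying $\varphi$ entrywise to an $n\times n$ matrix $P$ transforms $\partial_xP=PA-AP$ into $\widehat\partial_z\varphi(P)=\varphi(P)\widehat A-\widehat A\varphi(P)$; entrywise $\varphi$ is $\mathbb{C}$-linear and respects matrix products, so it restricts to a $\mathbb{C}$-algebra isomorphism $\mathcal E(A)\simeq\mathcal E(\widehat A)$.

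The step I expect to be the main obstacle is the bookkeeping in the Galois part: one must take $\mathrm{DGal}(\widehat L/\mathbb{C}(z))$ with respect to the \emph{standard} derivation $\partial_z$ and not $\widehat\partial_z$ (the latter does not even stabilise $\mathbb{C}(z)$), verify that $\widehat L$ genuinely is a Picard--Vessiot extension over $(\mathbb{C}(z),\partial_z)$ — which is where the finiteness of $[\widehat K:\mathbb{C}(z)]$, i.e.\ the Hamiltonian hypothesis, is essential — and check that replacing $\widehat\partial_z$ by $\partial_z$ leaves $\mathrm{DGal}(\widehat L/\widehat K)$ unchanged. Once these points are settled, all four statements follow formally from the differential field isomorphism of the first bullet.
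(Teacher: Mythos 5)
Your proposal is correct and follows essentially the same route as the paper: identify $\mathbb{C}(z)$ inside $K\simeq\widehat K$ via $\varphi$, apply the standard Galois correspondence to the tower $\mathbb{C}(z)\subseteq\widehat K\subseteq\widehat L$ with $\widehat K/\mathbb{C}(z)$ algebraic (of degree at most $2$) to get the inclusion of Galois groups and the equality of identity components, and transport the eigenring relation through the differential isomorphism. The only difference is one of completeness: you make explicit several points the paper leaves implicit, notably that $\widehat L$ is genuinely a Picard--Vessiot extension of $(\mathbb{C}(z),\partial_z)$ and that $\mathrm{DGal}(\widehat L/\widehat K)$ is unchanged when $\widehat\partial_z$ is replaced by $\partial_z$.
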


\begin{proof} We proceed as in the proof of proposition \ref{propjaw}. As $z=z(x)$ is a rational Hamiltonian change of variable, the transformation $\varphi$ leads us to
$$\mathbb{C}(z)\simeq\varphi(\mathbb{C}(z))\hookrightarrow
K,\quad K\simeq \widehat{K},\quad \mathbb{C}(z)\hookrightarrow
\widehat{K},\quad (K,\partial_x)\simeq
(\widehat{K},\widehat{\partial}_z)$$ that is, we identify
$\mathbb{C}(z)$ with $\varphi(\mathbb{C}(z))$, and so that we can
view $\mathbb{C}(z)$ as a subfield of $K$ and then, by the
standard Galois theory diagram (see \cite{ka,we2}),

 \xymatrix{
      & L \ar@{-}[dl] \ar@{-}[dr] \ar@/^/@{.>}[dr]^{\mathrm{DGal}(L/K)} & \\
\widehat{L} \ar@{-}[dr]
\ar@/_/@{.>}[dr]_{\mathrm{DGal}(\widehat{L}/{\widehat{K}})} & & K
\ar@{-}[dl] \\
         & \widehat{K}  \ar@{-}[d]  &  \\
         & \mathbb{C}(z)             &
}

\noindent so that we have
$\mathrm{DGal}(L/K)\simeq\mathrm{DGal}(\widehat{L}/{\widehat{K}})\subset
\mathrm{DGal}(\widehat{L}/{\mathbb{C}(z)})$,
$(\mathrm{DGal}(L/K))^0\simeq
(\mathrm{DGal}(\widehat{L}/{\mathbb{C}(z)}))^0,$ and $\mathcal
E(A)\simeq \mathcal E(\widehat{A}).$
\end{proof}
\medskip

We remark that the transformation $\varphi$, given in theorem
\ref{propjawpa}, is virtually strong isogaloisian when
$\sqrt{\alpha}\notin\mathbb{C}(z)$ and for
$\sqrt{\alpha}\in\mathbb{C}(z)$, $\varphi$ is strong isogaloisian.
Furthermore, by cyclic vector method (see \cite{vasi}), we can
write the systems $[A]$ and $[\widehat{A}]$ in terms of the
differential equations ${\mathcal L}$ and $\widehat{\mathcal L}$.
Thus, $\widehat{\mathcal L}$ is the proper pullback of $\mathcal
L$ and $\mathcal E(\mathfrak L)\simeq \mathcal
E(\widehat{\mathfrak L}).$\medskip

\begin{example} Consider the system 
$$[A]:=\left\{ \begin{array}{l}\partial_x\gamma_1 =-{\frac
{2\sqrt{2}}{{e^{x}}+{e^{-x}}}}\gamma_3,\\
\\\partial_x\gamma_2=\frac{e ^{x}-e^{-x}}{e^{x}+e^{-x}}\gamma_3,\\ \\
\partial_x\gamma_3 =\frac {2 \sqrt {2}}{e^{x}+e^{-x}}\gamma_1-\frac {e^{x}- e^{-x}}
{e^{x}+e^{-x}}\gamma_2
\end{array}\right.
$$ 
Through the Hamiltonian change of variable $z=e^x$, and with $\sqrt{\alpha}=z$, it is transformed into the system
$$
\widehat{[A]}:=\left\{ \begin{array}{l}\partial_z\widehat{\gamma}_1 =-\frac{2\sqrt{2}}{z^2+1}\widehat{\gamma}_3,\\
\\\partial_z\widehat{\gamma}_2=\frac{z^2-1}{z(z^2+1)}\widehat{\gamma}_3,\\ \\
\partial_z\widehat{\gamma}_3 =\frac {2 \sqrt {2}}{z^2+1}\widehat{\gamma}_1-\frac{z^2-1}
{x(x^2+1)}\widehat{\gamma}_2
\end{array}\right. .
$$
One solution of the system $\widehat{[A]}$ is given by
$${1\over z^2+1}\begin{pmatrix}{\sqrt{2}\over 2}(1 - z^2)\\ z\\
-z\end{pmatrix},$$ and the corresponding solution for the system
$[A]$ is
$${1\over e^{2x}+1}\begin{pmatrix}{\sqrt{2}\over 2}(1 - e^{2x})\\ e^x\\
-e^x\end{pmatrix}.$$  
\end{example}

\begin{remark}
The algebrization given in proposition \ref{pr2} is an example of
how the introduction of the new derivative $\widehat{\partial}_z$
simplifies the proofs and computations. Such proposition is
naturally extended to $\partial_x^2y+a\partial_xy+by=0$, using
$\varphi$ to obtain
$\widehat{\partial}_z^2\widehat{y}+\widehat{a}\widehat{\partial}_z\widehat{y}+\widehat{b}\widehat{y}=0$,
which is equivalent to
\begin{equation}\label{eqalgen}
\alpha\partial_z^2\widehat{y}+\left({\partial_x\alpha\over
2}+{\sqrt{\alpha}\widehat{a}
}\right)\partial_z\widehat{y}+\widehat{b}\widehat{y}=0,
\end{equation} where $y(x)=\widehat{y}(z(x))$, $\widehat{a}(z(x))=a(x)$ and
$\widehat{b}(z(x))=b(x)$.\\

In general, for $y(x)=\widehat{y}(z(x))$, the equation
$F(\partial_x^ny,\ldots,y,x)=0$ with coefficients given by
$a_{i_k}(x)$ is transformed into the equation
$\widehat{F}(\widehat{\partial}_z^n\widehat{y},\ldots,\widehat{y},z)=0$
with coefficients given by $\widehat{a}_{i_k}(z)$, where
$a_{i_k}(x)=\widehat{a}_{i_k}(z(x))$. In particular, for
$\sqrt{\alpha}, \widehat{a}_{i_k}\in\mathbb{C}(z)$, the equation
$\widehat{F}(\widehat{\partial}_z^n\widehat{y},\ldots,\widehat{y},z)=0$
is the Hamiltonian algebrization of
$F(\partial_x^ny,\ldots,y,x)=0$. Now, if each derivation
$\partial_x$ has even  order, then $\alpha$ and $\widehat{a}_{i_k}$
can be rational functions when we algebrize the equation
$F(\partial_x^ny,\ldots,y,x)=0$, where
$a_{i_k}\in\mathbb{C}(z(x),\partial_xz(x))$. 
For example, that
happens for linear differential equations given by
$$\partial_x^{2n}y+a_{n-1}(x)\partial_x^{2n-2}y+\ldots+a_2(x)\partial_x^4y+a_1(x)\partial_x^2y+a_0(x)y=0.$$
Finally, the algebrization algorithm given in proposition
\ref{cor1} can be naturally extended to a differential equation of the form
$$F(\partial_x^ny,\partial_x^{n-1}y,\ldots,\partial_xy,y,
e^{\mu t})=0,$$ which, through the change of variable $z=e^{\mu
x}$, is transformed into
$$\widehat{F}(\widehat{\partial}_z^n\widehat{y},\widehat{\partial}_z^{n-1}\widehat{y},\ldots,\widehat{\partial}_z\widehat{y},y,
z)=0.$$ 
In particular, we may consider the algebrization of Riccati
equations, higher order linear differential equations and systems.
\end{remark}
\medskip

\begin{examples} The following corresponds
to some examples of algebrizations for differential equations
given in \cite[p. 258, 266]{si3}.

\begin{enumerate}
\item The equation $\mathcal L:=\partial_x^2 y+(-2e^x-1)\partial_x y+e^{2x}y=0$ with the Hamiltonian
change of variable $z=e^x$, $\sqrt{\alpha}=z$, $\widehat{a}=-2z-1$
and $\widehat{b}=z^2$ is transformed into the equation
$\widehat{\mathcal
L}:=\partial_z^2\widehat{y}-2\partial_z\widehat{y}+\widehat{y}=0$
which can be easily solved. Bases of solutions for $\mathcal L$
and $\widehat{\mathcal L}$ are respectively given by $\langle e^z,ze^z\rangle$
and $\langle e^{e^x},e^xe^{e^x}\rangle$. Furthermore
$K=\mathbb{C}(e^x)$, $\widehat K=\mathbb{C}(z)$, $L$ and
$\widehat{L}$ are the Picard-Vessiot extensions of $\mathcal L$
and $\widehat{\mathcal L}$ respectively. Thus,
$\mathrm{DGal}(L/K)=\mathrm{DGal}(\widehat{L}/{\widehat{K}})$.

\item The differential equation $$\mathcal L:= \partial_x^2 y+{-24e^x-25\over 4e^x+5}\partial_x y+{20e^x\over 4e^x+5}y=0$$ with the Hamiltonian
change of variable $z=e^x$, $\sqrt{\alpha}=z$,
$$\widehat{a}={-24z-25\over 4z+5} \textrm{  and   } \widehat{b}={20z\over
4z+5}$$ is transformed into the equation
$$\widehat{\mathcal L}:=\partial_z^2\widehat{y}+{-20(z+1)\over x(4z+5)}\partial_z\widehat{y}+{20\over z(4z+5)}\widehat{y}=0,$$
which can be solved with Kovacic algorithm. A basis of solutions
for $\widehat{\mathcal L}$ is $\langle z+1,z^5\rangle$, so that a
basis for $\mathcal L$ is $\langle e^x+1,e^{5x}\rangle$.
Furthermore $K=\mathbb{C}(e^x)$, $\widehat{K}=\mathbb{C}(z)$, $L$
and $\widehat{L}$ are the Picard-Vessiot extensions of $\mathcal
L$ and $\widehat{\mathcal L}$ respectively. Thus,
$\mathrm{DGal}(L/K)=\mathrm{DGal}(\widehat L/ {\widehat{K}})=e$.
\end{enumerate}

\end{examples}
\medskip

\begin{remark}[Algebrization of the Riccati equation]\label{alric} The Riccati equation
\begin{equation}\label{eqalric0}\partial_xv=a(x)+b(x)v+c(x)v^2.\end{equation} Through
the Hamiltonian change of variable $z=z(x)$, it is transformed into the
Riccati equation
\begin{equation}\label{eqalric}
\partial_z\widehat{v}={1\over
\sqrt{\alpha}}(\widehat{a}(z)+\widehat{b}(z)\widehat{v}+\widehat{c}(z)\widehat{v}^2),
\end{equation}
where $v(x)=\widehat{v}(z(x))$, $\widehat{a}(z(x))=a(x)$,
$\widehat{b}(z(x))=b(x)$, $\widehat{c}(z(x))=c(x)$ and
$\sqrt{\alpha(z(x))}=\partial_xz(x)$. Furthermore, if
$\sqrt{\alpha}$, $\widehat{a}$, $\widehat{b}$, $\widehat{c}\in
\mathbb{C}(x)$, equation \eqref{eqalric} is an algebrization of
equation \eqref{eqalric0}.
\end{remark}
\medskip

\begin{example} Consider the Riccati differential equation
$$\mathcal L:=\partial_x v=\left(\tanh x-{1\over \tanh x}\right)v+\left(3\tanh x-3\tanh^3x\right)v^2,$$
which through the Hamiltonian change of variable $z=\tanh x$, for
instance $\sqrt{\alpha}=1-z^2$, is transformed into the Riccati
differential equation $$\widehat{\mathcal L}:=\partial_zv=-{1\over
z}v+3zv^2.$$ One solution for the equation $\widehat{\mathcal L}$
is
$$-{1\over z(3z-c)}, \text{  where }c \text{ is a constant,}$$ so
that the corresponding solution for equation $\mathcal L$ is
$$-{1\over \tanh x(3\tanh x-c)}.$$
\end{example}

The following result is the algebrized version of the relationship
between the eigenring s of systems and operators.

\begin{proposition}\label{aleigen} Consider the differential fields $K$, $\widehat{K}$ and consider the systems $[A]$ and $[\widehat{A}]$ given by
$$\partial_x{\mathbf{X}}=-A\mathbf{X}, \,
\widehat{\partial}_z{\widehat{\mathbf{X}}}=-\widehat{A}\widehat{\mathbf{X}},
\, \widehat{\partial}_z=\sqrt{\alpha}\partial_z,\, A=[a_{ij}],\,
\widehat{A}=[\widehat{a}_{ij}], \,a_{ij}\in K, \,
\widehat{a}_{ij}\in \widehat{K},$$ where $z=z(x)$,
$\mathbf{X}(x)=\widehat{\mathbf{X}}(z(x))$,
$\widehat{a}_{ij}(z(x))=a_{ij}(x)$, then their eigenrings satisfy $\mathcal{E}(A)\simeq
\mathcal{E}(\widehat{A})$. In particular, if we consider the
linear differential equations $$\mathcal L:=\partial_x^ny+
\sum_{k=0}^{n-1}a_k\partial_x^k y=0\quad \textrm{and} \quad
\widehat{\mathcal L}:=\widehat{\partial}_z^n\widehat{y}+
\sum_{k=0}^{n-1}\widehat{a}_k\widehat{\partial}_z^k
\widehat{y}=0,$$ where $z=z(x)$, $y(x)=\widehat{z}((x))$,
$\widehat{a}_{k}(z(x))=a_{k}(x)$, $a_k\in K$, $\widehat{a}_{k}\in
\widehat{K}$, then $\mathcal{E}(\mathfrak L)\simeq
\mathcal{E}(\widehat{\mathfrak L})$, where $\mathcal L:=\mathfrak
L y=0 $ and $\widehat{\mathcal L}:=\widehat{\mathfrak L}
\widehat{y}=0 $. Furthermore, assuming
$$P=\begin{pmatrix}p_{11}&\dots&p_{1n}\\ \vdots
\\p_{n1}&\ldots&p_{nn}\end{pmatrix},\quad
A=\begin{pmatrix}0&1&\dots&0\\ \vdots
\\-a_0&-a_1&\ldots&-a_{n-1}\end{pmatrix},$$ then
$$\mathcal E(\mathfrak L)=\left\{\sum_{k=1}^np_{1k}\partial_x^{k-1}:\,
\partial_xP=PA-AP,\, p_{ik}\in K\right\},$$
if and only if $$\mathcal E(\widehat{\mathfrak
L})=\left\{\sum_{k=1}^n\widehat{p}_{1k}\widehat{\partial}_z^{k-1}:\,
\widehat{\partial}_z\widehat{P}=\widehat{P}\widehat{A}-\widehat{A}\widehat{P},
\, \widehat{p}_{ik}\in \widehat{K} \right\}.$$
\end{proposition}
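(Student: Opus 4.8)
The plan is to transport all structures through the differential field isomorphism $\varphi\colon (K,\partial_x)\xrightarrow{\ \sim\ }(\widehat{K},\widehat{\partial}_z)$ supplied by Theorem~\ref{propjawpa}, so that Proposition~\ref{aleigen} becomes essentially a corollary of that theorem. Recall that $\widehat{\partial}_z=\sqrt{\alpha}\,\partial_z$ is a derivation of $\widehat{K}$ (verified above) and that $\varphi$ acts by $x\mapsto z$, $a_{ij}\mapsto\widehat{a}_{ij}$, $\partial_x\mapsto\widehat{\partial}_z$; in particular $\varphi$ is a ring isomorphism $K\to\widehat{K}$ intertwining the derivations, $\varphi(\partial_x f)=\widehat{\partial}_z(\varphi(f))$ for all $f\in K$. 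First I would extend $\varphi$ entrywise to an isomorphism $M_n(K)\to M_n(\widehat{K})$, still written $\varphi$; it is $\mathbb{C}$-linear and multiplicative, hence an isomorphism of (noncommutative) $\mathbb{C}$-algebras, and $\varphi(A)=\widehat{A}$ by hypothesis.

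Next I would settle the statement for systems. By definition $\mathcal{E}(A)=\{P\in M_n(K):\partial_x P=PA-AP\}$ and $\mathcal{E}(\widehat{A})=\{\widehat{P}\in M_n(\widehat{K}):\widehat{\partial}_z\widehat{P}=\widehat{P}\widehat{A}-\widehat{A}\widehat{P}\}$. Applying $\varphi$ to the defining relation and using that $\varphi$ intertwines the derivations and respects matrix multiplication, $\partial_x P=PA-AP$ holds in $M_n(K)$ if and only if $\widehat{\partial}_z\varphi(P)=\varphi(P)\widehat{A}-\widehat{A}\,\varphi(P)$ holds in $M_n(\widehat{K})$. Hence $\varphi$ restricts to a bijection $\mathcal{E}(A)\to\mathcal{E}(\widehat{A})$; being the restriction of an algebra isomorphism it is itself an isomorphism of $\mathbb{C}$-algebras, so $\mathcal{E}(A)\simeq\mathcal{E}(\widehat{A})$.

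For the operator case I would invoke the order-$n$ analogue of the lemmas relating eigenrings of operators and systems (subsection on Eigenrings): for $\mathfrak{L}=\partial_x^n+\sum_{k=0}^{n-1}a_k\partial_x^k$ with companion matrix $A$ as displayed, $\mathcal{E}(\mathfrak{L})$ is identified with the set of operators $\sum_{k=1}^n p_{1k}\partial_x^{k-1}$ obtained by taking first rows of matrices $P=[p_{ik}]\in\mathcal{E}(A)$, the lower rows of $P$ being forced from the first by $\partial_x P=PA-AP$ (this relation is exactly the condition $\mathfrak{R}(V(\mathfrak{L}))\subseteq V(\mathfrak{L})$ written out). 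Since $\varphi$ sends $\partial_x^k y\mapsto\widehat{\partial}_z^k\widehat{y}$ and $a_k\mapsto\widehat{a}_k$, it carries $\mathfrak{L}$ to $\widehat{\mathfrak{L}}=\widehat{\partial}_z^n+\sum_{k=0}^{n-1}\widehat{a}_k\widehat{\partial}_z^k$ and $A$ to $\widehat{A}$. Composing with the bijection $\mathcal{E}(A)\simeq\mathcal{E}(\widehat{A})$ of the previous step and reading off first rows gives the claimed equivalence
$$
\mathcal{E}(\mathfrak{L})=\Big\{\sum_{k=1}^n p_{1k}\partial_x^{k-1}:\ \partial_x P=PA-AP,\ p_{ik}\in K\Big\}
\ \Longleftrightarrow\
\mathcal{E}(\widehat{\mathfrak{L}})=\Big\{\sum_{k=1}^n \widehat{p}_{1k}\widehat{\partial}_z^{k-1}:\ \widehat{\partial}_z\widehat{P}=\widehat{P}\widehat{A}-\widehat{A}\widehat{P},\ \widehat{p}_{ik}\in\widehat{K}\Big\},
$$
and in particular $\mathcal{E}(\mathfrak{L})\simeq\mathcal{E}(\widehat{\mathfrak{L}})$ as $\mathbb{C}$-algebras; Remark~\ref{alric} can be used to re-derive the operator--system dictionary in the $n=2$ (Riccati) situation if desired.

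The one point requiring care is the bookkeeping behind ``$\varphi$ sends $\partial_x^k y\mapsto\widehat{\partial}_z^k\widehat{y}$'', because the chain rule for $\widehat{\partial}_z$ is nonstandard (as noted above, $\widehat{\partial}_z(f\circ g)\neq\widehat{\partial}_g f\circ g\,\widehat{\partial}_z g$ in general), so one must check that it does not obstruct the identification of the companion system of $\mathfrak{L}$ with that of $\widehat{\mathfrak{L}}$. This is resolved by the definition of $\varphi$ itself: iterating $\widehat{\partial}_z=\sqrt{\alpha}\,\partial_z$ yields $\partial_x^k y(x)=(\widehat{\partial}_z^k\widehat{y})(z(x))$, so the transformed equation $\widehat{\mathcal{L}}$ genuinely is $\widehat{\mathfrak{L}}\widehat{y}=0$ with $\widehat{\mathfrak{L}}=\widehat{\partial}_z^n+\sum_k\widehat{a}_k\widehat{\partial}_z^k$. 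Granting this, the whole statement is a formal consequence of $\varphi$ being an isomorphism of differential $\mathbb{C}$-algebras, and no genuine computation remains.
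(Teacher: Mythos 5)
Your proposal is correct and follows essentially the same route as the paper: the paper's (very terse) proof simply invokes Theorem~\ref{propjawpa} for $K\simeq\widehat{K}$, $\mathcal{E}(A)\simeq\mathcal{E}(\widehat{A})$ and $\mathcal{E}(\mathfrak L)\simeq\mathcal{E}(\widehat{\mathfrak L})$, and then appeals to the operator--system eigenring dictionary, which is exactly the mechanism you spell out via the differential isomorphism $\varphi$ intertwining $\partial_x$ and $\widehat{\partial}_z$ and the companion-matrix identification. Your write-up is a more explicit unfolding of that argument (including the useful check that the nonstandard chain rule for $\widehat{\partial}_z$ does not obstruct $\partial_x^k y=(\widehat{\partial}_z^k\widehat{y})\circ z$), but introduces no genuinely different idea.
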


\begin{proof} By theorem \ref{propjawpa} we have  $K\simeq \widehat{K}$, $\mathcal{E}(A)\simeq
\mathcal{E}(\widehat{A})$ and $\mathcal{E}(\mathfrak L)\simeq
\mathcal{E}(\widehat{\mathfrak L})$. Using the derivation
$\widehat{\partial}_z$ and by induction on lemma \ref{propjaw} we
complete the proof.

\end{proof}

\begin{examples} We consider two different examples to illustrate the previous proposition.

\begin{itemize}\item Consider the differential equation
$\mathcal L_1:=\partial_x^2y-(1+\cos x-\cos^2x)y=0$. By means of
the Hamiltonian change of variable $z=z(x)=\cos x$, with
$\sqrt{\alpha}=-\sqrt{1-z^2}$, $\mathcal L_1$ is transformed into
the differential equation
$$\widehat{\mathcal L}_1:=\partial_z^2\widehat{y}-{z\over 1-z^2}\partial_z\widehat{y}-{1+z-z^2\over
1-z^2}\widehat{y}=0.$$ Now, computing the eigenring of
$\widehat{\mathfrak L}_1$ we have  $\mathcal
E(\widehat{\mathfrak L}_1)=\mathrm{Vect}(1)$, therefore the
eigenring of $\mathfrak L_1$ is given $\mathcal E(\mathfrak
L_1)=\mathrm{Vect}(1)$.

\item Now we consider the differential equation
$\mathcal L_2:=\partial_x^2y=\left(e^{2x}+\frac94\right)y$. By
means of the Hamiltonian change of variable $z=e^x$, with
$\sqrt{\alpha}=x$, $\mathcal L_2$ is transformed into the
differential equation
$$\widehat{\mathcal L}_2:=\partial_z^2\widehat{y}+{1\over z}\partial_z\widehat{y}-\left(1+{9\over
4x^2}\right)\widehat{y}=0.$$ Now, computing the eigenring of
$\widehat{\mathfrak L}_2$ we have  {\small{$$\mathcal
E(\widehat{\mathfrak L}_2)=\mathrm{Vect}\left(1,-2\left(\frac
{z^2-1}{z^2}\right)\partial_z-\frac{z^2-3}{z^3}
\right)=\mathrm{Vect}\left(1,-2\left(\frac
{z^2-1}{z^3}\right)\widehat{\partial}_z-\frac{z^2-3}{z^3}
\right),$$}} therefore the eigenring of $\mathfrak{L}_2$ is given
by
$$\mathcal E(\mathfrak{L}_2)=\mathrm{Vect}\left(1,-2\left(\frac
{e^{2x}-1}{e^{3x}}\right)\partial_x-\frac{e^{2x}-3}{e^{3x}}
\right).$$ The same result is obtained via matrix formalism, where
$$A=\begin{pmatrix}0&1\\e^{2x}+\frac94&0\end{pmatrix},\,
\widehat{A}=\begin{pmatrix}0&1\\z^{2}+\frac94&0\end{pmatrix},\,
\partial_xP=PA-AP,\,
\widehat{\partial}_z\widehat{P}=\widehat{P}\widehat{A}-\widehat{A}\widehat{P},$$
with $P\in Mat(2,\mathbb{C}(e^x))$ and $\widehat{P}\in
Mat(2,\mathbb{C}(z))$.

\end{itemize}
\end{examples}

\subsection{Applications in Supersymmetric Quantum Mechanics}
In this subsection we apply the derivation $\widehat{\partial}_z$
to the Schr\"odinger equation $H\Psi=\lambda\Psi$, where
$H=-\partial_x^2+V(x)$, $V\in K$. Assume that $z=z(x)$ is a
rational Hamiltonian change of variable for $H\Psi=\lambda\Psi$,
then $K=\mathbb{C}(z(x),\partial_xz(x))$. Thus, the
\textit{algebrized Schr\"odinger equation} is written as
\begin{equation}\label{eqalgs1}\widehat{H}\widehat{\Psi}=\lambda\widehat{\Psi},\quad
\widehat{H}=-\widehat{\partial}_z^2+\widehat{V}(z), \quad
\widehat{\partial}_z^2=\alpha\partial_z^2+\frac12\partial_z\alpha\partial_z,\quad
\widehat{K}=\mathbb{C}(z,\sqrt{\alpha}).\end{equation} The
\textit{reduced algebrized Schr\"odinger equation} is given by

\begin{equation}\label{eqalgs2}\widehat{\mathbf{H}}\Phi=\lambda\Phi,\quad \widehat{\mathbf{H}}=\alpha(z)\left(-{\partial}_z^2+\widehat{\mathbf
V}(z)\right),\quad \begin{array}{l}\widehat{\mathbf
V}(z)=\mathcal{V}+{\widehat{V}(z)\over \alpha},\\ \\ \mathcal{V}=
\partial_z\mathcal{W}+\mathcal{W}^2,\\ \\
\mathcal{W}={1\over 4}{\partial_z\alpha(z)\over
\alpha(z)}.\end{array}\end{equation} The eigenfunctions $\Psi$,
$\widehat{\Psi}$ and $\Phi$ corresponding to the operators $H$,
$\widehat{H}$ and $\widehat{\mathbf{H}}$  are related respectively
by
$$\Phi(z(x))=\sqrt[4]{\alpha}\widehat{\Psi}(z(x))=\sqrt[4]{\alpha}\Psi(x).$$

In order to apply the Kovacic's algorithm, we only consider the
algebrized operator $\widehat{\mathbf{H}}$, whilst the eigenrings
will be computed on $\widehat{H}$. Also it is possible to apply
the version of Kovacic's algorithm given in reference \cite{ulwe}
to the algebraized operator $\widehat{H}$. The following results
are obtained by applying Kovacic's algorithm to the reduced
algebrized Schr\"odinger equation (equation \eqref{eqalgs2})
$\widehat{\mathbf{H}}\Phi=\lambda\Phi$.

\begin{proposition}
Let $\widehat{\mathbf{L}}_\lambda$ be the Picard-Vessiot extension
of the reduced algebrized Schr\"odinger equation
$\widehat{\mathbf{H}}\Phi=\lambda\Phi$ with
$\alpha,\widehat{V}\in\mathbb{C}[z]$. If $\deg\alpha < 2+\deg
\widehat{V}$ then, for every $\lambda\in\Lambda$,
$\mathrm{DGal}(\widehat{\mathbf{L}}/{\widehat{K}})$ is not
finite primitive.
\end{proposition}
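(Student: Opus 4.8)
The plan is to run Kovacic's algorithm on the reduced algebrized equation and argue exactly as in the proof of Proposition~\ref{luckphyst}, but now making the estimate uniform in the spectral parameter. Writing $\widehat{\mathbf H}\Phi=\lambda\Phi$ in reduced form $\partial_z^2\Phi=r\Phi$, one reads off from \eqref{eqalgs2} that $r=\widehat{\mathbf V}-\lambda/\alpha=\mathcal V+(\widehat V-\lambda)/\alpha$, where $\mathcal V=\partial_z\mathcal W+\mathcal W^2$ and $\mathcal W=\tfrac14\,\partial_z\alpha/\alpha$. Since $\alpha,\widehat V\in\mathbb C[z]$ we have $r\in\mathbb C(z)$, so Kovacic's algorithm (Appendix~A) applies over $\mathbb C(z)$; I would invoke the fact (already used in Proposition~\ref{luckphyst}) that if $\partial_z^2\Phi=r\Phi$ falls in Case~3 then $\circ(r)_\infty\ge 2$, together with the fact that Case~3 is precisely the case where the Galois group over $\mathbb C(z)$ is one of the three finite primitive groups. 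To transfer this from $\mathbb C(z)$ to the field $\widehat K=\mathbb C(z,\sqrt\alpha)$ of the statement I would add the elementary observation that $[\widehat K:\mathbb C(z)]\le 2$ and that a finite subgroup of $\mathrm{SL}(2,\mathbb C)$ containing a primitive subgroup of index $\le 2$ is again primitive (it is irreducible, and it cannot be binary dihedral, since neither proper subgroups nor index-$2$ overgroups of a binary dihedral subgroup of $\mathrm{SL}_2$ are primitive). Hence it suffices to show $\circ(r)_\infty\le 1$ for every $\lambda\in\Lambda$.

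The key step is the valuation count at $z=\infty$. If $\deg\alpha=0$ then $\mathcal W=0$ and $r$ is a polynomial, so either $r=0$, whence $\mathrm{DGal}\subseteq\mathbb G_a$, or $\circ(r)_\infty\le 0$; in both cases the group is not finite primitive and we are done. So assume $a:=\deg\alpha\ge 1$. Then $\deg(\partial_z\alpha)=a-1$, so $\circ(\mathcal W)_\infty=1$, hence $\circ(\mathcal W^2)_\infty=2$ and $\circ(\partial_z\mathcal W)_\infty=2$, giving $\circ(\mathcal V)_\infty\ge 2$. On the other hand, whenever $\widehat V-\lambda\ne 0$ we have $\circ\big((\widehat V-\lambda)/\alpha\big)_\infty=\deg\alpha-\deg\widehat V\le 1$ by the hypothesis $\deg\alpha<2+\deg\widehat V$. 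Since these two orders at infinity are distinct, the order of their sum equals the smaller one, so $\circ(r)_\infty=\deg\alpha-\deg\widehat V\le 1<2$; therefore $\partial_z^2\Phi=r\Phi$ does not fall in Case~3 and $\mathrm{DGal}(\widehat{\mathbf L}/\widehat K)$ is not finite primitive.

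The only possibility not yet covered is $\widehat V-\lambda=0$ together with $\deg\alpha\ge 1$, which forces $\widehat V$ to be constant and, by the hypothesis, $\deg\alpha=1$. Writing $\alpha=cz+d$ with $c\ne 0$ one computes $r=\mathcal V=-\tfrac{3c^2}{16(cz+d)^2}$, an Euler equation whose solutions are $(cz+d)^{1/4}$ and $(cz+d)^{3/4}$ and whose Galois group is diagonal, hence reducible and not primitive. This would finish the argument; in fact the conclusion then holds for every $\lambda\in\mathbb C$, since for $\lambda\notin\Lambda$ the equation is non-integrable with group $\mathrm{SL}(2,\mathbb C)$.

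The valuation bookkeeping at infinity and the evaluation of $\mathcal V$ and $\mathcal W$ are routine; the step I expect to require genuine care is the descent of ``not Case~3'' from $\mathbb C(z)$, where Kovacic's case analysis is phrased, down to $\widehat K$, i.e.\ the classification remark on finite subgroups of $\mathrm{SL}(2,\mathbb C)$ lying within index $2$ of a primitive one. That remark is what makes the statement legitimately about $\mathrm{DGal}(\widehat{\mathbf L}/\widehat K)$ rather than only about the Galois group over $\mathbb C(z)$.
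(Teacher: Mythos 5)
Your proof is correct and follows essentially the same route as the paper: both reduce to $\partial_z^2\Phi=r\Phi$ and show that the hypothesis $\deg\alpha<2+\deg\widehat V$ forces $\circ(r)_\infty<2$, so the equation cannot fall in Case 3 of Kovacic's algorithm and the Galois group cannot be finite primitive. You are in fact more careful than the paper at two points it passes over silently, namely the degenerate situations $\deg\alpha=0$ and $\widehat V-\lambda=0$ (where the naive degree count $\circ(r)_\infty=\deg\alpha-\deg\widehat V$ can fail), and the descent of ``not finite primitive'' from the base field $\mathbb{C}(z)$, where Kovacic's case analysis lives, to $\widehat K=\mathbb{C}(z,\sqrt{\alpha})$.
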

\begin{proof} Suppose that $\deg \alpha=n$ and $\deg \widehat{V}=m$. The reduced algebrized Schr\"odinger equation
$\widehat{\mathbf{H}}\Phi=\lambda\Phi$ can be written in the form
$$\partial_z^2\Phi=r\Phi,\quad r={4\alpha\partial_z^2\alpha-3(\partial_z\alpha)^2+16\alpha(\widehat{V}-\lambda)\over
16\alpha^2}.$$ As $m>n-2$ we have  $\circ
(r_\infty)=n-m<2$, which does not satisfy the condition ($\infty$)
of case 3 of Kovacic's algorithm, therefore
$\mathrm{DGal}(\widehat{\mathbf{L}}/{\widehat{K}})$ is not
finite primitive for any  $\lambda\in\Lambda$.

\end{proof}
\begin{proposition}
Let $\widehat{\mathbf{L}}_\lambda$ be the Picard-Vessiot extension
of the reduced algebrized Schr\"odinger equation
$\widehat{\mathbf{H}}\Phi=\lambda\Phi$ with
$\alpha\in\mathbb{C}[z]$, $\widehat{V}\in\mathbb{C}(z)$. 
If
$\circ(\widehat{V})_\infty <2-\deg\alpha$  then, for every $\lambda\in\Lambda$,
$\mathrm{DGal}(\widehat{\mathbf{L}}_\lambda/{\widehat{K}})$ is
not finite primitive.
\end{proposition}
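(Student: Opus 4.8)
The plan is to proceed exactly as in the proof of the preceding proposition. Write the reduced algebrized equation $\widehat{\mathbf{H}}\Phi=\lambda\Phi$ in normal form $\partial_z^2\Phi=r\Phi$; the same computation as before gives
$$r\;=\;\frac{4\alpha\,\partial_z^2\alpha-3(\partial_z\alpha)^2+16\alpha(\widehat V-\lambda)}{16\alpha^2}\;\in\;\mathbb{C}(z).$$
Since $r\in\mathbb{C}(z)$, Kovacic's algorithm applies over $\mathbb{C}(z)$; and because $\widehat K=\mathbb{C}(z,\sqrt\alpha)$ is an extension of $\mathbb{C}(z)$ of degree at most two, the Galois group over $\widehat K$ is a subgroup of the Galois group over $\mathbb{C}(z)$, so by the classification of finite subgroups of $\mathrm{SL}(2,\mathbb{C})$ it suffices to show that the Galois group over $\mathbb{C}(z)$ cannot be $A_4^{\mathrm{SL}_2}$, $S_4^{\mathrm{SL}_2}$ or $A_5^{\mathrm{SL}_2}$, i.e.\ that Case $3$ of Kovacic's algorithm is impossible.

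The decisive step is to estimate the order of $r$ at infinity. Put $n=\deg\alpha$. The summand $(4\alpha\,\partial_z^2\alpha-3(\partial_z\alpha)^2)/16\alpha^2$ is a quotient of a polynomial of degree at most $2n-2$ by one of degree $2n$, hence has order $\ge 2$ at $\infty$; the summand $16\alpha(\widehat V-\lambda)/16\alpha^2=(\widehat V-\lambda)/\alpha$ has order $\circ(\widehat V-\lambda)_\infty+n$ at $\infty$. When $n\ge 2$ the hypothesis $\circ(\widehat V)_\infty<2-n\le 0$ says that $\widehat V$ has a pole at infinity, so $\circ(\widehat V-\lambda)_\infty=\circ(\widehat V)_\infty<2-n$ and therefore $\circ\bigl((\widehat V-\lambda)/\alpha\bigr)_\infty<2$. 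The two summands of $r$ then have distinct orders at infinity, so $\circ(r)_\infty<2$; this violates the order condition at infinity required in Case $3$ of Kovacic's algorithm, so $\mathrm{DGal}(\widehat{\mathbf{L}}_\lambda/\widehat K)$ is not finite primitive --- for every $\lambda$, and in particular for every $\lambda\in\Lambda$ (when $\lambda\notin\Lambda$ the group is all of $\mathrm{SL}(2,\mathbb{C})$ and the statement is trivial).

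I expect the only real obstacle, just as in the polynomial version of the previous proposition, to be the degenerate range $\deg\alpha\le 1$ together with the single exceptional value $\lambda=\lim_{z\to\infty}\widehat V$ when that limit exists and is finite: there both pieces of $r$ have order $\ge 2$ at infinity, and the crude degree count does not by itself force $\circ(r)_\infty<2$. I would dispose of this by direct inspection --- either $\widehat V-\lambda$ still produces a pole of order greater than $2$ at a finite point, which already rules out Case $3$ via its ramification condition $(c_3)$, or $r$ degenerates to an explicit rational function whose two solutions are elementary and whose Galois group is visibly reducible --- so the conclusion survives on this small exceptional set as well.
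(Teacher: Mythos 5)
Your argument is essentially the paper's: the published proof writes $\widehat V=s/t$ with $\deg s=m$, $\deg t=p$, $\deg\alpha=n$, clears denominators, and reads off $\circ(r_\infty)=p+n-m<2$ from the hypothesis $m>n+p-2$, which violates the condition at $\infty$ of Case 3 of Kovacic's algorithm --- the same infinity-order count you carry out summand by summand. The exceptional situation you flag (possible cancellation of the top-degree term of the numerator when $\deg\alpha\le 1$ and $\lambda$ equals the limit of $\widehat V$ at infinity, so that $\circ(\widehat V-\lambda)_\infty>\circ(\widehat V)_\infty$) is genuine, but the paper's proof silently assumes it away as well, so on this point you are, if anything, more careful than the source, even though your proposed patch for that degenerate range is only a sketch.
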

\begin{proof}  Suppose that $\widehat{V}=s/t$, where $s$ and $t$ are coprime polynomials in $\mathbb{C}(z)$. Assume that $\deg \alpha=n$,
$\deg s=m$ and $\deg t=p$. The reduced algebrized Schr\"odinger
equation $\widehat{\mathbf{H}}\Phi=\lambda\Phi$ can be written in
the form
$$\partial_z^2\Phi=r\Phi,\quad r={4t\alpha\partial_z^2\alpha-3t(\partial_z\alpha)^2+16\alpha(s-\lambda t)\over
16t\alpha^2}.$$ 
As $m>n+p-2$, we have  $\circ
(r_\infty)=p+n-m<2$, which does not satisfy condition
($\infty$) of case 3 of Kovacic's algorithm. Therefore
$\mathrm{DGal}(\widehat{\mathbf{L}}/{\widehat{K}})$ is not
finite primitive for any $\lambda\in\Lambda$.

\end{proof}
\medskip

\begin{remark}
In a natural way, we obtain the algebrized versions of Darboux
transformation, i.e., the \textit{algebrized Darboux
transformation}, denoted by $\widehat{\mathrm{DT}}$. The notation 
$\widehat{\mathrm{DT}}_n$ stands for the $n$-th iteration of
$\widehat{\mathrm{DT}}$; the notation $\widehat{\mathrm{CI}}_n$ denotes the $n$-th iteration of the \textit{algebrized Crum iteration}, where the \textit{algebrized wronskian} is given by $$\, \widehat{W}(\widehat{y}_1,\ldots,\widehat{y}_n)=\left|\begin{matrix}\widehat{y}_1&\cdots& \widehat{y}_n\\
 \vdots& & \vdots \\ \widehat{\partial}^{n-1}_z\widehat{y}_1&\cdots& \widehat{\partial}^{n-1}_z\widehat{y}_n\end{matrix}\right|.$$
In the same way, we define \textit{algebrized shape invariant
potentials}, \textit{algebrized superpotential} $\widehat{W}$,
\textit{algebrized supersymmetric Hamiltonians} $\widehat{H}_\pm$,
algebrized supersymmetric partner potentials $\widehat{V}_\pm$,
\textit{algebrized ground state}
$\widehat{\Psi}_0^{(-)}=e^{-\int{\widehat{W}\over
\sqrt{\alpha}}dz}$,\textit{ algebrized wave functions}
$\widehat{\Psi}_\lambda^{(-)}$, \textit{algebrized raising and
lowering operators} $\widehat{A}$ and $\widehat{A}^\dagger$. Thus,
we can rewrite entirely section \ref{susyrapo} using the
derivation $\widehat{\partial}_z$.
\end{remark}
\medskip

The following theorem shows us the relationship between the
algebrization and the Darboux transformation.\\

\begin{theorem}  Given the Schr\"odinger equation $\mathcal L_\lambda:=H_-\Psi^{(-)}=\lambda\Psi^{(-)}$, the relationship
 between the algebrization $\varphi$ and Darboux
transformations $\mathrm{DT}$, $\widehat{\mathrm{DT}}$ with
respect to $\mathcal L_\lambda$ is given by
$\widehat{\mathrm{DT}}\varphi=\varphi\mathrm{DT}$, that is
$\widehat{\mathrm{DT}}\varphi(\mathcal L
)=\varphi\mathrm{DT}(\mathcal L)$. In other words, the Darboux
transformations $\mathrm{DT}$ and $\widehat{\mathrm{DT}}$ are
\emph{intertwined} by the algebrization $\varphi$.
\end{theorem}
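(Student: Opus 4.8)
The plan is to reduce the statement to the single fact, established in Theorem \ref{propjawpa} (and already in Proposition \ref{pr1}), that the algebrization $\varphi$ is an isomorphism of differential fields $(K,\partial_x)\simeq(\widehat K,\widehat{\partial}_z)$, $g\mapsto\widehat g$ with $g(x)=\widehat g(z(x))$, which intertwines the two derivations, $\varphi\circ\partial_x=\widehat{\partial}_z\circ\varphi$, and therefore also $\varphi\circ\partial_x^2=\widehat{\partial}_z^2\circ\varphi$ with $\widehat{\partial}_z^2=\alpha\partial_z^2+\tfrac12(\partial_z\alpha)\partial_z$. By Theorem \ref{darth}, the Darboux transformation $\mathrm{DT}$ is expressed purely in terms of the field operations and the derivation $\partial_x$ applied to $V_-$ and to the solutions $\Psi^{(-)}_{\lambda}$, $\Psi^{(-)}_{\lambda_1}$ of $\mathcal L_\lambda$; and, by the Remark preceding the present theorem, the algebrized Darboux transformation $\widehat{\mathrm{DT}}$ is \emph{defined} to be the same expressions with $\partial_x$ replaced throughout by $\widehat{\partial}_z$. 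Thus $\widehat{\mathrm{DT}}\,\varphi=\varphi\,\mathrm{DT}$ is essentially the statement that a differential-field isomorphism carries a formula built from $+,\cdot,{}^{-1},\partial_x$ to the corresponding $\widehat{\partial}_z$-formula, and the proof is a term-by-term check.

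Concretely, I would fix $\lambda_1\in\Lambda$; since $\varphi$ is (virtually) isogaloisian the algebraic spectrum is unchanged, so $\widehat{\mathrm{DT}}$ based at $\lambda_1$ is defined on $\widehat{\mathcal L}_\lambda$. By Proposition \ref{pr1} (resp. Theorem \ref{propjawpa}) a particular solution $\Psi^{(-)}_{\lambda_1}$ of $\mathcal L_{\lambda_1}$ is carried by $\varphi$ to a particular solution $\widehat\Psi^{(-)}_{\lambda_1}$ of $\widehat{\mathcal L}_{\lambda_1}$, and the general solution $\Psi^{(-)}_\lambda$ to $\widehat\Psi^{(-)}_\lambda$. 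Using $\varphi\partial_x=\widehat{\partial}_z\varphi$ one gets $\varphi\!\left(-\partial_x\ln\Psi^{(-)}_{\lambda_1}\right)=-\widehat{\partial}_z\ln\widehat\Psi^{(-)}_{\lambda_1}=\widehat W$, the algebrized superpotential; then, using $\varphi\partial_x^2=\widehat{\partial}_z^2\varphi$,
$$\varphi(V_+)=\varphi\!\left(V_--2\partial_x^2\ln\Psi^{(-)}_{\lambda_1}\right)=\widehat V_--2\widehat{\partial}_z^2\ln\widehat\Psi^{(-)}_{\lambda_1}=\widehat{\mathrm{DT}}(\widehat V_-)=\widehat{\mathrm{DT}}(\varphi(V_-)),$$
and, for $\lambda\neq\lambda_1$,
$$\varphi\!\left(\Psi^{(+)}_\lambda\right)=\varphi\!\left(\partial_x\Psi^{(-)}_\lambda-(\partial_x\ln\Psi^{(-)}_{\lambda_1})\Psi^{(-)}_\lambda\right)=\widehat{\partial}_z\widehat\Psi^{(-)}_\lambda-(\widehat{\partial}_z\ln\widehat\Psi^{(-)}_{\lambda_1})\widehat\Psi^{(-)}_\lambda=\widehat{\mathrm{DT}}\!\left(\widehat\Psi^{(-)}_\lambda\right).$$
One checks equally that $\varphi$ sends the Wronskian form $W(\Psi^{(-)}_{\lambda_1},\Psi^{(-)}_\lambda)/W(\Psi^{(-)}_{\lambda_1})$ to the algebrized-Wronskian quotient $\widehat W(\widehat\Psi^{(-)}_{\lambda_1},\widehat\Psi^{(-)}_\lambda)/\widehat W(\widehat\Psi^{(-)}_{\lambda_1})$, because each entry $\partial_x^{k}\Psi$ of the Wronskian matrix is carried to $\widehat{\partial}_z^{k}\widehat\Psi$. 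Since $\mathrm{DT}$ sends $\mathcal L_\lambda$ to the Schr\"odinger equation with potential $V_+$ and solution space spanned by the $\Psi^{(+)}_\lambda$, these displays together say precisely $\varphi(\mathrm{DT}(\mathcal L_\lambda))=\widehat{\mathrm{DT}}(\varphi(\mathcal L_\lambda))$.

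The points deserving care — and where I would concentrate the write-up — are, first, making explicit that $\varphi$ intertwines the \emph{second} derivatives, i.e. $\varphi(\partial_x^2 g)=\widehat{\partial}_z^2\,\widehat g$ with $\widehat{\partial}_z^2=\alpha\partial_z^2+\tfrac12(\partial_z\alpha)\partial_z$ (this follows by iterating $\varphi\partial_x=\widehat{\partial}_z\varphi$, but since $\widehat{\partial}_z^2\neq\partial_z^2$ it must be verified rather than assumed); and second, recording that the base fields match along both routes, namely $\varphi(\widetilde K)=\varphi\!\left(K\langle\partial_x\ln\Psi^{(-)}_{\lambda_1}\rangle\right)=\widehat K\langle\widehat{\partial}_z\ln\widehat\Psi^{(-)}_{\lambda_1}\rangle=\widehat{\widetilde K}$, so that $\varphi\mathrm{DT}$ and $\widehat{\mathrm{DT}}\varphi$ produce equations over the \emph{same} differential field. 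The main obstacle is really only this bookkeeping together with pinning down that $\widehat{\mathrm{DT}}$ is the $\widehat{\partial}_z$-analogue of the formulas in Theorem \ref{darth}; there is no analytic content. Once this is in place, the same computation applied inductively via Propositions \ref{darbit1} and \ref{crumit1} gives $\widehat{\mathrm{DT}}_n\,\varphi=\varphi\,\mathrm{DT}_n$ and $\widehat{\mathrm{CI}}_n\,\varphi=\varphi\,\mathrm{CI}_n$.
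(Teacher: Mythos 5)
Your proposal is correct and follows the same route as the paper: the paper's own proof consists essentially of writing down the commuting square relating $\varphi$, $\mathrm{DT}$ and $\widehat{\mathrm{DT}}$ and asserting its commutativity, which is exactly what you establish. Your write-up actually supplies the verification the paper leaves implicit (that $\varphi$ intertwines $\partial_x^2$ with $\widehat{\partial}_z^2=\alpha\partial_z^2+\tfrac12(\partial_z\alpha)\partial_z$, carries the superpotential, the potential $V_+$ and the Wronskian quotient to their algebrized counterparts, and matches the base fields on both sides), so it is, if anything, more complete than the original.
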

\begin{proof} Consider the equations $\mathcal L_{\lambda}:=H_-\Psi^{(-)}=\lambda\Psi^{(-)}$ , $\widehat{\mathcal L}_\lambda:=
\widehat{H}_-\widehat{\Psi}^{(-)}=\lambda\widehat{\Psi}^{(-)}$,
$\widetilde{\mathcal L}_\lambda:=H_+\Psi^{(+)}=\lambda\Psi^{(+)}$
and $\widetilde{\widehat{\mathcal L}}:=
\widehat{H}_+\widehat{\Psi}^{(+)}=\lambda\widehat{\Psi}^{(+)}$,
where the Darboux transformations $\mathrm{DT}$ and
$\widehat{\mathrm{DT}}$ are given by $\mathrm{DT}(\mathcal
L)=\widetilde{\mathcal L}$,
$\widehat{\mathrm{DT}}(\widehat{\mathcal
L})=\widetilde{\widehat{\mathcal L}}$,
$$ \mathrm{DT}:
 \begin{array}{l} V_-\mapsto V_+\\ \\
  \Psi_\lambda^{(-)}\mapsto \Psi_\lambda^{(+)}\end{array},\quad \widehat{\mathrm{DT}}:\begin{array}{l}
   \widehat{V}_-\mapsto \widehat{V}_+\\ \\ \widehat{\Psi}_\lambda^{(-)}\mapsto\widehat{\Psi}_\lambda^{(+)},\end{array}$$
   and $\varphi(\mathcal
L_\lambda)=\widehat{\mathcal L_\lambda}$, where the algebrization
$\varphi$ is given as in theorem \ref{propjawpa}. Then the
following diagram commutes
 $$\xymatrix{\mathcal L_\lambda \ar[r]^-{\mathrm{DT}} &\widetilde{\mathcal L}_\lambda \\
   \widehat{\mathcal L}_\lambda  \ar[r]_{\widehat{\mathrm{DT}}} \ar@{<-}[u]^{\varphi} & \ar@{<-}[u]_{\varphi} \widetilde{\widehat{\mathcal L}}_\lambda }
   \qquad \begin{array}{c}\\ \\ \\
  \; \hbox{\rm so } \; \widehat{\mathrm{DT}}\varphi(\mathcal L
)=\varphi\mathrm{DT}(\mathcal
L)  \; \hbox{\rm i.e. } \;  \widetilde{\widehat{\mathcal
L}}=\widehat{\widetilde{\mathcal L}}.\end{array}$$
\end{proof}
To illustrate $\widehat{\mathrm{DT}}$ we present the following
examples.\\

\begin{examples} Consider the algebrized Schr\"odinger equation $\widehat{H}\widehat{\Psi}^{(-)}=\lambda\widehat{\Psi}^{(-)}$ with:
\begin{itemize}
\item $\sqrt{\alpha(z)}=\sqrt{z^2-1}$ and $\widehat{V}_-(z)=\frac{z}{z-1}$. Taking
$\lambda_1=1$ and $\widehat{\Psi}_1^{(-)}=\sqrt{z+1\over z-1}$, we
have 
$\widehat{\mathrm{DT}}(\widehat{V}_-)=\widehat{V}_+(z)=\frac{z}{z+1}$
and
$$\widehat{\mathrm{DT}}(\widehat{\Psi}^{(-)}_{\lambda})=\widehat{\Psi}^{(+)}_{\lambda}=\sqrt{z^2-1}\partial_z\widehat{\Psi}^{(-)}_\lambda+{1\over
\sqrt{z^2-1}}\widehat{\Psi}_\lambda^{(-)},$$ where
$\widehat{\Psi}_{\lambda}^{(-)}$ is the general solution of
$\widehat{H}_-\widehat{\Psi}^{(-)}=\lambda\widehat{\Psi}^{(-)}$
for $\lambda\neq 1$.

The original potential corresponding to this example is given by
$V_-(x)=\frac{\cosh x}{\cosh x -1}$ and for $\lambda_1=1$ the
particular solution $\Psi_1^{(-)}$ is given by ${\sinh x\over
\cosh x-1}$. Applying $\mathrm{DT}$ we have 
$\mathrm{DT}(V_-)=V_+(x)=\frac{\cosh x}{\cosh x +1}$ and
$\mathrm{DT}(\Psi_\lambda^{(-)})=\Psi_\lambda^{(+)}=\partial_x\Psi^{(-)}_\lambda+\frac1{\sinh
x}\Psi^{(-)}_\lambda$.

\item $\sqrt{\alpha}=-z$, $\widehat{V}_-(z)=z^2-z$. Taking $\lambda_1=0$ and $\widehat{\Psi}^{(-)}_0=e^{-z}$
 we have  $\widehat{\mathrm{DT}}(\widehat{V}_-)=V_+=z^2+z$ and
$\widehat{\mathrm{DT}}(\widehat{\Psi}^{(-)}_{\lambda})=\widehat{\Psi}^{(+)}_{\lambda}=-z\partial_z\widehat{\Psi}^{(-)}_\lambda-z\widehat{\Psi}^{(-)}_\lambda$,
where $\Psi_{\lambda}^{(-)}$ is the general solution of
$\widehat{H}_-\widehat{\Psi}^{(-)}=\lambda\widehat{\Psi}^{(-)}$
for $\lambda\neq 0$. This example corresponds to the \emph{Morse
potential} $V_-(x)=e^{-2x}-e^{-x}$.
\end{itemize}
\end{examples}
\medskip

To illustrate the algebrized Crum iteration $\widehat{\mathrm{CI}}_n$, we present the following
example, which is related with the Chebyshev polynomials.\\

\begin{example}

Consider  $\sqrt{\alpha}=-\sqrt{1-z^2}$, $V=0$ with
eigenvalues and eigenfunctions $\lambda_1=1$, $\lambda_2=4$,
$\widehat{\Psi}_{1}=z$, $\widehat{\Psi}_{4}=2z^2-1$,
$\widehat{\Psi}_{n^2}=T_n(z)$, where $T_n(z)$ is the Chebyshev
polynomial of first kind of degree $n$. The algebrized Wronskian
for $n=2$ is
$$\widehat{W}(z,2z^2-1)=-\sqrt{1-z^2}(2z^2+1),$$
 and by algebrized Crum iteration we obtain the potential $$\widehat{\mathrm{CI}}_2(\widehat{V})=\widehat{V}_2=((2z^2-1)\partial_z^2+z\partial_z)\ln \widehat{W}(z,2z^2-1)$$
 and the algebrized wave functions
$$
\widehat{\mathrm{CI}}_2(\widehat{\Psi}_\lambda)=\widehat{\Psi}_\lambda^{(2)}={\widehat{W}(z,2z^2-1,T_n)\over
\widehat{W}(z,2z^2-1)}.$$

\end{example}
\medskip

In a natural way,  we introduce the notion of algebrized shape
invariant potentials
$\widehat{V}_{n+1}(z,a_n)=\widehat{V}_n(z,a_{n+1})+R(a_n)$, where
the energy levels for $n>0$ are given by $E_n=R(a_1)+\cdots
R(a_n)$ and the algebrized eigenfunctions are given by
$\widehat{\Psi}_n(a_1)=\widehat{A}^\dagger(z,a_1)\cdots
\widehat{A}^\dagger(z,a_n)\Psi_0(z,a_n)$. To illustrate the
algebrized shape invariant potentials and the operators $\widehat
A$ and $\widehat A^\dagger$, we present the following example.\\

\begin{example} Let $\sqrt{\alpha}=1-z^2$ and consider the algebrized super
potential $\widehat{W}(z)=z$. Following the method proposed in
remark \ref{remshape}, step 1, we introduce $\mu\in\mathbb{C}$ to
obtain $\widehat{W}(z;\mu)=\mu z$, and
$$\widehat{V}_\pm(z;\mu)=\widehat{W}^2(z;\mu)\pm\widehat{\partial}_z\widehat{W}(z;\mu)=\mu(\mu\mp1)z^2\pm\mu.$$ 
Thus 
$\widehat{V}_+(z;a_0)=a_0(a_0-1)z^2+a_0$ and
$\widehat{V}_-(z;a_1)=a_1(a_1+1)z^2-a_1.$
 By step 2,  $$\widehat{\partial}_z(V_+(z;a_0)-V_-(z;a_1))=2z(1-z^2)(a_0(a_0-1)-a_1(a_1+1)).$$ By step 3, we obtain
$$a_1(a_1+1)=a_0(a_0-1),\quad a_1^2-a_0^2=-(a_1+a_0)$$ and, assuming $a_1\neq\pm a_0$, we have
$a_1=f(a_0)=a_0-1$ and
$R(a_1)=2a_0+1=(a_0+1)^2-a_0^2=a_1^2-a_0^2$. This means that the
potentials $\widehat{V}_\pm$ are algebrized shape invariant
potentials where the $n$-th energy level $E=E_n$ is easily obtained:
$$E_n=\sum_{k=1}^nR(a_k)=\sum_{k=1}^n\left(a_{k}^2-a_{k-1}^2\right)=a_n^2-a_0^2,\quad a_n=f^{n}(a_0)=a_0+n.$$ Now, the algebrized ground state wave
function of $\widehat{V}_-(z,a_0)$ is
$$\widehat{\Psi}_0=e^{\int{a_0z\over
1-z^2}dz}=\frac1{\left(\sqrt{1-z^2}\right)^{a_0}}.$$ Finally, we
can obtain the rest of eigenfunctions using the algebrized raising
operator:
$$\widehat{\Psi}_n(z,a_0)=\widehat{A}^\dagger(z,a_0)\widehat{A}^\dagger(z,a_1)\cdots\widehat{A}^\dagger(z,a_{n-1})\widehat{\Psi}_0(z,a_n).$$
This example corresponds to the \emph{P\"oschl-Teller potential}.
\end{example}

Now to illustrate the power of Kovacic's algorithm with the
derivation $\widehat{\partial}_z$, we study some Schr\"odinger
equations for non-rational shape invariant potentials.
\\

 \textbf{Morse potential:} $V(x)=e^{-2x}-e^{-x}.$\\

The Schr\"odinger equation $H\Psi=\lambda\Psi$ is
$$\partial_x^2\Psi=\left(e^{-2x}-e^{-x}-\lambda\right)\Psi.$$

By the Hamiltonian change of variable $z=z(x)=e^{-x}$, we obtain
$$\alpha(z)=z^2,\quad \widehat V(z)=z^2-z,\quad \widehat{\mathbf{V}}(z)={z^2-z-\frac14\over
z^2}.$$ Thus, $\widehat{K}=\mathbb{C}(z)$ and $K=\mathbb{C}(e^x)$.
The algebrized Schr\"odinger equation
$\widehat{H}\widehat{\Psi}=\lambda\widehat{\Psi}$ is
$$z^2\partial_z^2\widehat{\Psi}+z\partial_z\widehat{\Psi}-(z^2-z-\lambda)\widehat{\Psi}=0$$ and the reduced
algebrized Schr\"odinger equation
$\widehat{\mathbf{H}}\widehat{\Phi}=\lambda\widehat{\Phi}$ is
\begin{displaymath}\partial_z^2\Phi=r\Phi,\quad r={ z^2- z-\frac14-\lambda\over
z^2}.\end{displaymath}
 This equation could only  fall under
case 1, in case 2 or in case 4 (of Kovacic's algorithm). We start by
analyzing the case 1: by conditions $c_2$ and $\infty_3$ we have
that

$$\left[ \sqrt {r}\right] _{0}=0,\quad\alpha_{0}^{\pm}={1\pm 2\sqrt{-\lambda}\over 2}, \quad\left[  \sqrt{r}\right]  _{\infty}=1,\quad
{\rm and}\quad \alpha_{\infty}^{\pm }=\mp{1\over 2}.$$ By step 2
we have the following possibilities for $n\in\mathbb{Z}_+$ and for
$\lambda\in\Lambda$:
$$
\begin{array}{lll}
\Lambda_{++})\quad& n=\alpha^+_\infty -
\alpha^+_0=-1-\sqrt{-\lambda}, &
\lambda=-\left(n+1\right)^2,\\&&\\
\Lambda_{+-}) & n=\alpha^+_\infty -
\alpha^-_0=-1+\sqrt{-\lambda},& \lambda=-\left(n+1\right)^2,
\\&&\\
\Lambda_{-+}) & n=\alpha^-_\infty - \alpha^+_0=-\sqrt{-\lambda},&
\lambda=-n^2,\\&&\\ \Lambda_{--}) & n=\alpha^-_\infty -
\alpha^-_0=\sqrt{-\lambda},& \lambda=-n^2.
\end{array}
$$
We can see that $\lambda\in\Lambda_-=\{-n^2:n\in\mathbb{Z}_+\}$.
Now, for $\lambda\in\Lambda$, the rational function $\omega$ is
given by:
$$
\begin{array}{llll}
\Lambda_{++})\quad& \omega=1+{3+2n\over 2
z},&\lambda\in\Lambda_{++},& r_n={4n^2+8n+3\over 4 z^2} +
{2n+3\over z}+1,\\&&&\\
\Lambda_{+-}) & \omega=1-{1+2n\over 2 z},&\lambda\in\Lambda_{+-},&
r_n={4n^2+8n+3\over 4 z^2} -
{2n+1\over z}+1,\\&&&\\
\Lambda_{-+}) & \omega=-1+{1+2n\over 2
z},&\lambda\in\Lambda_{-+},& r_n={4n^2-1\over 4 z^2} -{2n+1\over
z}+1,\\&&&\\ \Lambda_{--}) & \omega=-1+{1-2n\over 2
z},&\lambda\in\Lambda_{--},& r_n={4n^2-1\over 4 z^2} + {2n-1\over
z}+1,
\end{array}
$$
where $r_n$ is the coefficient of the differential equation
$\partial_z^2\Phi=r_n\Phi$.
\\

\noindent By step 3, there exists a polynomial of degree $n$
satisfying the relation (\ref{recu1}),
$$
\begin{array}{llll}
\Lambda_{++})\quad& \partial_z^2 \widehat{P}_n+2\left(1+{3+2n\over
2 z}\right)\partial_z \widehat{P}_n+{2(n+2)\over  z}\widehat{P}_n&=&0,\\&&&\\
\Lambda_{+-}) & \partial_z^2 \widehat{P}_n+2\left(1-{1+2n\over
2 z}\right)\partial_z \widehat{P}_n+{2(-n)\over  z}\widehat{P}_n&=&0,\\&&&\\
\Lambda_{-+}) & \partial_z^2 \widehat{P}_n+2\left(-1+{1+2n\over 2
z}\right)\partial_z \widehat{P}_n+{2(-n)\over  z}\widehat{P}_n&=&0,\\&&&\\
\Lambda_{--}) &
\partial_z^2 \widehat{P}_n+2\left(-1+{1-2n\over 2 z}\right)\partial_z \widehat{P}_n+{2n\over  z}\widehat{P}_n&=&0.
\end{array}
$$

These polynomials only exist for $n=\lambda=0$, with
$\lambda\in\Lambda_{-+}\cup\Lambda_{--}$. So  the solutions of
$H\Psi=0$, $\widehat{H}\widehat{\Psi}=0$ and
$\widehat{\mathbf{H}}\Phi=0$ are given by
$$
\Phi_0=\sqrt{ z}e^{- z},\quad \widehat{\Psi}_0=e^{- z},\quad
\Psi=e^{- e^{-x}}.
$$
The wave function $\Psi_0$ satisfy the bound state conditions,
which means that is ground state (see \cite{duka}) and
$0\in\mathrm{Spec}_p(H)$. Furthermore, we have

$$\mathrm{DGal}(L_0/{K})=\mathrm{DGal}(\widehat{L}_0/\widehat{K})=\mathrm{DGal}(\widehat{\mathbf{L}}_0/{\mathbb{C}(z)})=\mathbb{B},$$
$$\mathcal{E}(H)=\mathcal{E}(\widehat{H})=\mathcal{E}(\widehat{\mathbf{H}})=\mathrm{Vect}(1).$$

We proceed with case $2$. The conditions $c_2$ and $\infty_3$
are satisfied. So we have
$$E_c=\left\{2,2+4\sqrt{-\lambda},2-4\sqrt{-\lambda}\right\}\quad
\textrm{and}\quad E_{\infty}=\{0\},$$ and by step two, we have
that $2\pm\sqrt{-\lambda}=m\in\mathbb{Z}_+$, so that
$\lambda=-\left(\frac{m+1}{2}\right)^2$ and the rational function
$\theta$ has the following possibilities
$$\theta_+={2+m\over  z},\quad \theta_-=-{m\over  z}.$$

\noindent By step three, there exist a monic polynomial of degree
$m$ satisfying the recurrence relation (\ref{recu2}):
$$
\begin{array}{llll}
\theta_{+})& \partial_z^3 \widehat{P}_m+{3m+6\over  z}\partial_z^2
\widehat{P}_m-{4· z^2 - 4 z - 2·m^2 - 7·m -6\over  z^2}\partial_z
\widehat{P}_m-{ 4·m· z +8· z - 4·m - 6\over
 z^2}\widehat{P}_m&=&0,\\&&&\\ \theta_{-}) & \partial_z^3 \widehat{P}_m-{3m\over  z}\partial_z^2 \widehat{P}_m-{4· z^2
- 4· z - 2m^2-m\over  z^2}\partial_z \widehat{P}_m+{ 4·m· z - 4·m
- 2\over
 z^2}\widehat{P}_m&=&0.
\end{array}
$$
We can see that, for $m=1$, the polynomial exists only in the case
$\theta_-$, with $\widehat{P}_1=z-1/2$. In general, these
polynomials could exist only in the case $\theta_-$ with
$m=2n-1$, $n\geq 1$, that is $\lambda\in\{-n^2:n\geq 1\}$.
\\

By case one and case two, we obtain
$\Lambda=\{-n^2:n\geq 0\}=\mathrm{Spec}_p(H)$. Now, the rational
function $\phi$ and the quadratic expression for $\omega$ are
$$
\phi=-{2n-1\over z}+{\partial_z \widehat{P}_{2n-1}\over
\widehat{P}_{2n-1}},\quad
\omega^2+M\omega+N=0,\quad\omega=\frac{-M\pm\sqrt{M^2-4N}}{2},
$$
where the coefficients $M$ and $N$ are given by
$$
M={2n-1\over z}-{\partial_z \widehat{P}_{2n-1}\over
\widehat{P}_{2n-1}},\quad N={n^2-n+\frac14\over
z^2}-{(2n-1){\partial_z \widehat{P}_{2n-1}\over
\widehat{P}_{2n-1}}-2\over z}+{\partial_z^2
\widehat{P}_{2n-1}\over \widehat{P}_{2n-1}}-2.
$$
Now, $\triangle=M^2-4N\neq 0$, which means that
$\mathbf{\widehat{H}}\Phi=-n^2\Phi$ with $n\in\mathbb{Z}^+$ has
two solutions given by Kovacic's algorithm:
$$\Phi_{1,n}={\sqrt{ z}\widehat{P}_ne^{- z}\over  z^n},\quad \Phi_{2,n}={\sqrt{ z}\widehat{P}_{n-1}e^{ z}\over  z^n}.$$
The solutions of $\widehat{H}\widehat{\Psi}=-n^2\widehat{\Psi}$
are given by
$$\widehat{\Psi}_{1,n}={\widehat{P}_ne^{- z}\over  z^n},\quad \widehat{\Psi}_{2,n}={\widehat{P}_{n-1}e^{ z}\over  z^n},$$
and therefore, the solutions of the Schr\"odinger equation
$H\Psi=-n^2\Psi$ are

$$\Psi_{1,n}=P_ne^{-e^{-x}}e^{nx},\quad
 \Psi_{2,n}=P_{n-1}e^{e^{-x}}e^{nx},\quad P_n=\widehat{P}_n\circ z.$$

The wave functions $\Psi_{1,n}=\Psi_n$ satisfy the conditions of
bound state and, for $n=0$, this solution coincides with the
ground state presented above. Therefore we have
$$\Phi_n=\Phi_0\widehat{f}_n\widehat{P}_n,\quad \widehat{\Psi}_n=\widehat{\Psi}_0\widehat{f}_n\widehat{P}_n,\quad \widehat{f}_n( z)=\frac{1}{
z^{n}}.$$ Thus, the bound states wave functions are obtained as
$$\Psi_n=\Psi_0f_nP_n,\quad
f_n(x)=\widehat{f}_n(e^{-x})=e^{nx}.$$ The eigenring s and
differential Galois groups for $n>0$ satisfies
$$\mathrm{DGal}(L_n/K)=\mathrm{DGal}(\widehat{L}_n/{\widehat{K}})=\mathrm{DGal}(\widehat{\mathbf{L}}_n/{\mathbb{C}(z)})=\mathbb{G}_m,$$
$$\dim_{\mathbb{C}}\mathcal E(\widehat{\mathbf{H}}+n^2)= \dim_{\mathbb{C}}\mathcal E(\widehat{{H}}+n^2)=\dim_{\mathbb{C}}\mathcal E(H+n^2)=2.$$

We remark that the Schr\"odinger equation with Morse potential,
under suitable changes of variables \cite{lali}, may be transformed to a
Bessel differential equation.
\\

It is known that equations deriving from  Eckart, Rosen-Morse, Scarf and P\"oschl-Teller
potentials  may be mapped, under suitable transformations, to
Hypergeometric equations. These potentials are inter-related by
point canonical coordinate transformations (see \cite[p.
314]{cokasu} ), so that $\Lambda=\mathbb{C}$ because
P\"oschl-Teller potential is obtained by means of Darboux
transformations of $V=0$ (\cite{masa,ro}).
 We consider some particular cases of Eckart, Scarf and Poschl-Teller potentials for which we apply only case 1 of Kovacic's algorithm. Case 1 allows us
 to obtain the enumerable set $\Lambda_n\subset \Lambda$, which includes the classical results obtained by means of supersymmetric quantum mechanics.
 Cases 2 and 3 of Kovacic algorithm also can be applied, but are not considered here.\\

 \textbf{Eckart potential:} $V(x)=4\coth(x)+5$, $x>0$.\\

The Schr\"odinger equation $H\Psi=\lambda\Psi$ is
$$\partial_x^2\Psi=\left(4\coth(x)+5-\lambda\right)\Psi.$$

By the Hamiltonian change of variable $z=z(x)=\coth(x)$, we obtain
$$\alpha(z)=(1-z^2)^2,\quad \widehat V(z)=4z+5,\quad \widehat{\mathbf{V}}(z)=\frac4{(z + 1)·(z - 1)^2}.$$
 Thus, $\widehat{K}=\mathbb{C}(z)$ and $K=\mathbb{C}(\coth(x))$.
The algebrized Schr\"odinger equation
$\widehat{H}\widehat{\Psi}=\lambda\widehat{\Psi}$ is
$$(1-z^2)^2\partial_z^2\widehat{\Psi}-2z(1-z^2)\partial_z\widehat{\Psi}-(4z+5-\lambda)\widehat{\Psi}=0$$ and the reduced
algebrized Schr\"odinger equation
$\widehat{\mathbf{H}}\widehat{\Phi}=\lambda\widehat{\Phi}$ is
\begin{displaymath}\partial_z^2\Phi=r\Phi,\quad r={\frac {4z+4-\lambda}{ \left(z-1 \right) ^{2} \left(z+1 \right) ^{2}}}=
 {\frac {2-\frac{\lambda}4}{\left(  z-1 \right) ^{2}}}+{\frac {\frac{\lambda}4-1}{(
 z-1)}}+
{\frac {-\frac{\lambda}4}{\left(  z+1 \right) ^{2}}}+{\frac
{1-\frac{\lambda}4}{( z+1)}}\end{displaymath}

We can see that this equation could fall in any case of Kovacic's
algorithm. Considering $\lambda=0$, the conditions
$\{c_1,c_2,\infty_1\}$  of case 1 are satisfied, obtaining $$
[\sqrt{r}]_{-1}=[\sqrt{r}]_{1}=[\sqrt{r}]_{\infty}=\alpha^+_{\infty}=0,
\quad\alpha^{\pm}_{-1}=\alpha^-_{\infty}=1,\quad
\alpha^+_{1}=2,\quad\alpha^-_{1}=-1. $$ By step 2, the elements of
$D$ are $0$ and $1$. The rational function $\omega$ for $n=0$ and
for $n=1$ must be
$$\omega={1\over  z+1}+{-1\over  z-1}.$$
By step 3, we search for a monic polynomial of degree $n$ satisfying
the relation (\ref{recu1}). Starting with $n=0$ the only one
possibility is $\widehat{P}_0( z)=1$, which effectively satisfy
the relation (\ref{recu1}), while $\widehat{P}_1( z)= z+a_0$ does
not exist.  
We have obtained one solution using
Kovacic algorithm:
$$\Phi_0={ z+1\over  z-1},\quad \widehat{\Psi}_0=\sqrt{ z+1\over ( z-1)^3},$$
this means that $0\in\Lambda_n$. We can obtain the second solution
using the first solution:
$$\Phi_{0,2}={\frac {{ z}^{2}+ z-4-4\,\ln  \left(  z+1 \right)  z-4\,\ln  \left(  z+1
 \right) }{ z-1}},\quad \widehat{\Psi}_{0,2}={\Phi_{0,2}\over \sqrt{
 z^2-1}}.$$ Furthermore the differential Galois groups and eigenring s for $\lambda=0$ are
 $$\mathrm{DGal}(\widehat{\mathbf{L}}_0/{\mathbb{C}(z)})=\mathbb{G}_a,\quad
 \mathrm{DGal}(L_0/{K})=\mathrm{DGal}(\widehat{L}_0/{\widehat{K}})=\mathbb{G}^{\{2\}},$$
$$\mathcal E(\widehat{\mathbf{H}})=\mathrm{Vect} \left(1,{(z+1)^2\over (1-z)^2}\partial_z+{2(z+1)\over (1-z)^3}\right),$$
 $$ \mathcal E(\widehat{{H}})=\mathrm{Vect}\left(1,{(z+1)^2\over (1-z)^2}\partial_z-{z^2+3z+2\over (1-z)^3}\right),$$
 $$\mathcal E(H)=\mathrm{Vect}\left(1,{(\coth(x)+1)^2\over (1-\coth(x))^2(1-\coth^2(x))}\partial_x-{\coth^2(x)+3\coth(x)+2\over (1-\coth(x))^3}\right).$$

 Now, for $\lambda\neq 0$, the conditions $\{c_2,\infty_1\}$ of case 1 are
 satisfied:

$$\begin{array}{l}
[\sqrt{r}]_{-1}=[\sqrt{r}]_{1}=[\sqrt{r}]_{\infty}=\alpha^+_{\infty}=0,
\quad\alpha^-_{\infty}=1,\\ \\
\alpha^{\pm}_{-1}={1\pm\sqrt{1-\lambda}\over 2
},\quad\alpha^{\pm}_{1}={1\pm\sqrt{9-\lambda}\over 2 }.
\end{array}$$ By step 2 we have the following possibilities for
$n\in\mathbb{Z}_+$ and for $\lambda\in\Lambda$:
$$
\begin{array}{lll}
\Lambda_{++-}) & n=\alpha^+_\infty -
\alpha^+_{-1}-\alpha^-_{1}=-1-{\sqrt{1-\lambda}-\sqrt{9-\lambda}\over
2},& \lambda=4-\frac4{(n + 1)^2} -n^2 - 2·n, \\&&\\
\Lambda_{+--}) & n=\alpha^+_\infty -
\alpha^-_{-1}-\alpha^-_{1}=-1+{\sqrt{1-\lambda}+\sqrt{9-\lambda}\over
2},& \lambda=4-\frac4{(n + 1)^2} - n^2 - 2·n, \\&& \\
\Lambda_{-+-}) & n=\alpha^-_\infty -
\alpha^+_{-1}-\alpha^-_{1}={\sqrt{1-\lambda}+\sqrt{9-\lambda}\over
2},& \lambda=5-\frac4{n^2} -n^2, \\&&\\ \Lambda_{---}) &
n=\alpha^-_\infty -
\alpha^-_{-1}-\alpha^-_{1}={\sqrt{1-\lambda}+\sqrt{9-\lambda}\over
2},& \lambda=5-\frac4{n^2} - n^2.
\end{array}
$$
Therefore, we have 
$$\Lambda_n\subseteq\left\{4-\frac4{(n + 1)^2} - n^2- 2·n :n\in\mathbb{Z}_+\right\}\cup
\left\{5-\frac4{n^2} - n^2:n\in\mathbb{Z}_+\right\}.$$

Now, for $\lambda\in\Lambda$, the rational function $\omega$ is
given by: {\small$$
\begin{array}{lll}
\Lambda_{++-})\quad& \omega={ z·(n - 1) - n^2 - 2·n - 1\over (n +
1)·( z + 1)·( z - 1)},& r_n={-2· z^2·(n - 1) +4· z·(n + 1)^2 +(n +
1)·(n^3 + 3·n^2 + 2·n + 2)\over(n + 1)^2·( z +
1)^2·( z - 1)^2},\\&&\\
\Lambda_{+--}) & \omega={n· z·(n + 1) + 2\over (n + 1)·( z + 1)·(1
-  z)},& r_n={n· z^2·(n + 1)^3 + 4 z·(n + 1)^2 + n^3 +
2·n^2 + n + 4\over(n + 1)^2·( z + 1)^2·( z - 1)^2},\\&&\\
\Lambda_{-+-}) & \omega={ z·(n - 2) - n^2\over n·( z + 1)·( z -
1)},& r_n={-2· z^2·(n - 2) + 4·n^2· z +n·(n^3 - n + 2)\over n^2·(
z + 1)^2·( z - 1)^2},\\&&\\ \Lambda_{---}) & \omega={n· z·(n - 1)
+ 2\over n·( z + 1)·(1 -  z)},& r_n={n^3· z^2·(n - 1) + 4·n^2· z +
n^3 - n^2 + 4\over n^2·( z + 1)^2·( z - 1)^2},
\end{array}
$$}
where $r_n$ is the coefficient of the differential equation
$\partial_z^2\Phi=r_n\Phi$.
\\

\noindent By step 3, there exists a monic polynomial of degree $n$
satisfying the relation (\ref{recu1}),
$$
\begin{array}{llll}
\Lambda_{++-})\quad& \partial_z^2\widehat P_n+2\left({ z·(n - 1) -
n^2 - 2·n - 1\over (n + 1)·( z + 1)·( z -
1)}\right)\partial_z\widehat P_n+{2·(1 -
n)\over((n + 1)^2·( z + 1)·( z - 1)}\widehat{P}_n&=&0,\\&&&\\
\Lambda_{+--}) & \partial_z^2\widehat P_n+2\left({n· z·(n + 1) +
2\over (n + 1)·( z + 1)·(1 -  z)}\right)\partial_z\widehat
P_n+{n·(n + 1)\over ( z +
1)·( z - 1)}\widehat{P}_n&=&0,\\&&&\\
\Lambda_{-+-}) & \partial_z^2\widehat P_n+2\left({ z·(n - 2) -
n^2\over n·( z + 1)·( z - 1)}\right)\partial_z\widehat P_n+{2·(2 -
n)\over n^2·( z + 1)·( z - 1)}\widehat{P}_n&=&0,\\&&&\\
\Lambda_{---}) &
\partial_z^2\widehat P_n+2\left({n· z·(n - 1) + 2\over n·( z +
1)·(1 - z)}\right)\partial_z\widehat P_n+{n·(n - 1)\over ( z +
1)·( z - 1)}\widehat{P}_n&=&0.
\end{array}
$$
The only one case in which there exists such a polynomial
$\widehat{P}_n$ of degree $n$ is for $\Lambda_{+--})$. The
solutions of the equation $\widehat{\mathbf{H}}\Phi=\lambda\Phi$,
with $\lambda\neq 0$, are:
$$
\begin{array}{llll}
\Lambda_{++-})\quad&
\Phi_n=\widehat{P}_n\widehat{f}_n\Phi_0,&\Phi_0={1\over z
-1}&\widehat{f}_n=( z - 1)^{n·(1 - n)´\over 2·(n + 1)}·( z +
1)^{n·(n +
3)\over 2·(n + 1)},\\&&&\\
\Lambda_{+--}) & \Phi_n=\widehat{P}_n\widehat{f}_n\Phi_0,&\Phi_0={
z+1\over z -1}&f_n=( z - 1)^{n·(1 - n)\over 2·(n + 1)}·( z + 1)^{-
n·(n +
3)\over 2·(n + 1)},\\&&&\\
\Lambda_{-+-}) &
\Phi_n=\widehat{P}_n\widehat{f}_n\Phi_1,&\Phi_1={1\over  z
-1}&\widehat{f}_n=( z + 1)^{n^2 + n - 2\over 2·n}·( z - 1)^{-n^2 +
3·n - 2 \over 2·n},\\&&&\\ \Lambda_{---}) &
\Phi_n=\widehat{P}_n\widehat{f}_n\Phi_1,&\Phi_1={ z+1\over  z
-1}&\widehat{f}_n=( z + 1)^{-n^2 - n + 2 \over 2·n}·( z - 1)^{-n^2
+ 3·n- 2 \over 2·n}.
\end{array}
$$
In any case $\widehat{\Psi}_n={\Phi_n\over 1- z^2}$. The case
$\Lambda_{+--})$ includes the classical results obtained by means
of supersymmetric quantum mechanics. Thus, replacing $z$ by $\coth
(x)$ we obtain the eigenstates $\Psi_n$. The eigenring s and
differential Galois groups for $n>0$ and $\lambda\in\Lambda_n$
satisfy
$$\mathrm{DGal}(L_\lambda/{K})\subseteq \mathbb{G}^{\{2m\}},\quad \mathrm{DGal}(\widehat{L}_\lambda/{\widehat{K}})\subseteq
\mathbb{G}^{\{2m\}},$$
$$\mathrm{DGal}(\widehat{\mathbf{L}}_\lambda/{\mathbb{C}(z)})=\mathbb{G}_m,$$
$$\dim_{\mathbb{C}(z)}\mathcal E(\widehat{\mathbf{H}}+\lambda)= 2,\quad \mathcal E(\widehat{{H}}+\lambda)=\mathcal E(H+\lambda)=\mathrm{Vect}(1).$$
\medskip

 \textbf{Scarf potential:} $V(x)=\frac{\sinh^2x-3\sinh x}{\cosh^2 x}.$\\

The Schr\"odinger equation $H\Psi=E\Psi$ is
$$\partial_x^2\Psi=\left({\sinh^2x-3\sinh x\over \cosh^2 x}-E\right)\Psi.$$

By the Hamiltonian change of variable $z=z(x)=\sinh(x)$, we obtain
$$\alpha(z)=1+z^2,\quad \widehat V(z)={z^2-3z\over 1+z^2}.$$
 Thus, $\widehat{K}=\mathbb{C}(z,\sqrt{1+z^2})$ and $K=\mathbb{C}(\sinh(x),\cosh(x))$.
The reduced algebrized Schr\"odinger equation
$\widehat{\mathbf{H}}{\Phi}=\lambda{\Phi}$ is
$$\partial_z^2=\left({\lambda z^2 -12 z
 +\lambda-1\over 4·( z^2 + 1)^2}\right)\Phi, \quad \lambda=3-4E.$$

Applying Kovacic's algorithm for this equation with $\lambda=0$,
we see that it does not satisfy case 1. So we consider only
$\lambda\neq 0 $.  By conditions $\{c_2,\infty_2\}$ of case 1 we
have 

$$\begin{array}{l}
[\sqrt{r}]_{-\mathrm{i}}=[\sqrt{r}]_{\mathrm{i}}=[\sqrt{r}]_{\infty}=0,
\quad\alpha^\pm_{\infty}={1\pm\sqrt{1+\lambda}\over 2},\\ \\
\alpha^{+}_{-\mathrm{i}}=\frac54-\frac{\mathrm{i}}2,\quad\alpha^-_{-\mathrm{i}}=-\frac{1}4+\frac{\mathrm{i}}2\quad
\alpha^{+}_{\mathrm{i}}=\frac54+\frac{\mathrm{i}}2,\quad\alpha^-_{\mathrm{i}}=-\frac{1}4-\frac{\mathrm{i}}2
.
\end{array}$$ By step 2 we have the following possibilities for
$n\in\mathbb{Z}_+$ and for $\lambda\in\Lambda$:
$$
\begin{array}{lll}
\Lambda_{+++}) & n=\alpha^+_\infty -
\alpha^+_{-\mathrm{i}}-\alpha^+_{\mathrm{i}}={\sqrt{\lambda+1}-4\over
2},& \lambda=4n^2 + 16n + 15, \\&&\\ \Lambda_{+--}) &
n=\alpha^+_\infty -
\alpha^-_{-\mathrm{i}}-\alpha^-_{\mathrm{i}}={\sqrt{\lambda+1}+2\over
2},& \lambda=4n^2 - 8n + 3,
\end{array}
$$
thus obtaining  
$$\Lambda_n\subseteq\left\{4n^2 + 16n + 15 :n\in\mathbb{Z}_+\right\}\cup
\left\{4n^2 -8n+3:n\in\mathbb{Z}_+\right\}.$$

Now, the rational function $\omega$ is given by:
$$
\Lambda_{+++})\quad\quad \omega={5· z - 2\over 2·( z^2 +
1)},\qquad \Lambda_{+--})\quad {2 -  z \over 2·( z^2 + 1)}.
$$
By step 3, there exists $\widehat{P}_0=1$ and a polynomial of
degree $n\geq 1$ should satisfy either of the relation
(\ref{recu1}),
$$
\begin{array}{llll}
\Lambda_{+++})\quad& \partial_z^2\widehat P_n+{5· z - 2\over  z^2
+ 1}\partial_z\widehat P_n-{n· z^2·(n
+ 4) + 3· z + n^2 + 4·n - 3\over ( z^2 + 1)^2}\widehat{P}_n&=&0,\\&&&\\
\Lambda_{+--}) & \partial_z^2\widehat P_n+{2 -  z \over  z^2 +
1}\partial_z\widehat P_n-{n· z^2·(n - 2) + 3· z + n^2 - 2·n -
3\over ( z^2 + 1)^2}\widehat{P}_n&=&0.
\end{array}
$$
In both cases there exists such a polynomial $\widehat{P}_n$ of
degree $n\geq 1$. Basis of solutions $\{\Phi_{1,n},\Phi_{2,n}\}$
of the reduced algebrized Schr\"odinger equation are:
$$
\begin{array}{llll}
\Lambda_{+++})\quad&
\Phi_{1,n}=\widehat{P}_n\widehat{f}_n\Phi_{1,0},&\Phi_{1,0}=(1+
z^2)^\frac54e^{-\arctan z},&\widehat{f}_n=1,\\&&&\\
\quad&
\Phi_{2,n}=\widehat{Q}_n\widehat{g}_n\Phi_{2,0},&\Phi_{2,0}={22+21x+12x^2+6x^3\over
\sqrt[4]{1+ z^2}}e^{-\arctan z},&\widehat{g}_n=1.\\&&&\\
\Lambda_{+--}) &
\Phi_{1,n}=\widehat{P}_n\widehat{f}_n\Phi_{1,0},&\Phi_{1,0}={1\over
\sqrt[4]{1+ z^2}}e^{\arctan z},&\widehat{f}_n=1,\\&&&\\
&
\Phi_{2,n}=\widehat{Q}_n\widehat{g}_n\Phi_{2,0},&\Phi_{2,0}={1\over
\sqrt[4]{1+ z^2}}e^{\arctan z}\int{1\over \sqrt{1+
z^2}}e^{-2\arctan z}d z,&\widehat{g}_n=1.
\end{array}
$$
\medskip

In both cases $\widehat{\Psi}={\Phi\over \sqrt[4]{1+ z^2}}$, but
the classical case (see references \cite{cokasu,duka}) is
$\Lambda_{+--})$, so that replacing $ z$ by $\sinh x$ and
$\lambda$ by $3-4E$ we obtain the eigenstates $\Psi_n$.
\\

The eigenring s and differential Galois groups are
$$\mathcal E(H-\lambda)=\mathcal E(\widehat{H}-\lambda)=\mathcal E(\widehat{\mathbf{H}}-\lambda)=\mathrm{Vect}(1),$$
$$\mathrm{DGal}(L_\lambda/K)=\mathrm{DGal}(\widehat{L}_\lambda/{\widehat{K}})=\mathrm{DGal}(\widehat{\mathbf{L}}_\lambda/{\mathbb{C}(x)})=\mathbb{B}.$$
\\

\textbf{P\"oschl-Teller potential (revisited):}
$V(\mathrm{r})={\cosh^4(x)-\cosh^2(x) +2\over \sinh^2(x)\cosh^2
(x)},$ $x>0$. The reduced algebrized Schr\"odinger equation
$\widehat{\mathbf{H}}\Phi=E\Phi$ is
$$\partial_z^2\Phi=\left({\lambda z^4· -
(\lambda+3) z^2 + 8\over 4· z^2( z^2 - 1)^2}\right)\Phi, \quad
\lambda=3-4E.$$

Considering $\lambda=0$ and starting with the conditions
$\{c_2,\infty_1\}$ of case 1, we obtain
$$\begin{array}{l}
[\sqrt{r}]_{0}=[\sqrt{r}]_{-1}=[\sqrt{r}]_{1}=[\sqrt{r}]_{\infty}=\alpha^+_{\infty}=0,\quad
\alpha^-_{\infty}=1,
\\
\\
 \alpha^+_{-1}=\alpha^+_{1}=\frac54,\quad
 \alpha^-_{-1}=\alpha^-_{1}=-\frac14,\quad \alpha^+_{0}=2,\quad\alpha^-_{0}=-1.\end{array}$$ By step 2, the
elements of $D$ are $0$ and $1$. For the rational function $\omega$,
we have the following possibilities for $n=0$ and for $n=1$:
$$\begin{array}{lll}
\Lambda_{++--}) & n=0,&\omega={5/4\over  z+1}+{-1/4\over
 z-1}+{-1\over  z},\\& &\\
\Lambda_{+-+-}) & n=0,&\omega={-1/4\over  z+1}+{5/4\over
 z-1}+{-1\over  z},\\ & & \\
\Lambda_{-+--}) & n=1,&\omega={5/4\over  z+1}+{-1/4\over
 z-1}+{-1\over  z},\\& &\\ \Lambda_{--+-}) &
n=1,&\omega={-1/4\over  z+1}+{5/4\over  z-1}+{-1\over  z}.
\end{array}
$$ 
Following step $3$, we search for a monic polynomial of degree $n$
satisfying   relation (\ref{recu1}). Starting with $n=0$ the
only   possibility for $\Lambda_{++--})$ and $\Lambda_{+-+-})$
is $\widehat{P}_0( z)=1$, which does not satisfy    relation
(\ref{recu1}) in both cases, while $\widehat{P}_1( z)= z+a_0$
effectively does exist, taking $a_0=-{2\over 3}$ for
$\Lambda_{-+--})$ and $a_0={2\over 3}$ for $\Lambda_{--+-})$. 
This way, we have obtained two solutions ($\Phi_{1,0}$,
$\Phi_{2,0}$) using Kovacic's algorithm:
$$\begin{array}{ll}\Phi_{1,0}=\left(1-\frac2{3 z}\right)\sqrt[4]{( z+1)^5\over  z-1},&\widehat{\Psi}_{1,0}=\left(1-\frac2{3 z}\right){ z+1\over \sqrt{ z-1}},\\
& \\
{\Phi}_{2,0}=\left(1+\frac2{3 z}\right)\sqrt[4]{( z-1)^5\over
 z+1},&\widehat{\Psi}_{2,0}=\left(1+\frac2{3 z}\right){ z-1\over
\sqrt{ z+1}} .\end{array}$$ 
This means that $0\in\Lambda_n$.
Furthermore,
$$\mathrm{DGal}(\widehat{\mathbf{L}}_0/{\mathbb{C}(x)})=\mathbb{G}^{[4]},\quad \mathrm{DGal}(\widehat{{L}}_0/{\widehat{K}})=
\mathrm{DGal}(L_0/{K})=e,$$
$$\dim_{\mathbb{C}}\mathcal E(\widehat{\mathbf{H}})=2,\quad \dim_{\mathbb{C}}\mathcal E(\widehat{H})=\dim_{\mathbb{C}} \mathcal E(H)=4.$$

 Now, for $\lambda\neq 0$, we see that conditions $\{c_2,\infty_1\}$ of case 1 leads us to

$$\begin{array}{l}
[\sqrt{r}]_{0}=[\sqrt{r}]_{-1}=[\sqrt{r}]_{1}=[\sqrt{r}]_{\infty}=0,
\quad\quad\alpha^\pm_{\infty}={1\pm\sqrt{1+\lambda}\over 2},\\ \\
 \alpha^+_{-1}=\alpha^+_{1}=\frac54,\quad
 \alpha^-_{-1}=\alpha^-_{1}=-\frac14,\quad \alpha^+_{0}=2,\quad\alpha^-_{0}=-1.
\end{array}$$ By step 2 we have the following possibilities for
$n\in\mathbb{Z}_+$ and for $\lambda\in\Lambda$:
$$
\begin{array}{lll}
\Lambda_{++++}) & n=\alpha^+_\infty -
\alpha^+_{-1}-\alpha^+_{1}-\alpha^+_0={\sqrt{\lambda+1}-8\over
2},& \lambda=4·n^2 + 32·n + 63, \\&&\\
\Lambda_{+++-}) &n=\alpha^+_\infty -
\alpha^+_{-1}-\alpha^+_{1}-\alpha^-_0={\sqrt{\lambda+1}-2\over
2},& \lambda=4·n^2 + 8·n + 3, \\&&\\
\Lambda_{++-+}) & n=\alpha^+_\infty -
\alpha^+_{-1}-\alpha^-_{1}-\alpha^+_0={\sqrt{\lambda+1}-5\over
2},& \lambda=4·n^2 + 20·n + 24, \\&&\\
\Lambda_{++--}) & n=\alpha^+_\infty -
\alpha^+_{-1}-\alpha^-_{1}-\alpha^-_0={\sqrt{\lambda+1}+1\over
2},& \lambda=4·n^2 -4n, \\&&\\
\Lambda_{+-++}) & n=\alpha^+_\infty -
\alpha^-_{-1}-\alpha^+_{1}-\alpha^+_0={\sqrt{\lambda+1}-5\over
2},& \lambda=4·n^2 + 20·n + 24, \\&&\\
\Lambda_{+-+-}) &n=\alpha^+_\infty -
\alpha^-_{-1}-\alpha^+_{1}-\alpha^-_0={\sqrt{\lambda+1}+1\over
2},& \lambda=4·n^2 -4n, \\&&\\
\Lambda_{+--+}) & n=\alpha^+_\infty -
\alpha^-_{-1}-\alpha^-_{1}-\alpha^+_0={\sqrt{\lambda+1}-2\over
2},& \lambda=4·n^2 + 8·n + 3, \\&&\\
\Lambda_{+---}) & n=\alpha^+_\infty -
\alpha^-_{-1}-\alpha^-_{1}-\alpha^-_0={\sqrt{\lambda+1}+4\over
2},& \lambda=4·n^2 -16n + 15,
\end{array}
$$
obtaining
$\Lambda_n\subseteq\Lambda_a\cup\Lambda_b\cup\Lambda_c\cup\Lambda_d\cup\Lambda_e,$
where
$$\begin{array}{ll}
\Lambda_a=\left\{4n^2 + 32·n +63:n\in\mathbb{Z}_+\right\},&
\Lambda_b=\left\{4n^2+8n+3:n\in\mathbb{Z}_+\right\}\\&
\\
\Lambda_c=\left\{4n^2 + 20·n +24:n\in\mathbb{Z}_+\right\},&
\Lambda_d=\left\{4n^2-4n:n\in\mathbb{Z}_+\right\},\\&
\\
\Lambda_e=\left\{4n^2 - 16·n +15:n\in\mathbb{Z}_+\right\}.&
\end{array}$$

Now, the rational function $\omega$ is given by, respectively:
$$\begin{array}{llll}
\Lambda_{++++}) &\omega={5/4\over  z+1}+{5/4\over  z-1}+{2\over
 z},& \Lambda_{+++-}) &\omega={5/4\over  z+1}+{5/4\over
 z-1}+{-1\over  z},\\ && & \\
\Lambda_{++-+}) &\omega={5/4\over  z+1}+{-1/4\over  z-1}+{2\over
 z}, & \Lambda_{++--}) & \omega={5/4\over  z+1}+{-1/4\over
 z-1}+{-1\over  z},\\& &\\ \Lambda_{+-++}) &\omega={-1/4\over
 z+1}+{5/4\over  z-1}+{2\over  z},&\Lambda_{+-+-})
&\omega={-1/4\over  z+1}+{5/4\over  z-1}+{-1\over  z},\\&&&\\
\Lambda_{+--+})&\omega={-1/4\over  z+1}+{-1/4\over  z-1}+{2\over
 z},& \Lambda_{+---}) &\omega={-1/4\over  z+1}+{-1/4\over
 z-1}+{-1\over  z}.
\end{array}
$$

By step 3, there exists a monic polynomial of degree $n$
satisfying  relation (\ref{recu1}):
$$
\begin{array}{ll}
\Lambda_{++++}) & \partial_z^2\widehat P_n+{(3· z + 2)·(3· z -
2)\over  z·( z + 1)·( z -
1)}\partial_z\widehat P_n+{n(n + 8)\over ( z + 1)·(1 -  z)}P_n=0, \\& \\
\Lambda_{+++-}) &\partial_z^2\widehat P_n+{3· z^2 + 2\over  z·( z + 1)·( z - 1)}\partial_z\widehat P_n+{n·(n + 2)\over( z + 1)·(1 -  z)}\widehat{P}_n=0, \\& \\
\Lambda_{++-+}) & \partial_z^2\widehat P_n+{6· z^2 - 3·x - 4\over  z·( z + 1)·( z - 1)}\partial_z\widehat P_n+{n· z·(n + 5) + 6\over  z·( z + 1)·(1 -  z)}\widehat{P}_n=0, \\& \\
\Lambda_{++--}) & \partial_z^2\widehat P_n+{3 z - 2\over  z·( z + 1)·(1 -  z)}\partial_z\widehat P_n+{n· z·(n - 1) - 12\over 4· z·( z + 1)·(1 -  z)}\widehat{P}_n=0, \\& \\
\Lambda_{+-++}) & \partial_z^2\widehat P_n+{6· z^2 + 3· z - 4\over  z·( z + 1)·( z - 1)}\partial_z\widehat P_n+{ z·(n^2 + 5) - 6\over  z·( z + 1)·(1 -  z)}\widehat{P}_n=0, \\& \\
\Lambda_{+-+-}) &\partial_z^2\widehat P_n+{3· z + 2\over  z·( z + 1)·( z - 1)}\partial_z\widehat P_n+{n· z·(n - 1) + 3 \over  z·( z + 1)·(1 -  z)}\widehat{P}_n=0, \\& \\
\Lambda_{+--+}) & \partial_z^2\widehat P_n+{3· z^2 - 4 \over  z·( z + 1)·( z - 1)}\partial_z\widehat P_n+{n·(n + 2)\over ( z + 1)·(1 -  z)}\widehat{P}_n=0, \\& \\
\Lambda_{+---}) & \partial_z^2\widehat P_n+{3· z^2 - 2\over  z·( z
+ 1)·(1 -
 z)}\partial_z\widehat P_n+{n·(4 - n)\over ( z + 1)·( z - 1)}P_n=0.
\end{array}
$$

The polynomial $P_n$ of degree $n$ exists for
$\lambda_n\in\Lambda_b$ with $n$ even, that is,
$\Lambda_n=\{n\in\mathbb{Z}:16n^2+16n+3\}$, for $\Lambda_{++-+})$
and $\Lambda_{+--+})$. Therefore
$E=E_n=\{n\in\mathbb{Z}:-4n^2-4n\}$.

\noindent The corresponding solutions for $\Lambda_n$ are
$$
\begin{array}{lllll}
\Lambda_{+++-})\quad&
\Phi_{1,n}=\widehat{P}_{2n}\widehat{f}_n\Phi_{1,0},&\Phi_{1,0}={\sqrt[4]{(
z^2-1)^5}\over  z}&\widehat{f}_n=1,&\widehat{\Psi}_{1,0}= z-\frac1{ z},\\&&&\\
\Lambda_{+--+}) &
\widehat{\Phi}_{2,n}=\widehat{Q}_{2n}\widehat{f}_n\widehat{\Phi}_{2,0},&\widehat{\Phi}_{2,0}={
z^2\over \sqrt[4]{ z^2 -1}}&\widehat{f}_n=1&\widehat{\Psi}_{2,0}={
z^2\over \sqrt{ z^2 -1}}.
\end{array}
$$
These two solutions are equivalent to the same solution of the
original Schr\"odinger equation and correspond to the well known
supersymmetric quantum mechanics approach to this P\"oschl-Teller
potential, \cite{cokasu,cokasu2}. Furthermore, for all
$\lambda\in\Lambda_n$,
$$\mathrm{DGal}(\widehat{\mathbf{L}}_\lambda/{\mathbb{C}(x)})=\mathbb{G}^{[4]},\quad \mathrm{DGal}(\widehat{{L}}_\lambda/{\widehat{K}})=
\mathrm{DGal}(L_\lambda/{K})=e,$$
$$\dim_{\mathbb{C}}\mathcal E(\widehat{\mathbf{H}}-\lambda)=2,\quad \dim_{\mathbb{C}}\mathcal E(\widehat{H}-\lambda)=\dim_{\mathbb{C}} \mathcal E(H-\lambda)=4.$$

\subsection{ Searching for Potentials From Parameterized Differential Equations}

In order to search for new potentials using
$\widehat{\partial}_z$, our main object will be the family of differential equations
presented by Darboux in \cite{da1}, which can be written in the
form
\begin{equation}\label{paramet1}
\partial_z^2\widehat{y}+ \widehat{P}\partial_z\widehat{y}+(\widehat{Q} -\lambda \widehat{R})\widehat{y} = 0,\quad \widehat{P},\widehat{Q},\widehat{R}\in \widehat{K}.\end{equation}
 We recall that some Riemann  differential equations, presented
 previously, correspond to this kind.\\

 When we have a
differential equation in the form \eqref{paramet1}, we reduce it
to put it in the form of the reduced algebrized Schr\"odinger
equation $\widehat{\mathbf{H}}\Phi=\lambda\Phi$, checking that
$\mathrm{Card}(\Lambda)>1$. Thus, starting with the potential
$\widehat{\mathbf{V}}$ and arriving to the potential $V$, we obtain
the Schr\"odinger equation $H\Psi=\lambda\Psi$. This heuristic methodology
 is detailed below.\\

\begin{enumerate}
\item Reduce a differential equation of the form \eqref{paramet1} and put it in the form $\widehat{\mathbf{H}}\Phi=\lambda\Phi$, checking that
$\mathrm{Card}(\Lambda)>1$;  to avoid triviality, $\alpha$ should
be a non-constant function.

\item Write $\mathcal{W}={1\over 4}\partial_z(\ln\alpha)$ and
obtain
$\widehat{V}(z)=\alpha(\widehat{\mathbf{V}}-\partial_z\mathcal{W}-\mathcal{W}^2)$.

\item Solve the differential equation $(\partial_xz)^2=\alpha$ 	and write
$z=z(x)$, $V(x)=\widehat{V}(z(x))$.
\end{enumerate}

\noindent To illustrate this method, we present the following examples.\\

\textbf{Bessel Potentials}\\
\begin{itemize}\item (From Darboux transformations over $V=0$) In
the differential equation $$\partial_z^2\Phi=\left({n(n+1)\over
z^2}+\mu\right)\Phi,\quad \mu\in\mathbb{C},$$ we see that
$\lambda=-n(n+1)$ and $\alpha=z^2$. Applying the method, we obtain
$\widehat{\mathbf{V}}=\mu$ and $\widehat{V}(z)=\mu
z^2+\frac14$ with $z=z(x)=e^{\pm x}$. Thus, we have obtained the
potentials $V(x)=\widehat{V}(z(x))=\mu e^{\pm2x}+\frac14$ (compare
with \cite[\S 6.9]{gapa}).

\item (From Bessel differential equation) The equation
$$\partial_z^2y+{1\over z}\partial_zy+{z^2-n^2\over
z^2}y=0,\quad n\in \frac12+\mathbb{Z},$$ is transformed into the
reduced equation
$$\partial_z^2\Phi=\left(\frac{n^2}{z^2} - {4z^2 + 1 \over 4z^2}\right)\Psi.$$ We can see that $\lambda=-n^2$, $\alpha=z^2$, obtaining
$\widehat{\mathbf{V}}=-z^2-\frac14$,
$\widehat{V}=-z^4-\frac14z^2+\frac14$ and $z=z(x)=e^{\pm x}$.
Thus, we have obtained the potential
$V(x)=\widehat{V}(z(x))=-e^{\pm4x}-\frac14e^{\pm2x}+\frac14$
(compare with \cite[\S 6.9]{gapa}).
\end{itemize}

We remark that the previous examples give us potentials related
with the Morse potential, due to their solutions are given in term
of Bessel functions.\\

We can apply this method to equations such as Whittaker,
Hypergeometric and in particular, differential equations involving
orthogonal polynomials (compare with \cite[\S 5]{cokasu}).\\

\section*{Final Remarks}

In this paper we gave, in contemporary terms, a formalization of
original ideas and intuitions given by G. Darboux, E. Witten and
L. \'E. Gendenshte\"{\i}n in the context of the Galois theory of
linear differential equations. We found the following facts.
\begin{itemize}

\item The superpotential is an algebraic
solution of the Riccati equation associated with a potential,
defined over a differential field.

\item The Darboux transformation
was interpreted as an isogaloisian transformation, allowing to
obtain isomorphisms between their eigenrings.

\item  We introduced  the Hamiltonian algebrization
method,which in particular allows to apply algorithmic tools such
as Kovacic's algorithm to obtain the solutions, differential
Galois groups and eigenring s of second order linear differential
equations. We applied successfully this algebrization procedure to
solve problems in Supersymmetric quantum mechanics.
\item We can construct algebraically solvable and non-trivial algebraically quasi-solvable potentials  in the following ways.
\begin{enumerate}

\item Giving a potential where, for $\lambda=\lambda_0$, the
Schr\"odinger equation is integrable. After we put $\lambda\neq
\lambda_0$, we check that the Schr\"odinger equation is integrable
for more than one value of the parameter $\lambda$.

\item Giving a
superpotential to obtain the potential,  after which we check whether the
Schr\"odinger equation is integrable for more than one value of
the parameter $\lambda$.

\item Using parameterized second order linear differential
equations and applying an inverse process in the Hamiltonian
algebrization method. In particular, we can use algebraically
solvable and algebraically quasi-solvable potentials, special
functions with parameters (for example with polynomial
solutions).
\end{enumerate}
\end{itemize}
\medskip

This paper is a starting point to analyze quantum theories
through Galoisian theories. Therefore open questions and future work arise in a natural way: supersymmetric quantum mechanics with dimension greater than 2, relationship between algebraic and analytic spectrums, etc.\\

 As a conclusion, as happen in other
areas of the field of differential equations, in view of the many
families of examples studied along this paper, we can conclude
that the \emph{differential Galois theory} is a natural framework
in which some aspects of \emph{supersymmetric quantum mechanics}
may appear more clearly.

\section*{Acknowledgments}
The research of the first and second author has been partially supported by the MCyT-FEDER Grant MTM2006-00478. Part of this research was completed thanks to the first author's invitation in the XLIM institute
in Limoges. The first author is also supported by the Sergio Arboleda university.

\appendix
\section{Kovacic's Algorithm}

Kovacic in 1986 (see \cite{ko}) introduced an algorithm to solve
the differential equation $\partial_x^2\zeta=r\zeta$, where $r\in
\mathbb{C}(x)$, see also \cite{acbl,dulo,fa,ulwe}. We recall this algorithm to set the notations.
\\

Each case in Kovacic's algorithm is related with each one of the
algebraic subgroups of ${\rm SL}(2,\mathbb{C})$ and the associated
Riccatti equation
$$\partial_xv=r-v^{2}=\left( \sqrt{r}-v\right)
\left(  \sqrt{r}+v\right),\quad v={\partial_x\zeta\over \zeta}.$$

There are four cases in Kovacic's algorithm. Case 1 concerns the case when the Riccati equation has a rational solution (or more); cases 2 and 3 concern cases when the Riccati equation admits an algebraic solution; case 4 is the non-integrable case.


Our differential equation is given by
$\partial_x^2\zeta=r\zeta$ where  $r={s\over t}$, $s,t\in \mathbb{C}[x]$.
We use the following notations:
\begin{enumerate}
\item $\Gamma'$ is
the
set of (finite) poles of $r$:  $\Gamma^{\prime}=\left\{  c\in\mathbb{C}%
:t(c)=0\right\}$.

\item 
$\Gamma=\Gamma^{\prime}\cup\{\infty\}$.
\item The order $\circ(r_c)$ of $r$ at
$c\in \Gamma'$ is the multiplicity of $c$ as a
pole of $r$.

\item The order $\circ\left(r_{\infty}\right)$ of $r$ at $\infty$ is the order of $\infty$ as a zero of
$r$, \\ i.e  $\circ\left( r_{\infty }\right) =\mathrm{deg}(t)-\mathrm{deg}(s)$.

\end{enumerate}

\noindent\underline{\bf Case 1.} 
Let $\left[ \sqrt{r}\right] _{c}$ (resp. $\left[ \sqrt{r}\right] _{\infty}$) represent the Laurent series (if defined) of $\sqrt{r}$ at $c$ (resp. infinity).
For $p\in \Gamma$, we will define $\varepsilon\left( p\right) \in\{+,-\}$ below.
Finally, the complex numbers $\alpha_{c}^{+},\alpha_{c}^{-},\alpha_{\infty}%
^{+},\alpha_{\infty}^{-}$ will be defined in the first step. 
If the differential equation has no poles it only can fall in this case.
\medskip

{\it Step 1.} For each $c\in\Gamma$:
\begin{description}

\item[$(c_{0})$] If $\circ\left(  r_{c}\right)  =0$, then    
$$\left[ \sqrt {r}\right] _{c}=0,\quad\alpha_{c}^{\pm}=0.$$

\item[$(c_{1})$] If $\circ\left(  r_{c}\right)  =1$, then
$$\left[ \sqrt {r}\right] _{c}=0,\quad\alpha_{c}^{\pm}=1.$$

\item[$(c_{2})$] If $\circ\left(  r_{c}\right)  =2,$ and 
$$r=  b(x-c)^{-2}+\cdots,\quad \textrm{then}$$
$$\left[ \sqrt {r}\right]_{c}=0,\quad \alpha_{c}^{\pm}=\frac{1\pm\sqrt{1+4b}}{2}.$$

\item[$(c_{3})$] If $\circ\left(  r_{c}\right)  =2v\geq4$, and $$r=
(a\left( x-c\right)  ^{-v}+...+d\left( x-c\right)
^{-2})^{2}+b(x-c)^{-(v+1)}+\cdots,\quad \textrm{then}$$ $$\left[
\sqrt {r}\right] _{c}=a\left( x-c\right) ^{-v}+...+d\left(
x-c\right) ^{-2},\quad\alpha_{c}^{\pm}=\frac{1}{2}\left(
\pm\frac{b}{a}+v\right).$$

\item[$(\infty_{1})$] If $\circ\left(  r_{\infty}\right)  >2$, then
$$\left[\sqrt{r}\right]  _{\infty}=0,\quad\alpha_{\infty}^{+}=0,\quad\alpha_{\infty}^{-}=1.$$

\item[$(\infty_{2})$] If $\circ\left(  r_{\infty}\right)  =2,$ and
$r= \cdots + bx^{2}+\cdots$, then $$\left[
\sqrt{r}\right]  _{\infty}=0,\quad\alpha_{\infty}^{\pm}=\frac{1\pm\sqrt{1+4b}%
}{2}.$$

\item[$(\infty_{3})$] If $\circ\left(  r_{\infty}\right) =-2v\leq0$,
and
$$r=\left( ax^{v}+...+d\right)  ^{2}+ bx^{v-1}+\cdots,\quad \textrm{then}$$
$$\left[  \sqrt{r}\right]  _{\infty}=ax^{v}+...+d,\quad
and\quad \alpha_{\infty}^{\pm }=\frac{1}{2}\left(
\pm\frac{b}{a}-v\right).$$
\end{description}
\medskip

{\it Step 2.} Find $D\neq\emptyset$ defined by
$$D=\left\{
n\in\mathbb{Z}_{+}:n=\alpha_{\infty}^{\varepsilon
(\infty)}-%
{\displaystyle\sum\limits_{c\in\Gamma^{\prime}}}
\alpha_{c}^{\varepsilon(c)},\forall\left(  \varepsilon\left(
p\right) \right)  _{p\in\Gamma}\right\}  .$$ If $D=\emptyset$,
then we should start with the case 2. Now, if
$\mathrm{Card}(D)>0$, then for each $n\in D$ we search $\omega$
$\in\mathbb{C}(x)$ such that
$$\omega=\varepsilon\left(
\infty\right)  \left[  \sqrt{r}\right]  _{\infty}+%
{\displaystyle\sum\limits_{c\in\Gamma^{\prime}}}
\left(  \varepsilon\left(  c\right)  \left[  \sqrt{r}\right]  _{c}%
+{\alpha_{c}^{\varepsilon(c)}}{(x-c)^{-1}}\right).$$
\medskip

{\it Step 3}. For each $n\in D$, search for a monic polynomial
$P_n$ of degree $n$ with
\begin{equation}\label{recu1}
\partial_x^2P_n + 2\omega \partial_xP_n + (\partial_x\omega + \omega^2 - r) P_n = 0.
\end{equation}
If success is achieved then $\zeta_1=P_n e^{\int\omega}$ is a
solution of the differential equation.  Else, case 1 cannot hold.
\bigskip

\noindent\underline{\bf Case 2.}  

{\it Step 1.} For
each $c\in\Gamma^{\prime}$ and for $\infty$, we define the sets
$E_{c}\subset\mathbb{Z}$ and $E_{\infty}\subset\mathbb{Z}$ as
follows:
\medskip

\begin{description}
\item[($c_1$)] If $\circ\left(  r_{c}\right)=1$, then $E_{c}=\{4\}$

\item[($c_2$)] If $\circ\left(  r_{c}\right)  =2,$ and $r= \cdots +
b(x-c)^{-2}+\cdots ,\ $ then $$E_{c}=\left\{
2+k\sqrt{1+4b}:k=0,\pm2\right\}.$$

\item[($c_3$)] If $\circ\left(  r_{c}\right)  =v>2$, then $E_{c}=\{v\}$

\item[$(\infty_{1})$] If $\circ\left(  r_{\infty}\right)  >2$, then
$E_{\infty }=\{0,2,4\}$

\item[$(\infty_{2})$] If $\circ\left(  r_{\infty}\right)  =2,$ and
$r= \cdots + bx^{2}+\cdots$, then $$E_{\infty }=\left\{
2+k\sqrt{1+4b}:k=0,\pm2\right\}.$$

\item[$(\infty_{3})$] If $\circ\left(  r_{\infty}\right)  =v<2$,
then $E_{\infty }=\{v\}$
\medskip
\end{description}
If some $E_c=\emptyset$ then switch to next case.

{\it Step 2.} Compute $D$ defined by
$$D=\left\{
n\in\mathbb{Z}_{+}:\quad n=\frac{1}{2}\left(  e_{\infty}-
{\displaystyle\sum\limits_{c\in\Gamma^{\prime}}} e_{c}\right)
,\forall e_{p}\in E_{p},\quad p\in\Gamma\right\}.$$ 
If $D=\emptyset,$ then switch to case 3. 
If $\mathrm{Card}(D)>0,$ then for each $n\in D$, compute the corresponding rational
function $\theta$ defined by
$$\theta=\frac{1}{2}
{\displaystyle\sum\limits_{c\in\Gamma^{\prime}}}
\frac{e_{c}}{x-c}.$$
\medskip

{\it Step 3.} For each $n\in D$, search (via linear algebra) for a monic polynomial $P_n$
of degree $n$, such that {\small{
\begin{equation}\label{recu2}
\partial_x^3P_n+3\theta
\partial_x^2P_n+(3\partial_x\theta+3\theta
^{2}-4r)\partial_xP_n+\left(
\partial_x2\theta+3\theta\partial_x\theta
+\theta^{3}-4r\theta-2\partial_xr\right)P_n=0.
\end{equation}}}
 If $P_n$ does not
exist, then case 2 does not hold. If such a polynomial $P_n$ is found, set
$\phi = \theta + \partial_xP_n/P_n$ and let $\omega$ be a solution
of
$$\omega^2 + \phi \omega + {1\over2}\left(\partial_x\phi + \phi^2 -2r\right)=
0.$$

Then $\zeta_1 = e^{\int\omega}$ is a solution of the differential
equation.
\bigskip

\noindent\underline{\bf Case 3.} 

{\it Step 1.}  For
each $c\in\Gamma^{\prime}$ and for $\infty$ we define the sets
$E_{c}\subset\mathbb{Z}$ and $E_{\infty}\subset\mathbb{Z}$ as
follows:
\medskip

\begin{description}

\item[$(c_{1})$] If $\circ\left(  r_{c}\right)  =1$, then
$E_{c}=\{12\}$

\item[$(c_{2})$] If $\circ\left(  r_{c}\right)  =2,$ and $r= b(x-c)^{-2}+\cdots$, then
\begin{displaymath}
E_{c}=\left\{ 6+k\sqrt{1+4b}:\quad
k=0,\pm1,\pm2,\pm3,\pm4,\pm5,\pm6\right\}.
\end{displaymath}

\item[$(\infty)$] If $\circ\left(  r_{\infty}\right)  =v\geq2,$ and $r=
\cdots + bx^{2}+\cdots$, then {\small $$E_{\infty }=\left\{
6+{12k\over m}\sqrt{1+4b}:\textrm{ }
k=0,\pm1,\pm2,\pm3,\pm4,\pm5,\pm6\right\},\textrm{ }
m\in\{4,6,12\}.$$}
\medskip
\end{description}

{\it Step 2.} For $m\in\{4,6,12\}$ successively, compute the set $D$ defined by
$$D=\left\{
n\in\mathbb{Z}_{+}:\quad n=\frac{m}{12}\left(
e_{\infty}-{\displaystyle\sum\limits_{c\in\Gamma^{\prime}}}
e_{c}\right)  ,\forall e_{p}\in E_{p},\quad p\in\Gamma\right\}.$$
If $\mathrm{Card}(D)>0,$ then for each $n\in D$ with
its respective $m$,  compute the  rational function
$$\theta={m\over 12}{\displaystyle\sum\limits_{c\in\Gamma^{\prime}}}
\frac{e_{c}}{x-c}$$ and the (denominator) polynomial $S$ defined as $S=
{\displaystyle\prod\limits_{c\in\Gamma^{\prime}}} (x-c)$.

{\it Step 3}. For each $n\in D$, with its respective $m$, compute a
monic polynomial $P_n=P$ of degree $n,$ such that its coefficients
can be determined recursively by
$$\bigskip P_{-1}=0,\quad P_{m}=-P,$$
$$P_{i-1}=-S\partial_xP_{i}-\left( \left( m-i\right)
\partial_xS-S\theta\right)  P_{i}-\left( m-i\right)  \left(
i+1\right)  S^{2}rP_{i+1},$$ where $i\in\{0,1\ldots,m-1,m\}.$ If
$P$ does not exist, then the differential equation is not
integrable because it falls in Case 4. 
\\
If $P$ exists,  a solution
of the differential equation is given by $\zeta=e^{\int\omega}$, where 
$\omega$ is a zero of the polynomial of degree $m$ 
$$ {\displaystyle\sum\limits_{i=0}^{m}}
\frac{S^{i}P_i}{\left( m-i\right)  !}\omega^{i}=0$$

\bigskip
\noindent\underline{\bf Case 4.} In none of the above holds, then the equation is non-integrable and its Galois group is $SL(2)$. 

\begin{remark}[\cite{acbl}]\label{rkov2}
If the differential equation falls only in the case 1 of Kovacic's
algorithm, then its differential Galois group is given by one of
the following groups:

\begin{description}

\item[{\it I1}] $e$ when the algorithm provides two
rational solutions.

\item[{\it I2}] $\mathbb{G}^{[n]}$ when the algorithm provides two algebraic solutions $\zeta_1,\zeta_2$ such that $\zeta_1^n,\zeta_2^n\in\mathbb{C}(x)$
and $\zeta_1^{n-1},\zeta_2^{n-1}\notin\mathbb{C}(x)$.

\item[{\it I3}] $\mathbb{G}^{\{n\}}$ when the algorithm provides only one
algebraic solution $\zeta$ such that $\zeta^n\in\mathbb{C}(x)$
with $n$ minimal.

\item[{\it I4}] $\mathbb{G}_m$ when the algorithm provides two
non-algebraic solutions.

\item[{\it I5}] $\mathbb{G}_a$ when the algorithm provides one
rational solution and the second solution is not algebraic.

\item[{\it I6}] $\mathbb{B}$ when the algorithm only provides one
solution $\zeta$ such that $\zeta$ and its square are not rational
functions.

\end{description}
\end{remark}

\bibliographystyle{amsalpha}

\end{document}